\titleformat*{\section}{\large\bfseries}
\titlespacing*{\section}{0pt}{2ex}{2ex}
\titleformat*{\subsection}{\normalsize\bfseries}
\titlespacing*{\subsection}{0pt}{1ex}{1ex}
\titleformat*{\subsubsection}{\normalsize\bfseries}
\titlespacing*{\subsubsection}{0pt}{1ex}{1ex}
\newtheorem{theorem}{Theorem}
\newtheorem{corollary}{Corollary}[theorem]
\newtheorem{lemma}[theorem]{Lemma}
\newtheorem{proposition}[theorem]{Proposition}
\newtheorem*{assumption*}{\assumptionnumber}
\providecommand{\assumptionnumber}{}
\newenvironment{assumption}[2]
 {%
  \renewcommand{\assumptionnumber}{Assumption #1: {#2}}%
  \begin{assumption*}%
  \protected@edef\@currentlabel{#1}%
 }
 {%
  \end{assumption*}
 }
\def\indep{{\,\perp \!\!\! \perp\,}}
\newcommand{\Var}[0]{\text{Var}}
\newcommand{\Cov}[0]{\text{Cov}}
\begin{document}

\center{\textbf{\Large Partially Retargeted Balancing Weights for Causal Effect Estimation Under Positivity Violations}}
\center{ Martha Barnard$^{*}$, Jared D. Huling, Julian Wolfson}
\center{\footnotesize Division of Biostatistics and Health Data Science, School of Public Health, University of Minnesota}
\vspace{-.1in}
\justifying

\begin{abstract}
Positivity violations, which occur when some subgroups either always or never receive a treatment of interest, pose significant challenges for causal effect estimation with observational data. Recent balancing weight methods have proved to be highly effective in confounding control, however their utility is diminished in the presence of positivity violations, resulting in bias and excess variance. Approaches that deal with positivity violations, on the other hand, work by targeting a modified estimand that may be misaligned with the original research question. To address these challenges, we propose a novel balancing weights approach, which mitigates positivity violations while attempting to retain the original estimand by a targeted relaxation of the balancing constraints. Our proposed weighted estimator is consistent for the original estimand when either 1) the implied propensity score model is correct; or 2) all treatment effect modifiers are balanced to the target population. When these conditions do not hold, our estimator is consistent for a slightly modified treatment effect estimand. Furthermore, our proposed weighted estimator has reduced asymptotic variance when positivity does not hold. We evaluate our approach through applications to synthetic data, an observational study, and when transporting a treatment effect from a randomized trial.
\end{abstract}

\noindent%
{\it Keywords:} causal inference, direct balancing, overlap, transportability, observational studies
\vfill
\noindent\rule{2in}{0.4pt} \\
{\small * barna126@umn.edu}

\newpage
\setstretch{1.2}
\begin{bibunit}
\section{Introduction}
Confounding is a key challenge when estimating the causal effect of a binary treatment with observational data. While many methods exist for controlling confounding by balancing covariates between treatment groups, these methods do not fully control confounding when there are positivity violations. Positivity violations occur when there exists at least one combination of covariate values which perfectly predicts treatment assignment, a phenomenon which is especially likely to occur with high-dimensional covariates \citep{damour_deconfounding_2021}. When there are positivity violations,  inverse probability weighting (IPW) estimators \citep{rosenbaum1983,hahn_role_1998,robins_marginal_2000,hirano2001,hirano_efficient_2003,imbens_nonparametric_2004} have substantial statistical bias and inflated variance. A proposed alternative to IPW, direct balancing weights, \citep{hainmueller_entropy_2012,zubizarreta_stable_2015,wang_minimal_2020} tend to yield reduced estimator error and variance compared to IPW methods through achieving exact covariate balance for a prespecified set of covariates. However, when there are positivity violations there is often no solution to the direct balancing optimization problem prohibiting their use altogether.

To mitigate the impacts of positivity violations on causal effect estimation, a variety of methods, such as propensity score trimming, matching weights, and overlap weights, have been developed that modify the estimand and corresponding target population
to achieve estimators with reduced variance \citep{crump_dealing_2009,li_weighting_2013,yang_asymptotic_2018,li2018,li2019}. In fact, overlap weights \citep{li2018} do not require positivity by targeting the so-called ``overlap population''; this results in an estimator with minimum asymptotic variance among weighted treatment effects under homoscedasticity. While these methods achieve reduced estimator variance, they target a population that is modified from the original population of interest such that the corresponding treatment effect estimate may be misaligned with the original research question. Yet, a key benefit of causal effect estimation with observational data is the ability to evaluate causal effects in a meaningful population of interest such that modifying the estimand is often either undesirable or challenging to interpret. Furthermore, these estimators are inconsistent for the original estimand when there is treatment effect heterogeneity. Modified treatment policy estimands result in improved estimator performance by defining an entirely different estimand arrived at by imagining modifications to the natural value of treatment \citep{kennedy_nonparametric_2019}. While the corresponding estimators perform well, these modified treatment policies do not always answer the original research question.

Thus, under positivity violations, existing methods either 1) target an estimand of interest but result in poor estimator performance or technical issues (e.g., direct balancing weights yielding no solution); or 2) alleviate these technical issues by targeting a modified estimand that may be misaligned with the original estimand of interest (e.g., overlap weights). To address these challenges within estimator statistical performance and interpretation, we propose a novel balancing approach whose optimization problem 
yields a solution under positivity violations through a targeted relaxation of the balancing constraints.
Our corresponding weighted estimator is consistent for the original estimand when either the implied propensity score model is correct or set of treatment effect modifiers is properly specified and smaller than the set of confounders. When these conditions do not hold, our weighted estimator is consistent for a slightly modified estimand and target population. Furthermore, we demonstrate that overlap weights are a special case of our proposed weights such that our method offers an intermediate between direct balancing and overlap weights. We derive asymptotic results which show that 1) our weights can be derived under a relaxed positivity assumption; and 2) our proposed estimator has reduced asymptotic variance when there are positivity violations. We propose a design- and model-based procedure for selecting the balancing constraints to relax and we evaluate the corresponding estimators through applications to synthetic data, an electronic health record (EHR) study, and when transporting a treatment effect from a randomized controlled trial (RCT) to a new population.

The remainder of the paper is organized as follows. In Section \ref{sec:method_back}, we introduce notation and methodological background on balancing weights. We introduce and discuss the implications of our proposed partially retargeted balancing weights in terms of positivity, estimator interpretation, and extensions to additional estimation problems in Section \ref{sec:our_method}. In Section \ref{sec:asymptotics}, we discuss the asymptotic properties of our proposed weighted estimator. We propose design- and model-based procedures for identifying which balancing constraints to relax in Section \ref{sec:implementation}. In Sections \ref{sec:sim} and \ref{sec:illustrative_app}, we apply our methods to synthetic data, an EHR study, and when transporting a treatment effect from a RCT. We discuss results and future work in Section \ref{sec:discussion}.

\section{Methodological background}
\label{sec:method_back}
\subsection{Notation, assumptions, and covariate balance}
\label{sec:3w_bal}
Consider an independent sample $\{(Z_i, Y_i, \bm{X}_i)\}_{i=1}^n$ of size $n$ from a population where $Z_i = z$, indicates belonging to the treatment ($z=1$) or control group ($z=0$), $Y_i$ is the outcome of interest, and $\bm{X}_i = (X_{i1}, \ldots X_{ip})$ is a $p$ length vector of pre-treatment covariates. Let $Y(z)$ be the potential outcome that would be observed if assigned to treatment group $z$ (\cite{neyman_application_1990, rubin_estimating_1974, rubin_bayesian_1978,hernan_causal_2024}). We assume the stable unit treatment value assumption (SUTVA) which implies that $Y_i = Y_i(Z_i)$ such that we only observe one potential outcome for each individual. We focus on the average treatment effect (ATE), $\tau = E[Y(1) - Y(0)]$ as the estimand of interest, although this work can be readily extended to other estimands of interest which we discuss in Section \ref{sec:related_estimation}. Let $e(\bm{x}) = \Pr(Z_i =1 | \bm{X}_i = \bm{x})$ be the propensity score. For the identification of $\tau$ the following assumptions are typically made: 
\begin{assumption}{1}{Strong ignorability}\label{strong-ignorability}
$\{Y(0), Y(1)\} \indep Z | \bm{X}$ and
\end{assumption}
\begin{assumption}{2}{Positivity}\label{positivity}
$0 < e(\bm{x}) < 1$ for all $\bm{x}$.
\end{assumption}
The focus of our work surrounds the positivity assumption, specifically, scenarios where this assumption does not hold. Let $\mu_z(\bm{x}) = E[Y(z)|\bm{X} = \bm{x}]$ and $\tau(\bm{x}) = \mu_1(\bm{x}) - \mu_0(\bm{x})$. Then, $\tau  = E\left\{\frac{\mu_1(\bm{X})Z}{e(\bm{X})} - \frac{\mu_0(\bm{X})(1-Z)}{1-e(\bm{X})}\right\} = E\left\{\frac{YZ}{e(\bm{X})} - \frac{Y(1-Z)}{1-e(\bm{X})}\right\}$ holds with the above assumptions, where the sample version of the last equation corresponds to the IPW estimator of the ATE. For a treatment effect estimator be consistent, the expectations of $\mu_0(\bm{x})$ and $\mu_1(\bm{x})$ need to be balanced between the treated and control populations and to the target population. IPW (i.e., $w_1(\bm{x}) = \{e(\bm{x})\}^{-1}$ and $w_0(\bm{x}) = \{1-e(\bm{x})\}^{-1}$) balance all functions of $\bm{x}$ in this way such that the following equations hold,
\begin{align}
    E\left\{w_1(\bm{X})Zb(\bm{X})\right\} = E\{b(\bm{X})\}  \:\: &\text{and} \:\: E\left\{w_0(\bm{X})(1-Z)b(\bm{X})\right\} = E\{b(\bm{X})\} \label{eq:bal_pop}\\
    E\left\{w_1(\bm{X})Zb(\bm{X})\right\} &= E\left\{w_0(\bm{X})(1-Z)b(\bm{X})\right\}
    \label{eq:bal3}
\end{align}
for any function $b(\bm{x})$, which we reference as three-way balance \citep{chan_globally_2016}. Thus, IPW satisfy three-way balance of both $\mu_0(\bm{x})$ and $\mu_1(\bm{x})$, regardless of their true forms. Specifically, Equation \eqref{eq:bal_pop} balances the weighted treated and control group expectations of $b(\bm{x})$ to the population expectation of $b(\bm{x})$ and Equation \eqref{eq:bal3} balances the weighted treated and control group expectations of $b(\bm{x})$ to each other. Thus, Equation \eqref{eq:bal_pop} and \eqref{eq:bal3} serve different purposes; while Equation \eqref{eq:bal_pop} implies Equation \eqref{eq:bal3}, Equation \eqref{eq:bal_pop} primarily serves to balance the treated and control groups to the estimand target population.
When only Equation \eqref{eq:bal3} holds, $b(\bm{x})$ may be balanced to modified target population that corresponds not to the ATE, but to some other causal estimand.

 However, when there is a lack of overlap, even approximate balance of the empirical counterparts of Equations \eqref{eq:bal_pop} and \eqref{eq:bal3} may not be achieved after IPW, resulting in estimators with substantial error in addition to inflated variance. To achieve balance when there is a lack overlap, some methods target a different estimand, $E[h(\bm{X})\tau(\bm{X})]$, which corresponds to a modified target population with improved overlap. 
 \cite{li2018} propose one such method, overlap weights ($h(\bm{x}) = e(\bm{x})\{1-e(\bm{x})\}$), which are designed to ensure that the weights exist even when positivity does not hold. By modifying the target population, overlap weights can more easily satisfy Equation \eqref{eq:bal_pop} yielding an estimator with reduced bias and variance with respect to the modified estimand, the ATO. However, the overlap weights estimator tends to be biased for the ATE when there is treatment effect heterogeneity. 
While the weights corresponding to these modified estimands tend to achieve improved finite-sample balance in scenarios with a lack of overlap, they rarely achieve the empirical versions of Equations \eqref{eq:bal_pop} and \eqref{eq:bal3} resulting in estimator error. In fact, even when positivity does hold, exact finite-sample balance is rarely achieved by IPW-style estimators. While exact balance of covariate sample means is achieved by overlap weights when the propensity score is estimated with a logistic regression \citep{li2018}, most IPW-style weights do not achieve exact balance of the sample means, especially under positivity violations. 
This has motivated the development of sample weights that explicitly achieve three-way finite sample balance.

\subsection{Review of direct balancing weights}
Direct balancing weights for the ATE achieve exact three-way balance for some pre-specified set of covariate functions; for the direct balancing weights estimator to be consistent for the ATE, $\mu_z(\bm{x})$ must be a linear combination of this set of covariate functions. Consider the case where we want to balance the covariate functions $b(\bm{x}) = \{b_j(\bm{x})\}_{j=1}^J$. Then, direct balancing weights for the ATE take the following form,
\begin{subequations}\label{eq:db}
\begin{align}
\{w^{bw}_i\}_{i=1}^n  = \text{argmin}_{w_i} \sum_{i=1}^n D(w_i) \:\:\: &\text{subject to} \label{eq:db0}\\
\frac1n\sum_{i=1}^n w_iI[Z_i = z]b_j(\bm{X}_i) -\frac1n\sum_{i=1}^n b_j(\bm{X}_i) &= 0 \:\:\: j = 1, \ldots, J,  \:\:\: z = 0,1, \label{eq:db1}
\end{align}
\end{subequations}

where $D(\cdot)$ is a convex measure of dispersion. The constraints $w_i \geq 0$ and $\sum_{i=1}^n w_iZ_i = \sum_{i=1}^n w_i(1-Z_i)$ are often added to ensure that the weights do not extrapolate the observed outcomes. The resulting direct balancing weights estimator is,
\begin{equation}
    \hat{\tau}_{w^{bw}} = \frac1n\sum_{i=1}^n w^{bw}_iZ_iY_i - \frac1n\sum_{i=1}^n w^{bw}_i(1-Z_i)Y_i. \label{eq:estimator}
\end{equation}

In this constrained optimization problem, constraint \eqref{eq:db1} ensures that the treated and control covariate functions are balanced to the sample population. These constraints also imply that the treated and control covariate functions are balanced to each other (i.e., $\frac1n\sum_{i=1}^n w_iZ_ib_j(\bm{X}_i) = \frac1n\sum_{i=1}^n w_i(1-Z_i)b_j(\bm{X}_i)$) such that three-way balance is satisfied by $\{w^{bw}_i\}_{i=1}^n$ for $b(\bm{x})$. Since balance between treated and control groups is implicitly satisfied, Problem \eqref{eq:db} is separable by treatment group such that two separate constrained optimization problems can be solved to derive $\{w^{bw}_i\}_{i=1}^n$ \citep{chan_globally_2016}; we discuss implications of this separation in Section \ref{sec:positivity_considerations}. When there is a lack of overlap between $\{b(\bm{x})\}_{z=1}$ and $\{b(\bm{x})\}_{z=0}$, often there is no solution to Problem \eqref{eq:db} (i.e., Equations  \eqref{eq:db0}-\eqref{eq:db1}) with the addition of the constraints $w_i \geq 0$ and $\sum_{i=1}^n w_iZ_i = \sum_{i=1}^n w_i(1-Z_i)$ such that weights which achieve exact three-way balance without extrapolating cannot be derived. In these scenarios, the exact balance constraint is relaxed and approximate balance is achieved by constraining the absolute difference between sample means to be less than or equal to some parameter $\delta$; weights derived through this process are called minimal weights \citep{wang_minimal_2020}.
While the minimal weights estimator does have reduced variance compared to IPW and direct balancing estimators, minimal weights achieve none of the three balancing equations such that the corresponding estimator is often biased due to covariate imbalance. Thus, there is a need for methods that 1) achieve exact finite-sample covariate balance between treated and control groups to reduce estimator error; and 2) modify the target population minimally to yield an estimator with reduced variance when there is a lack of overlap.



\section{A partially retargeted balancing approach}
\label{sec:our_method}
To address these challenges, we propose a novel constrained optimization problem for deriving weights, primarily for scenarios where there is no solution to direct balancing weights due to positivity violations. 
Consider the case where Assumption \ref{positivity} does not hold, but $0 < \text{Pr}\{Z_i=1|c(\bm{X}) = c(\bm{x})\} < 1$ for some $c(\bm{x}) \subset b(\bm{x})$ such that positivity holds when conditioning on this restricted subset. Let $g(\bm{x})$ be the remaining covariate functions in $b(\bm{x})$ such that $b(\bm{x}) = [c(\bm{x}) = \{c_k(\bm{x})\}_{k=1}^K, g(\bm{x}) = \{g_l(\bm{x})\}_{l=1}^L]$. The key idea of our method is to relax the balancing constraints by balancing all covariate functions, $b(\bm{x}) = \{c(\bm{x}), g(\bm{x})\}$, between treated and control groups, but only  $c(\bm{x})$ to the sample population, yielding the following set of balancing equations:
\begin{subequations}\label{eq:our_db}
\begin{align}
\{w^*_i\}_{i=1}^n  = \text{argmin}_{w_i} \sum_{i=1}^n D(w_i) \:\:\: &\text{subject to}\label{eq:proposed1} \\
    \frac1n\sum_{i=1}^n w_iI[Z_i = z]c_k(\bm{X}_i) - \frac1n\sum_{i=1}^n c_k(\bm{X}_i)&= 0, \:\:\: k = 1, \ldots K, \:\:\: z = 0,1\label{eq:db_b1} \\
    \sum_{i=1}^n w_iZ_ig_l(\bm{X}_i) - \sum_{i=1}^n w_i(1-Z_i)g_l(\bm{X}_i)&= 0 , \:\:\: l = 1, \ldots L. \label{eq:db_b3}
\end{align}
\end{subequations}

The primary difference between this constrained optimization problem and the direct balancing weights constrained optimization problem is Equation \eqref{eq:db_b3}; in this equation, the functions, $g(\bm{x}) = \{g_l(\bm{x})\}_{l=1}^L$ are only balanced between treated and control groups and not to the sample population. Thus, the derived weights satisfy exact three-way balance for functions $c(\bm{x}) = \{c_k(\bm{x})\}_{k=1}^K$ but only balance between treated and control groups for $g(\bm{x})$. In doing so, the derived weights do not balance $g(\bm{x})$ to a specific modified population (such as with overlap weights). However, the weighted treated and control group sample means of $g(\bm{x})$ are implicitly modified when minimizing weight dispersion subject to only the treated and control balance constraint. We call these weights, $\{w_i^*\}_{i=1}^n$, partially retargeted balancing weights (PRTBW). In Section \ref{sec:positivity_considerations}, we formally show that this approach is justified under a relaxed positivity assumption. In addition, the corresponding estimator to Problem \eqref{eq:our_db} (i.e., Equations \eqref{eq:proposed1} - \eqref{eq:db_b3}) is consistent for the ATE when $c(\bm{x})$ contains all treatment effect modifiers, which we discuss further in Sections \ref{sec:estimator_interp} and \ref{sec:asymptotics}.

\begin{figure}[h!]
    \centering
    \includegraphics[width=\linewidth]{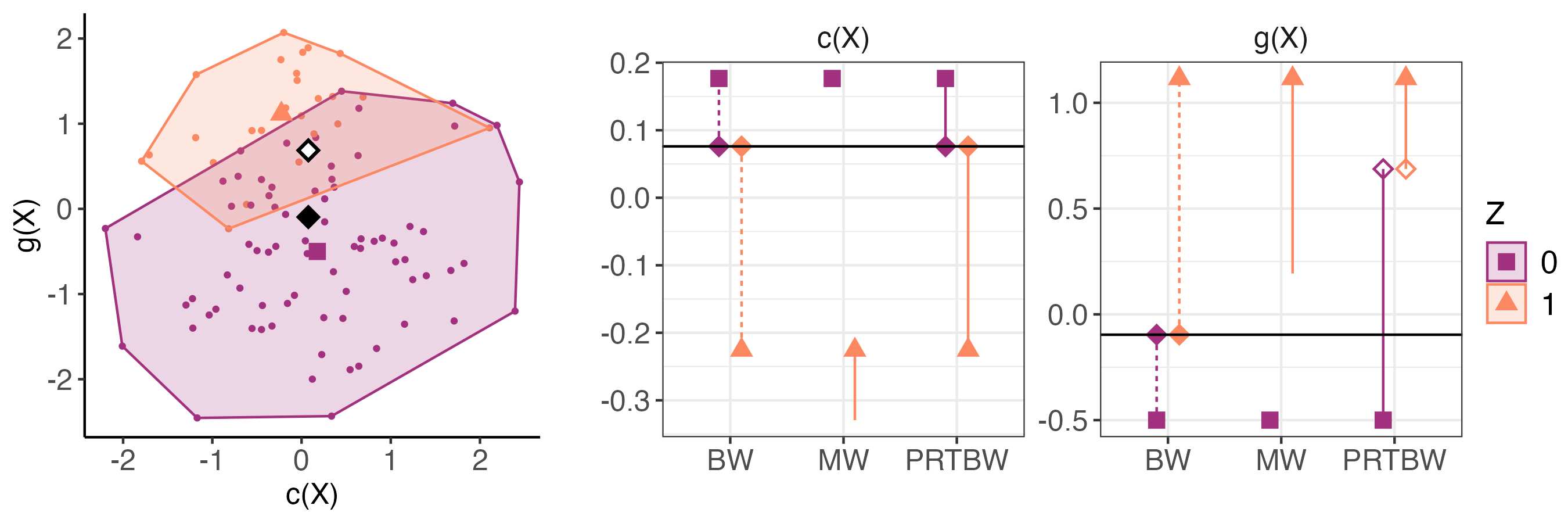}
    \caption{\footnotesize Simulated data illustrating the differences between direct balancing weights (BW), minimal weights (MW), and the proposed partially retargeted balancing weights (PRTBW). Squares (purple) and triangles (orange) indicate the unweighted control and treated sample means, respectively. The filled diamond indicates the sample population mean while the outlined diamond indicates the weighted sample mean given $\{w_i^*\}_{i=1}^n$ derived through Problem \eqref{eq:our_db}. In the right-hand figures, the horizontal line indicates the sample mean while the vertical lines end at the weighted treated and control means for each weighting method. For this simulated data, there is no solution to Problem \eqref{eq:db}, exhibited by the dashed vertical line.}
    \label{fig:method_fig}
\end{figure}

Figure \ref{fig:method_fig} presents a toy example of the differences in covariate balance and target population between balancing weights (Problem \eqref{eq:db}), minimal weights, and our proposed partially retargeted balancing weights (Problem \eqref{eq:our_db}); for this example, the constraints of the $w_i \geq 0$ and $\sum_{i=1}^n w_iZ_i = \sum_{i=1}^n w_i(1-Z_i)$ are added to the optimization problems. For this simulated data, there is no solution to Problem \eqref{eq:db}, such that weights that exactly balance the treated and control sample means of $b(\bm{x}) = \{c(\bm{x}), g(\bm{x})\}$ cannot be derived. Our proposed weights balance the treated and control sample means of $c(\bm{x})$ to the population sample mean, but the treated and control sample means of $g(\bm{x})$ are balanced to a different value. In contrast, minimal weights do not achieve covariate balance for either $c(\bm{x})$ or $g(\bm{x})$. In fact, the minimally weighted treated sample means are unchanged and the minimally weighted control sample mean of $c(\bm{x})$ is further away from the population sample mean of $c(\bm{x})$ than the unweighted control sample mean. Thus, while both the minimal weights optimization problem and our proposed Problem \eqref{eq:our_db} yield solutions for this data, they do this through different mechanisms such that each method yields a solution with a different interpretation.

To further explore the theoretical properties of our proposed weights, we identify the dual formulation of Problem \eqref{eq:our_db}. The addition of constraint \eqref{eq:db_b3} ensures that the optimization problem is not separable across the treatment groups such that there is correspondingly a single dual problem.

\begin{theorem}
    The dual of of Problem \eqref{eq:our_db} is equivalent to the 
    optimization problem
    \begin{equation}\label{eq:og_loss}
   \begin{split}
     \underset{\bm{\alpha}_0, \bm{\alpha}_1, \bm{\gamma}}{\textnormal{minimize}} \:\:\: &\frac1n \sum_{i=1}^n -Z_i\rho\{\bm{\alpha}_1^Tc(\bm{X}_i) + \bm{\gamma}^Tg(\bm{X}_i)\} - (1-Z_i)\rho\{\bm{\alpha}_0^Tc(\bm{X}_i) -\bm{\gamma}^Tg(\bm{X}_i)\} \\
     &+ \bm{\alpha}_1^Tc(\bm{X}_i)+\bm{\alpha}_0^Tc(\bm{X}_i)   
    \end{split}
\end{equation}
where $\bm{\alpha}_{0_{K\times 1}},\bm{\alpha}_{1_{K\times 1}},$ and $\bm{\gamma}_{L \times 1}$ are the dual variables associated with the balancing constraints in Equations \eqref{eq:db_b1}-\eqref{eq:db_b3}, $\rho(t) = t- t(h')^{-1}(t) + h[(h')^{-1}(t)] - h(1)$ is strictly concave, and $h(t) = D(1-t)$. 
Furthermore, the primal solution $w_{i}^* = Z_iw^*_{1i} + (1-Z_i)w^*_{0i}$ satisfies $w^*_{1i} = \rho'\{\hat{\bm{\alpha}}_1^Tc(\bm{X}_i) + \hat{\bm{\gamma}}^Tg(\bm{X}_i)\} \:\:\text{and}\:\: w^*_{0i} = \rho'\{\hat{\bm{\alpha}}_0^Tc(\bm{X}_i) - \hat{\bm{\gamma}}^Tg(\bm{X}_i)\}$
where $\hat{\bm{\alpha}}_0, \hat{\bm{\alpha}}_1, \hat{\bm{\gamma}}$ are the solution to the dual problem.
\label{thm:dual}
\end{theorem}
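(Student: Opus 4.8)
The plan is to derive \eqref{eq:og_loss} as the Lagrangian dual of the convex program \eqref{eq:our_db}, which has a strictly convex separable objective and purely affine constraints. First I would attach multipliers $\bm{\alpha}_1,\bm{\alpha}_0 \in \mathbb{R}^K$ to the two blocks of population-balance constraints \eqref{eq:db_b1} (for $z=1$ and $z=0$ respectively) and $\bm{\gamma}\in\mathbb{R}^L$ to the treated-versus-control constraints \eqref{eq:db_b3}, forming the Lagrangian $\mathcal{L}(\{w_i\},\bm{\alpha}_0,\bm{\alpha}_1,\bm{\gamma})$. The crucial structural point, already flagged before the statement, is that the single multiplier $\bm{\gamma}$ is shared across both groups through \eqref{eq:db_b3}, so unlike Problem \eqref{eq:db} the Lagrangian does not decouple into independent treated and control problems, and a single joint dual results.

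Next I would collect the Lagrangian as a sum over units (the mismatched $1/n$ factors across the objective and \eqref{eq:db_b1}--\eqref{eq:db_b3} are harmless rescalings absorbed into the multipliers). Grouping by treatment status, each treated unit contributes a term $D(w_i)+w_i s_{1i}$ with $s_{1i}=\bm{\alpha}_1^Tc(\bm{X}_i)+\bm{\gamma}^Tg(\bm{X}_i)$, each control unit a term $D(w_i)+w_i s_{0i}$ with $s_{0i}=\bm{\alpha}_0^Tc(\bm{X}_i)-\bm{\gamma}^Tg(\bm{X}_i)$, while the constant-in-$w$ pieces coming from the $-\frac1n\sum_i c(\bm{X}_i)$ targets assemble into the linear term $\frac1n\sum_i\{\bm{\alpha}_1^Tc(\bm{X}_i)+\bm{\alpha}_0^Tc(\bm{X}_i)\}$ seen in \eqref{eq:og_loss}. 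Minimizing over each $w_i$ separately is a one-dimensional convex problem solved by the stationarity condition $D'(w_i)=-s_i$, giving $w_i=(D')^{-1}(-s_i)$ and minimized value $-D^{\ast}(-s_i)$, where $D^{\ast}$ is the convex conjugate of $D$. This immediately yields the primal-from-dual recovery $w^*_{1i}=(D')^{-1}(-s_{1i})$ and $w^*_{0i}=(D')^{-1}(-s_{0i})$.

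It then remains to show these objects coincide with the stated $\rho$-form. Writing $h(t)=D(1-t)$ and $u=(h')^{-1}(t)$, a short chain-rule computation gives $\rho'(t)=1-u=(D')^{-1}(-t)$, so the recovered weights are exactly $\rho'(s_{1i})$ and $\rho'(s_{0i})$ as claimed; differentiating once more gives $\rho''(t)=-1/D''((D')^{-1}(-t))<0$, establishing strict concavity from strict convexity of $D$. Substituting $u=1-(D')^{-1}(-t)$ directly into the definition of $\rho$ identifies $\rho(t)=-D^{\ast}(-t)-D(0)$; since $-D(0)=-h(1)$ is constant it drops out of the argmin, so replacing each $-D^{\ast}(-s_i)$ by $\rho(s_i)$ and flipping the global sign to convert the concave Lagrangian maximization into a minimization reproduces \eqref{eq:og_loss}. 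Finally, because the primal is convex with affine constraints that are feasible under the relaxed positivity assumption of Section \ref{sec:positivity_considerations}, Slater's condition holds trivially, strong duality applies, and the KKT stationarity conditions legitimize the weight-recovery formulas.

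The main obstacle I anticipate is one of bookkeeping rather than depth: keeping every sign consistent across the two constraint blocks (in particular the $+\bm{\gamma}$ for treated versus $-\bm{\gamma}$ for control) and correctly tracking the additive constant in the reparametrization $\rho(t)=-D^{\ast}(-t)-D(0)$, so that the linear terms emerge with the right sign after the max-to-min conversion. A secondary point requiring care is the implicit nonnegativity and normalization of $w_i$: these are most cleanly absorbed into the effective domain of $D$, so that $(D')^{-1}$ returns admissible weights, which keeps the dual free of extra multipliers and matches the form of \eqref{eq:og_loss}.
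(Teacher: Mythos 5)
Your proposal is correct and follows essentially the same route as the paper's proof: both are the standard Lagrangian/Fenchel conjugate duality argument, with the shared multiplier $\bm{\gamma}$ on constraint \eqref{eq:db_b3} preventing separation by treatment group, per-unit minimization yielding $w_i=(D')^{-1}(-s_i)=\rho'(s_i)$, and the constant targets $-\frac1n\sum_i c(\bm{X}_i)$ producing the linear terms in \eqref{eq:og_loss}. The only cosmetic difference is that the paper performs the substitution $s_i=1-Z_iw_i$ and computes the conjugates $h_j^*$ in the Tseng--Bertsekas form with $h(t)=D(1-t)$, whereas you work directly with $D^{\ast}$; your identity $\rho(t)=-D^{\ast}(-t)-D(0)$ makes these two presentations coincide.
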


The proof of this theorem is in supplementary Section \ref{sec:proofs_supp}. 
In the following sections, we discuss the implications of Problem \eqref{eq:our_db}
in terms of positivity, the interpretation of the resulting estimator, and related estimation problems while drawing comparisons to direct balancing weights, minimal weights, and overlap weights.

\subsection{Positivity considerations}
\label{sec:positivity_considerations}
Problem \eqref{eq:our_db} is primarily motivated by the challenges posed by lack of overlap on causal effect estimation, generally, and direct balancing weights, specifically. In the following section, we show that a solution to Problem \eqref{eq:our_db}, with the additional constraints of $w_i \geq 0$ and $\sum_{i=1}^n w_iZ_i = \sum_{i=1}^n w_i(1-Z_i)$, exists under a relaxed positivity assumption. 
Furthermore, we determine the propensity score model implied by Problem \eqref{eq:our_db} and describe how this model connects to the relaxed positivity assumption and other methods. 
While positivity is required for the existence of a solution to Problem \eqref{eq:db} with the additional constraints of $w_i \geq 0$ and $\sum_{i=1}^n w_iZ_i = \sum_{i=1}^n w_i(1-Z_i)$ \citep{zhao_entropy_2017}, by not balancing $g(\bm{x})$ to the target population, the weights derived through Problem \eqref{eq:our_db} exist under a relaxed positivity assumption:
\begin{assumption}{2$^*$}{Relaxed positivity}\label{relaxed-positivity}
$0 < \text{Pr}\{Z_i | c(\bm{X}_i) = c(\bm{x})\} < 1$ for all $\bm{x}$ and there exists an $\bm{x}^*$ such that $0 < \text{Pr}\{Z_i | c(\bm{X}_i) = c(\bm{x}^*), g(\bm{X}_i) = g(\bm{x}^*)\} < 1$.
\end{assumption}
The following proposition states that this assumption is sufficient for the existence of the weights in Problem \eqref{eq:our_db} with the additional constraints of $w_i \geq 0$ and $\sum_{i=1}^n w_iZ_i = \sum_{i=1}^n w_i(1-Z_i)$ as $n \to \infty$. 
\begin{proposition}
    Suppose Assumption \ref{relaxed-positivity} is satisfied and the expectation of $c(\bm{X})$ exist, then $P(w^* \textnormal{ exists}) \to 1$ as $n \to \infty$.
    \label{thm:relaxed_pos}
\end{proposition}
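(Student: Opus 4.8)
The plan is to characterize existence of $w^*$ through the dual problem of Theorem~\ref{thm:dual} and to show that, under Assumption~\ref{relaxed-positivity}, its objective is coercive with probability tending to one. Writing the dual objective as
\[
L_n(\bm\alpha_0,\bm\alpha_1,\bm\gamma)=\frac1n\sum_{i=1}^n\big[{-}Z_i\rho\{\bm\alpha_1^Tc(\bm X_i)+\bm\gamma^Tg(\bm X_i)\}-(1-Z_i)\rho\{\bm\alpha_0^Tc(\bm X_i)-\bm\gamma^Tg(\bm X_i)\}\big]+\bm\alpha_1^T\bar c+\bm\alpha_0^T\bar c,
\]
with $\bar c=\frac1n\sum_i c(\bm X_i)$, a finite minimizer of the convex map $L_n$ yields primal weights through the stationarity relations $w_{1i}^*=\rho'\{\cdots\}$, $w_{0i}^*=\rho'\{\cdots\}$ of Theorem~\ref{thm:dual}; conversely, infeasibility of \eqref{eq:our_db} forces $L_n$ to be unbounded below along a recession direction. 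Existence of $w^*$ is therefore equivalent to the absence of such a direction, and the whole argument reduces to a recession analysis of $L_n$.

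First I would identify the recession directions. With the usual nonnegativity restriction on the weights one has $\rho'(\cdot)=w\ge0$ with $\rho'$ vanishing as its argument grows, so each summand $-\rho$ remains bounded when its argument is nonnegative and diverges to $+\infty$ when its argument is negative. Evaluating $L_n$ along a ray $t(\bm u_0,\bm u_1,\bm v)$ with $t\to\infty$, the objective thus fails to diverge to $+\infty$ only along a nonzero direction obeying
\[
\bm u_1^Tc(\bm X_i)+\bm v^Tg(\bm X_i)\ge0\ (Z_i=1),\qquad \bm u_0^Tc(\bm X_i)-\bm v^Tg(\bm X_i)\ge0\ (Z_i=0),\qquad \bm u_1^T\bar c+\bm u_0^T\bar c\le0.
\]
For directions with $\bm v=\bm 0$, the first part of Assumption~\ref{relaxed-positivity} (positivity conditional on $c(\bm X)$) makes the treated and control supports of $c(\bm X)$ coincide with the population support, so $\bm u_1^Tc\ge0$ and $\bm u_0^Tc\ge0$ hold there; since $\E[c(\bm X)]$ exists we have $\bar c\to\E[c(\bm X)]$, and the last inequality then forces $\bm u_1^Tc=\bm u_0^Tc=0$ almost surely, hence $\bm u_0=\bm u_1=\bm 0$ whenever the components of $c(\bm X)$ are linearly independent. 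The case $\bm v\ne\bm 0$ is where the second part of Assumption~\ref{relaxed-positivity} enters: the full-positivity point $\bm x^*$ guarantees that both treated and control units accumulate near $(c(\bm x^*),g(\bm x^*))$, so a recession direction would have to satisfy both $\bm u_1^Tc(\bm x^*)+\bm v^Tg(\bm x^*)\ge0$ and $\bm u_0^Tc(\bm x^*)-\bm v^Tg(\bm x^*)\ge0$ at this shared anchor, which, together with the already-handled $c$ constraints, prevents $\bm v$ from strictly separating the two groups' $g$ scores and rules out a genuine recession direction.

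Finally I would transfer this population statement to the sample. The three conditions above depend only on $\bar c$ and on the empirical supports of the two treatment groups; since $\E[c(\bm X)]$ exists, $\bar c\to\E[c(\bm X)]$ by the strong law, and because the relevant non-separation and interior conditions are open, they persist under convergence of the empirical measures, giving $P(L_n\textnormal{ coercive})\to1$ and hence $P(w^*\textnormal{ exists})\to1$. The main obstacle is the case $\bm v\ne\bm 0$: unlike full balancing in \eqref{eq:db}, which needs positivity everywhere so that each group separately reaches the population mean of $(c,g)$, here $g$ is balanced only between the groups, so I must show that a single full-positivity point already forces the treated- and control-achievable sets of weighted $g$ means to intersect. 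Making the common anchor at $\bm x^*$ quantitative and stable under the finite-sample passage, especially when $c(\bm X)$ or $g(\bm X)$ is continuous so that ``units near $\bm x^*$'' must be replaced by a neighborhood carrying nonnegligible mass, is the delicate step, and it is exactly where the weaker Assumption~\ref{relaxed-positivity} is shown to suffice in place of full positivity.
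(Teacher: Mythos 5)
Your route is genuinely different from the paper's: the paper argues entirely on the primal side, following Zhao and Percival, by reducing existence of $w^*$ to (i) the sample mean of $c(\bm{X})$ lying in the convex hulls of the treated and control $c$-values and (ii) nonemptiness of the intersection of the two groups' full convex hulls, and then establishing (ii) by covering an $\ell_\infty$-ball around the full-positivity point $\bm{x}^*$ with $3^p$ boxes and showing each box eventually contains a unit from each arm. You instead run a recession analysis of the dual objective from Theorem~\ref{thm:dual}. Your treatment of the $\bm{v}=\bm{0}$ directions is essentially sound (up to the usual caveats that linear dependence among the components of $c$ only creates a lineality space, and that passing from population support to empirical support needs exactly the box-covering device the paper uses).

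The genuine gap is the $\bm{v}\neq\bm{0}$ case, and the argument you sketch for it cannot be completed as stated. The two inequalities you extract at the anchor, $\bm{u}_1^Tc(\bm{x}^*)+\bm{v}^Tg(\bm{x}^*)\ge 0$ and $\bm{u}_0^Tc(\bm{x}^*)-\bm{v}^Tg(\bm{x}^*)\ge 0$, are constraints at a single point and are satisfied (with equality) by directions that vanish at $\bm{x}^*$ yet still weakly separate the two arms elsewhere; such directions can be bona fide descent directions. Concretely, take $\bm{X}=(X_1,X_2)$, $c(\bm{x})=(1,x_1)$, $g(\bm{x})=x_2$, with treated units supported on $\{x_2=x_1,\ x_1\in\{0,1,2\}\}$, control units on $\{x_2=2-x_1,\ x_1\in\{0,1,2\}\}$, and the point $(1,1)$ attainable by both arms. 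Assumption~\ref{relaxed-positivity} holds (positivity given $x_1$ at every level, full positivity at $\bm{x}^*=(1,1)$), yet constraint \eqref{eq:db_b1} pins the weighted treated mean of $g$ to $\bar{X}_1$ and the weighted control mean of $g$ to $2-\bar{X}_1$, so constraint \eqref{eq:db_b3} is infeasible whenever $\bar{X}_1\neq 1$; if $E[X_1]\neq 1$ this happens with probability tending to one, and the certifying dual direction ($\bm{\gamma}=\pm 1$ with $\bm{\alpha}_0,\bm{\alpha}_1$ chosen to annihilate the arguments on each arm's support) satisfies both of your anchor inequalities. So the step "a single shared full-positivity point forces the achievable weighted $g$-means to intersect" is not merely delicate but false without an additional nondegeneracy condition on the within-arm joint law of $(c,g)$. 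To your credit, you flagged exactly this step as the obstacle; note that the paper's own reduction to conditions (i) and (ii) is vulnerable at the same spot, since those two conditions do not force the $c=\bar{c}$ slices of the two hulls to share a common $g$-mean. A repair along either route needs the paper's stronger output --- a full-dimensional neighborhood of $\bigl(c(\bm{x}^*),g(\bm{x}^*)\bigr)$ contained in both empirical hulls, which uses local positivity around $\bm{x}^*$ rather than at $\bm{x}^*$ alone --- together with a condition tying $\bar{c}$ to that neighborhood or ruling out within-arm affine degeneracy.
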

We leave the proof of this proposition to supplementary Section \ref{sec:proofs_supp}. When
considering a finite sample, the weights in Problem  \eqref{eq:db} exist if the convex hulls generated by $\{c(\bm{X}_i), g(\bm{X}_i)\}_{Z_i =1}$  and $\{c(\bm{X}_i), g(\bm{X}_i)\}_{Z_i =0}$ contain $\frac1n\sum_{i=1}^n \{c(\bm{X}_i), g(\bm{X}_i)\}$ \citep{zhao_entropy_2017}. In contrast, the weights in Problem \eqref{eq:our_db} exist if 1) the convex hulls of $\{c(\bm{X}_i)\}_{Z_i =1}$ and $\{c(\bm{X}_i)\}_{Z_i =0}$ contain $\frac1n\sum_{i=1}^n c(\bm{X}_i)$; and 2) $\{c(\bm{X}_i), g(\bm{X}_i)\}_{Z_i =1} \cap \{c(\bm{X}_i), g(\bm{X}_i)\}_{Z_i =0} \neq \emptyset$.
This is a substantial relaxation of the requirements for weight existence, especially for scenarios where one treatment group is much smaller than the other. Returning to Figure \ref{fig:method_fig}, the sample mean of $\{c(\bm{x}), g(\bm{x})\}$ is not in the intersection of the two convex hulls and thus there is no solution to Problem \eqref{eq:db} as previously discussed. However, there are many observations in the intersection of the two convex hulls. Furthermore, when examining $c(\bm{x})$, its sample mean is within the observed range for both treatment groups. Thus, there is a solution to our proposed Problem \eqref{eq:our_db}, and in fact the corresponding weighted sample mean given by $\{w_i^*\}_{i=1}^n$ \textit{is} within the intersection of the treated and control convex hulls.

The relaxed positivity assumption for Problem \ref{eq:our_db} is directly related to the propensity score model implied by Problem \ref{eq:our_db}. 
Taking the expectation of population version of Equation \eqref{eq:og_loss} with respect to $\bm{x}$ yields the following first order conditions, 
\begin{subequations}\label{eq:ps_conditions}
\begin{align}
    \{e(\bm{x})\}^{-1} &= \rho'\{\bm{\alpha}_1^Tc(\bm{x}) + \bm{\gamma}^Tg(\bm{x})\} = w_1^*, \label{eq:ps1}\\
    \{1-e(\bm{x})\}^{-1} &= \rho'\{\bm{\alpha}_0^Tc(\bm{x}) - \bm{\gamma}^Tg(\bm{x})\} = w_0^*, \:\:\: \text{and} \label{eq:ps2}\\
    e(\bm{x}) &= \frac{\rho'\{\bm{\alpha}_0^Tc(\bm{x}) - \bm{\gamma}^Tg(\bm{x})\}}{\rho'\{\bm{\alpha}_0^Tc(\bm{x}) - \bm{\gamma}^Tg(\bm{x})\} + \rho'\{\bm{\alpha}_1^Tc(\bm{x}) + \bm{\gamma}^Tg(\bm{x})\}} =\frac{w_0^*}{w_0^*+w_1^*}. \label{eq:ps3}
\end{align}   
\end{subequations}
Thus, the weights $w_0^*$ and $w_1^*$ correspond to IPW for a specific propensity score model, which we reference as the implied propensity score model of Problem \eqref{eq:our_db}. In this implied propensity score model, the dual variables $\bm{\alpha}_0, \bm{\alpha}_1, \bm{\gamma}$ can be considered as the model coefficients. Furthermore, $\rho'(\cdot)$ may be regarded as the link function of the generalized linear propensity score model, where $D(\cdot)$ determines this link function \citep{wang_minimal_2020}.   

Problem \eqref{eq:our_db} restricts the coefficients corresponding to $g(\bm{x})$ in the models for $\{e(\bm{x})\}^{-1}$ and $\{1-e(\bm{x})\}^{-1}$ to be opposites of each other. This 
ensures that $w^*_0 = \{1 - w_1^{*^{-1}}\}^{-1}$ such that only a single propensity score model is specified by Problem \eqref{eq:our_db}. However, there are substantial restrictions on the propensity score model that satisfies conditions \eqref{eq:ps1}-\eqref{eq:ps3}. Specifically, $\Var(Z| \bm{X}=\bm{x})$ is restricted to be a function of only $c(\bm{x})$ 
for specific dispersion measures (see supplementary Section \ref{sec:implied_ps_supp} for further discussion). Since this generally does not hold when $E[Z|\bm{X}=\bm{x}]$ is a function of both $c(\bm{x})$ and $g(\bm{x})$, this model will only hold when $g(\bm{x})$ takes a specific functional form. 
However, it is explicitly through this restriction on the propensity score model that there is a solution to Problem \eqref{eq:our_db} in scenarios with a lack of overlap, as overlap is improved through a \textit{reduction} in the ability to predict treatment assignment \citep{clivio_towards_2024-1}. 
Rather than attempt to specify a correct implied propensity score model, Problem \eqref{eq:our_db} imposes restrictions on the propensity score model to improve overlap and yield a solution to the optimization problem. 

\subsubsection{Connections to other methods}
\label{sec:ps_connect}
Problem \eqref{eq:our_db} corresponds to a single propensity score model, while Problem  \eqref{eq:db} implies two incompatible propensity score models due to the separability of the optimization problem by treatment group.
By implying a single propensity score model, the weights derived by our proposed method maintain the standard philosophical interpretation of IPW that direct balancing weights lack for the ATE. Furthermore, this allows for the doubly robust property (i.e., the estimator is consistent if either the implied propensity score model or outcome model is properly specified) and semiparametric efficiency of our method given the correct model specification (see Section \ref{sec:asymptotics} for further discussion of these properties). 
Minimal weights also correspond to two incompatible propensity score models but additionally place an L1 penalty on the dual variables (i.e., propensity score model coefficients). 
While the restrictions imposed on the implied propensity score model of Problem \eqref{eq:our_db} are comparably less intuitive, by restricting only $\Var(Z|\bm{X}=\bm{x})$ rather than both $E[Z|\bm{X}=\bm{x}]$ and $\Var(Z|\bm{X}=\bm{x})$ our derived weights are able to achieve exact finite sample balance. Furthermore, 
in restricting $\Var(Z|\bm{X}=\bm{x})$ our proposed method precisely restricts the aspect the propensity score model that primarily influences estimator variance. 

When 
$c(\bm{x}) = \emptyset$ and
$g(\bm{x}) = b(\bm{x})$ in Problem \eqref{eq:our_db}, Assumption \ref{relaxed-positivity} becomes: ``There exists an $\bm{x}^*$ such that $0 < \text{Pr}\{Z_i | c(\bm{X}_i) = c(\bm{x}^*), g(\bm{X}_i) = g(\bm{x}^*)\} < 1$'' which is comparable to the lack of a positivity assumption for overlap weights.  Thus, Assumption \ref{relaxed-positivity} is an intermediate  between Assumption $2$ and no positivity assumption; as more covariate functions are added to the $g(\bm{x})$ set, Assumption \ref{relaxed-positivity} becomes more relaxed. To further connect overlap weights and Problem \eqref{eq:our_db}, when $c(\bm{x}) = \emptyset$ and
$g(\bm{x}) = b(\bm{x})$ the first order conditions in \eqref{eq:ps_conditions} simplify to $e(\bm{x})/\{1-e(\bm{x})\} = w_0^*/w_1^*$.
While the exact solution to the dual problem will depend on the specific dispersion measure, 
overlap weights satisfy this condition. Therefore, Problem \eqref{eq:our_db} with $c(\bm{x}) = \emptyset$ and
$g(\bm{x}) = b(\bm{x})$ can be considered the direct balancing equivalent of overlap weights; however, Problem \eqref{eq:our_db} will always yield exact balance while overlap weights only yields  exact balance when the propensity score is estimated with logistic regression. 

\subsection{Interpretation of the proposed estimator}
\label{sec:estimator_interp}
While Problem \eqref{eq:our_db} yields a solution when there is a lack of overlap, in doing so the population of $g(\bm{x})$ is modified away from the ATE target population, potentially resulting in estimator bias with respect to the ATE (termed bias due to estimand mismatch in \cite{barnard_unified_2025}). 
Thus, we provide model- and design- based perspectives on the conditions under which our proposed estimator has minimal error with respect to the ATE. In addition, we provide a characterization of the modified population and estimand that $\hat{\tau}_{w^*}$ targets when $\hat{\tau}_{w^*}$ does not have minimal error with respect to the ATE. 
We first note that for any weighted estimator for the ATE,
\begin{subequations}\label{eq:bias+decomp}
\begin{align}
    \hspace{-0.2in} \hat{\tau}_w- \tau &= 
    \frac1n \sum_{i=1}^n \{w_{i}Z_i\mu_{Z_i}(\bm{X}_i) - \mu_{1}(\bm{X}_i)\} -  \frac1n \sum_{i=1}^n \{w_{i}(1-Z_i)\mu_{Z_i}(\bm{X}_i) - \mu_{0}(\bm{X}_i)\}  \label{eq:bias_term} \\
    &- \int \{\mu_1(\bm{x}) - \mu_0(\bm{x})\}d\{F - F_n\}(\bm{x}) \label{eq:zero_bias1} \\
    &+ \frac{1}{n}\sum_{i=1}^n w_{i}\epsilon_iZ_i - \frac{1}{n}\sum_{i=1}^n w_{i}\epsilon_i(1 - Z_i),
    \label{eq:zero_bias2}
\end{align}
\end{subequations}
where $\epsilon_i = Y_i(Z_i) - \mu_{Z_i}(\bm{X}_i)$ and $F_n(\bm{x}) = \sum_{i=1}^n I(\bm{X}_i \leq \bm{x})/n$ is the empirical CDF. In this expression, term \eqref{eq:zero_bias1} goes to zero for a representative sample of the population and term \eqref{eq:zero_bias2} has expectation zero. Thus, term \eqref{eq:bias_term} is the primary contributor to estimator error; we focus on this term for the remainder of this section. 


To better understand the conditions under which our balancing equations are justified, consider the case where $\mu_z(\bm{x})$ is a linear function of $c(\bm{x})$ and $g(\bm{x})$ such that $\mu_1(\bm{x}) = \bm{\beta}_1^Tc(\bm{x}) + \bm{\lambda}_1^Tg(\bm{x})$ and $\mu_0(\bm{x}) = \bm{\beta}_0^Tc(\bm{x}) + \bm{\lambda}_0^Tg(\bm{x})$ for some $\bm{\beta}_z \in \mathbb{R}^K$ and $\bm{\lambda}_z \in \mathbb{R}^L$. When weights are derived through Problem \eqref{eq:db}, term \eqref{eq:bias_term} will be zero for the corresponding estimator. However when weights are derived with Problem \eqref{eq:our_db}, term \eqref{eq:bias_term} simplifies to $\frac{1}{n}\sum_{i=1}^n \bm{\lambda}_1^Tg(\bm{X}_i) - \bm{\lambda}_0^Tg(\bm{X}_i)$
such that term \eqref{eq:bias_term}
is zero when $\bm{\lambda}_1 = \bm{\lambda}_0$. Thus, term \eqref{eq:bias_term} is zero when $\mu_z(\bm{x}) = \bm{\beta}_z^Tc(\bm{x}) + \bm{\lambda}^Tg(\bm{x})$ and $\tau(\bm{x}) = (\bm{\beta}_1^T - \bm{\beta}_0^T)c(\bm{x})$.
These models imply that our proposed estimator has minimal error with respect to the ATE when the $c(\bm{x})$ set contains all treatment effect modifiers. Thus, as the set $c(\bm{x})$ gets smaller and correspondingly the set $g(\bm{x})$ gets larger, our proposed estimator requires a stronger assumption about the number of effect modifiers to be justified for the ATE. However, this is a relaxation of the assumption $\tau(\bm{x}) = \tau$ required for the overlap weights estimator to be consistent for the ATE.

We can also explore term \eqref{eq:bias_term} for our proposed estimator from a design-based, rather than modeling, prospective; 
an alternative simplification of term \eqref{eq:bias_term} is $(\bm{\lambda}_1^T -  \bm{\lambda}_0^T)\left\{\frac{1}{n}\sum_{i=1}^n w_iI[Z_i = z] \right.$ $\left. \times g(\bm{X}_i) -\frac{1}{n}\sum_{i=1}^ng(\bm{X}_i)\right\}$.
From a design-based perspective, this term tends to be small when
$1^T\left\lvert\frac{1}{n}\sum_{i=1}^n w_i  I[Z_i = z]g(\bm{X}_i) -\frac{1}{n}\sum_{i=1}^ng(\bm{X}_i)\right\rvert$ is small. However, in Problem \eqref{eq:our_db} we explicitly do not constrain $\frac{1}{n}\sum_{i=1}^n w_iI[Z_i = z]g(\bm{X}_i) -\frac{1}{n}\sum_{i=1}^ng(\bm{X}_i) = 0$ in order to yield a solution when there is a lack of overlap. Thus, $1^T\left\lvert\frac{1}{n}\sum_{i=1}^n w_i  I[Z_i = z]g(\bm{X}_i) -\frac{1}{n}\sum_{i=1}^ng(\bm{X}_i)\right\rvert$ will be non-zero when weights are derived using Problem \eqref{eq:our_db}. Despite this, the error of our proposed estimator with respect to the ATE is due to differences between the weighted and sample population of $g(\bm{x})$, rather than the imbalance between the weighted treated and control populations. Thus, when $c(\bm{x})$ does not contain all treatment effect modifiers, our proposed estimator has minimal error with respect to the average treatment effect estimand over a modified population of $g(\bm{x})$, which can be characterized by $\frac1n\sum_{i=1}^n w^*_iZ_i g(\bm{X}_i) = \frac1n\sum_{i=1}^n w^*_i(1-Z_i) g(\bm{X}_i)$. While the ATO population can also be characterized through weighted sample means, it can be challenging to conceptualize this modified population especially with high-dimensional covariates. 
We provide further guidance and discussion on how to select $g(\bm{x})$ given these model- and design-based perspectives in Section \ref{sec:implementation} and supplementary Section \ref{sec:guidance_implement_supp}.

\subsection{Related estimation problems}
\label{sec:related_estimation}
Our proposed constrained optimization problem for deriving weights for ATE estimation can be readily extended to variety of additional estimation problems.
We discuss extensions to the ATT and transporting effects below. Extensions to effect estimation with more than two treatment groups, weighted average treatment effects, and distributional balancing weights are discussed in supplementary Section \ref{sec:connect_methods_supp}. 
\subsubsection{ATT}
When deriving either IPW or direct balancing weights for the ATT, only weights for the control group are derived such that the corresponding estimator is $\frac1n \sum_{i=1}^n Y_i - \frac1n \sum_{i=1}^n w_i(1-Z_i)Y_i$. However, when adapting Problem \eqref{eq:our_db} for ATT estimation, weights must be derived for both the treated and control group. To derive these weights, $\{w_i^{*ATT}\}$, we can simply replace the right-hand side of constraint \eqref{eq:db_b1}  with $\frac1n\sum_{i=1}^n Z_ic_k(\bm{X}_i)$ and compute Equation \eqref{eq:estimator} with these weights. 
The design-based and model-based interpretations of the sets $c(\bm{x})$ and $g(\bm{x})$ remain the same as described in Section \ref{sec:estimator_interp}. However, the covariates that are effect modifiers may differ between the entire sample population and the treated population.

\subsubsection{Transporting effects}
\label{sec:transportability}
While clinical trials yield ``gold standard'' causal effect estimates through randomization, trial inclusion/exclusion criteria is often restrictive such that the trial population may not reflect a clinically relevant population. In these cases, it is desirable to transport the effect to a more meaningful population. However, there is often a lack of overlap between the trial population and this more meaningful target population. 
Our procedure allows the analyst to improve overlap by modifying the target population for only a subset of covariates, rather than modifying all covariates or discarding observations, which is inconsistent with the goal of estimating the effect in a clinically relevant population. To adapt Problem \ref{eq:our_db} for transportability, we first introduce some additional notation. Let $R_i = r$ be an indicator of whether an observation belongs to the trial population ($r = 1$) or the target population ($r=0$). We only observe $Y_i$ and $Z_i$ when $R_i = 1$ but observe $\bm{X}_i$ for both populations. Let $n_{r}$ indicate the sample size in population $r$. Given additional assumptions outlined in \cite{dahabreh_extending_2020}, the transported effect estimand is identifiable with observed data. We propose the following modification of Problem \eqref{eq:our_db} for transportability,
\begin{subequations}\label{eq:transport_db}
\begin{align}
   \{w^{*t}_i\}_{i=1}^n  = &\text{argmin}_{w_i} \sum_{i=1}^n D(w_i) \:\:\: \text{subject to}\\
    \frac1{n_1}\sum_{i=1}^n w_iI[Z_i = z]R_ic_k(\bm{X}_i) &= \frac1{n_0}\sum_{i=1}^n (1-R_i)c_k(\bm{X}_i), \:\:\: k = 1, \ldots K, \:\:\: z = 0,1 \\
    \frac1{n_1}\sum_{i=1}^n w_iZ_iR_ig_l(\bm{X}_i) &=   \frac1{n_1}\sum_{i=1}^n w_i(1-Z_i)R_ig_l(\bm{X}_i) , \:\:\: l = 1, \ldots L  
\end{align}
\end{subequations}
Then, our proposed estimator for the transported effect is $\tau_{w^{*t}} = \frac1n \sum_{i=1}^n w^{*t}_iZ_iR_iY_i - \frac1n \sum_{i=1}^n w^{*t}_i(1-Z_i)R_iY_i$. The interpretation of $c(\bm{x})$ and $g(\bm{x})$ remains the same as described in Section \ref{sec:estimator_interp}, however the selection of $g(\bm{x})$ may be of even more importance in order to maintain a target population of interest.

\section{Asymptotic properties of the proposed estimator}
\label{sec:asymptotics}
We derive asymptotic results for the solution to Equation \eqref{eq:og_loss} which we then use to develop asymptotic results for our proposed estimator, $\hat{\tau}_{w^*}$, under various conditions. All proofs of results in this section are in supplementary Section \ref{sec:proofs_supp}. We first require additional regularity conditions, similar to those in \cite{hirano_efficient_2003, chan_globally_2016} and \cite{kallberg_large_2023}, on the set balancing functions, $b(\bm{x})$.
\begin{assumption}{3}{Balance function conditions}\label{balanc_func_regularity}
There exists a finite $M$, such that $|b_j(\bm{X})| < M$ for $j = 1, \ldots, J$ almost surely and $B(\bm{X}) = [b_1(\bm{X}), \ldots,b_J(\bm{X})] $ has full rank.
\end{assumption}


For the remainder of this section, we notate $w_i^*$ as $ w^*(\bm{X}_i)$ to explicitly show the relationship to $\bm{X}_i$. Given Assumptions \ref{strong-ignorability}, \ref{relaxed-positivity}, and \ref{balanc_func_regularity}, $w_{1}^*(\bm{X}_i)$ and $w_{0}^*(\bm{X}_i)$ are consistent for their asymptotic counterparts, $\tilde{w}_1(\bm{X}_i)$ and $\tilde{w}_0(\bm{X}_i)$, which uniquely solve the population version of Equation \eqref{eq:og_loss} (supplementary Section \ref{sec:proofs_supp}, Proposition \ref{prop:supp}); a majority of the asymptotic results follow from this fact.
Since Problem \eqref{eq:our_db} implies a single propensity score model, our proposed estimator is doubly robust as stated below in Theorem \ref{theorem:db}.

\begin{theorem}
    Suppose Assumptions \ref{strong-ignorability} and \ref{balanc_func_regularity} hold and $\rho''\{\cdot\}$ is continuous over $\mathbb{R}$. Assume that the expectation of $c(\bm{X})$ exists and that $\Var\{Y(a)\} < \infty$ for $a = 0,1$. Then, $\hat{\tau}_{w^*}$ is doubly robust in the sense that if either 
    1) Assumption \ref{positivity} holds and $e(\bm{x})$ satisfies conditions \eqref{eq:ps1}-\eqref{eq:ps3} or; 2) Assumption \ref{relaxed-positivity} holds, $\mu_z(\bm{x})$ is linear in $c(\bm{x})$ and $g(\bm{x})$, and $\tau(\bm{x})$ is linear in $c(\bm{x})$, then $\hat{\tau}_{w^*}$ is consistent for the ATE. \label{theorem:db}
\end{theorem}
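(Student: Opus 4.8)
The plan is to build the proof on the error decomposition in \eqref{eq:bias+decomp}, handling the two robustness conditions separately while sharing the treatment of the two remainder terms. First I would dispatch terms \eqref{eq:zero_bias1} and \eqref{eq:zero_bias2}. For a representative sample the integral in \eqref{eq:zero_bias1} is a difference of an expectation against $F$ and its empirical counterpart $F_n$, which converges to zero by the law of large numbers once $\tau(\bm{x})$ is integrable; this is guaranteed by $\Var\{Y(a)\} < \infty$ together with the boundedness of $b(\bm{x})$ from Assumption \ref{balanc_func_regularity}. Term \eqref{eq:zero_bias2} is a weighted average of the mean-zero residuals $\epsilon_i$, so under Assumption \ref{strong-ignorability} it has expectation zero and vanishes in probability after invoking consistency of the weights (Proposition \ref{prop:supp}) and the finite conditional variance of $Y(a)$. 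This reduces both cases to showing that the bias term \eqref{eq:bias_term} is asymptotically negligible.

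For condition 2 (the outcome-model route) the argument is exact and algebraic. Substituting $\mu_1(\bm{x}) = \bm{\beta}_1^Tc(\bm{x}) + \bm{\lambda}_1^Tg(\bm{x})$ and $\mu_0(\bm{x}) = \bm{\beta}_0^Tc(\bm{x}) + \bm{\lambda}_0^Tg(\bm{x})$ into \eqref{eq:bias_term} and using the three-way balance constraint \eqref{eq:db_b1} to cancel every $c(\bm{x})$ contribution, the term collapses to $(\bm{\lambda}_1 - \bm{\lambda}_0)^T\{\frac1n\sum_i w_i^* Z_i g(\bm{X}_i) - \frac1n\sum_i g(\bm{X}_i)\}$ after applying the treated--control balance constraint \eqref{eq:db_b3} to identify the weighted treated and control means of $g(\bm{x})$. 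The linearity of $\tau(\bm{x})$ in $c(\bm{x})$ is precisely the statement $\bm{\lambda}_1 = \bm{\lambda}_0$, so this expression is identically zero in finite samples. This is the calculation already sketched in Section \ref{sec:estimator_interp}; here it simply needs to be recorded cleanly, and notably it requires no positivity beyond Assumption \ref{relaxed-positivity} for the weights to exist.

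For condition 1 (the propensity-score route) I would instead rely on Proposition \ref{prop:supp}, which gives that $w_z^*(\bm{X}_i)$ converges to the population solution $\tilde{w}_z(\bm{X}_i)$. When the true $e(\bm{x})$ satisfies the first-order conditions \eqref{eq:ps1}--\eqref{eq:ps3}, these population weights coincide with the inverse propensity scores $\tilde{w}_1 = 1/e(\bm{X})$ and $\tilde{w}_0 = 1/\{1-e(\bm{X})\}$. I would then split $\frac1n\sum_i w_1^*(\bm{X}_i)Z_i\mu_1(\bm{X}_i)$ into the oracle average $\frac1n\sum_i \tilde{w}_1(\bm{X}_i) Z_i \mu_1(\bm{X}_i)$ plus a remainder driven by $w_1^* - \tilde{w}_1$. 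By strong ignorability and iterated expectations the oracle average converges to $E\{\mu_1(\bm{X})\}$, matching the unweighted average $\frac1n\sum_i \mu_1(\bm{X}_i)$, so the treated bracket of \eqref{eq:bias_term} vanishes; the analogous argument disposes of the control bracket. This branch does not use the linear outcome model, which is exactly what makes the result doubly robust.

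The main obstacle is the remainder term in condition 1: the weights $w_z^*$ are not pre-specified but solve the dual problem \eqref{eq:og_loss}, so $\frac1n\sum_i \{w_1^*(\bm{X}_i) - \tilde{w}_1(\bm{X}_i)\}Z_i\mu_1(\bm{X}_i)$ couples the estimated dual parameters with the data through $\mu_1$. Controlling it requires more than pointwise consistency of the weights. I would use the uniform boundedness of $b(\bm{x})$ from Assumption \ref{balanc_func_regularity} together with the smoothness of $\rho'(\cdot)$ to bound $|w_1^* - \tilde{w}_1|$ by the parameter deviation $\|\hat{\bm{\alpha}}_1 - \bm{\alpha}_1\| + \|\hat{\bm{\gamma}} - \bm{\gamma}\|$, and then a Cauchy--Schwarz or uniform-law-of-large-numbers argument, combined with the square-integrability of $\mu_1$ implied by $\Var\{Y(1)\} < \infty$, to force the remainder to zero. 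Establishing this uniform control, rather than the exact cancellation available in condition 2, is the technically delicate part of the proof.
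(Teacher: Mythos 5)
Your proposal is correct, and for condition 1 it is essentially the paper's argument: the paper also reduces this branch to Proposition \ref{prop:supp} (consistency of $w_z^*$ to $\tilde{w}_z$, which equal $1/e$ and $1/(1-e)$ when the first-order conditions \eqref{eq:ps1}--\eqref{eq:ps3} hold) and then passes to the limit; where you propose to control the remainder $\frac1n\sum_i\{w_1^*(\bm{X}_i)-\tilde{w}_1(\bm{X}_i)\}Z_i\mu_1(\bm{X}_i)$ by hand via uniform continuity of $\rho'$ and boundedness of $b(\bm{x})$, the paper simply cites a law of large numbers for averages with estimated parameters (Boos and Stefanski, Theorem 7.3), so you have correctly identified, and supplied, the one technically delicate step that the paper delegates to a black box. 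For condition 2 your packaging differs: you substitute the linear outcome models directly into the decomposition \eqref{eq:bias+decomp} and cancel term \eqref{eq:bias_term} exactly using constraints \eqref{eq:db_b1} and \eqref{eq:db_b3}, which then obliges you to dispatch the stochastic terms \eqref{eq:zero_bias1} and \eqref{eq:zero_bias2} separately (as you do); the paper instead proves the finite-sample identity $\hat{\tau}_{DR}=\hat{\tau}_{w^*}$ for the outcome-augmented estimator --- the same cancellation, applied to $\hat{\mu}_z$ rather than $\mu_z$ --- and then inherits consistency from standard doubly robust theory in one stroke. The two routes rest on identical algebra; yours is more self-contained and makes explicit which terms are exactly zero versus merely $o_p(1)$, while the paper's is shorter and avoids re-deriving the convergence of the residual terms.
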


Note that $\rho''\{\cdot\}$ is continuous for common dispersion measures such as $w^2$ and $w\log(w)$. In contrast, $\hat{\tau}_{w^{bw}}$ is not doubly robust \citep{chan_globally_2016}. Since our estimator is motivated by scenarios where positivity does not hold, the second condition that guarantees consistency (i.e., $\tau(\bm{x})$ is linear in only $c(\bm{x})$) is of primary importance. 
Let $\tau_{g, \tilde{w}}(\bm{x}, z) = (\bm{\beta}_1 - \bm{\beta}_0)^Tc(\bm{x})  + \bm{\lambda}_1^T\tilde{w}_1(\bm{x})zg(\bm{x}) - \bm{\lambda}_0^T\tilde{w}_0(\bm{x})(1-z)g(\bm{x})$ for some $\bm{\beta}_z \in \mathbb{R}^K$ and $\bm{\lambda}_z \in \mathbb{R}^L$. When $\tau(\bm{x})$ is linear in both $c(\bm{x})$ and $g(\bm{x})$, our proposed estimator is consistent for the modified estimand $\tau_{g, \tilde{w}} = E[E[Y|Z = 1, c(\bm{X}), \tilde{w}_1(\bm{X})e(\bm{X})g(\bm{X})]] - E[E[Y|Z = 0, c(\bm{X}), \tilde{w}_0(\bm{X})\{1-e(\bm{X})\}g(\bm{X})]]= E[\tau_{g, \tilde{w}}(\bm{X}, Z)]$ as stated in Corollary \ref{coro:consistency}. Note that $\tau_{g, \tilde{w}}$ is indeed a causal estimand because $E[\tilde{w}_1(\bm{X})e(\bm{X})g(\bm{X})] = E[\tilde{w}_0(\bm{X})\{1-e(\bm{X})\}g(\bm{X})]$ is a first order condition of the population version of Equation \eqref{eq:og_loss}. 
\begin{corollary}
    Suppose Assumptions \ref{strong-ignorability} and \ref{relaxed-positivity} hold and $\rho''\{\cdot\}$ is continuous over $\mathbb{R}$. Assume that the expectation of $c(\bm{X})$ exists and that $\Var\{Y(a)\} < \infty$ for $a = 0,1$. Then, $\hat{\tau}_{w^*}$ is consistent for $\tau_{g,\tilde{w}}$ if $e(\bm{x})$ does not satisfy conditions \eqref{eq:ps1}-\eqref{eq:ps3} and $\mu_z(\bm{x})$ and $\tau(\bm{x})$ are both linear in $c(\bm{x})$ and $g(\bm{x})$.
    \label{coro:consistency}
\end{corollary}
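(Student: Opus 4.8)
The plan is to show $\hat{\tau}_{w^*} \to_p \tau_{g,\tilde{w}}$ by routing through the error decomposition already established in \eqref{eq:bias_term}--\eqref{eq:zero_bias2}, reducing everything to the probability limit of the single term \eqref{eq:bias_term}, and then verifying that this limit is exactly $\tau - \tau_{g,\tilde{w}}$. Starting from $\hat{\tau}_{w^*} - \tau = \eqref{eq:bias_term} + \eqref{eq:zero_bias1} + \eqref{eq:zero_bias2}$, I would first dispatch the two remainder terms. Under Assumption \ref{balanc_func_regularity} and the linearity hypothesis, $\tau(\bm{x}) = (\bm{\beta}_1 - \bm{\beta}_0)^T c(\bm{x}) + (\bm{\lambda}_1 - \bm{\lambda}_0)^T g(\bm{x})$ is bounded, so term \eqref{eq:zero_bias1} equals $\frac1n\sum_i \tau(\bm{X}_i) - E[\tau(\bm{X})] \to_p 0$ by the law of large numbers; term \eqref{eq:zero_bias2} is an average of bounded-weight times residual terms with $E[\epsilon_i \mid \bm{X}_i, Z_i] = 0$ (Assumption \ref{strong-ignorability} gives $E[Y(z)\mid\bm{X},Z] = \mu_z(\bm{X})$) and finite residual variance, hence $\to_p 0$.

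It then suffices to take the limit of term \eqref{eq:bias_term}. Using linearity of $\mu_z$ together with the finite-sample constraints \eqref{eq:db_b1} and \eqref{eq:db_b3}, term \eqref{eq:bias_term} collapses to $(\bm{\lambda}_1^T - \bm{\lambda}_0^T)(\bar{g}_w - \bar{g})$, where $\bar{g} = \frac1n\sum_i g(\bm{X}_i)$ and $\bar{g}_w = \frac1n\sum_i w_i^* I[Z_i = z]g(\bm{X}_i)$ is the weighted mean of $g$, equal across $z = 0,1$ by \eqref{eq:db_b3}; this is precisely the alternative simplification recorded in Section \ref{sec:estimator_interp}. The law of large numbers gives $\bar{g} \to_p E[g(\bm{X})]$. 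For $\bar{g}_w$ I would invoke Proposition \ref{prop:supp}, which provides $\sup_i|w_z^*(\bm{X}_i) - \tilde{w}_z(\bm{X}_i)| \to_p 0$; since $g$ is bounded (Assumption \ref{balanc_func_regularity}), the estimated weights may be replaced by their limits up to $o_p(1)$, so $\bar{g}_w \to_p E[\tilde{w}_1(\bm{X})e(\bm{X})g(\bm{X})]$ (passing $I[Z=1]$ to $e(\bm{X})$ via Assumption \ref{strong-ignorability}). Call this limit $\tilde{g}$.

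That $\tilde{g}$ is well defined across arms, and the companion identities for $c$, are the population first-order conditions of the dual loss \eqref{eq:og_loss}: differentiating with respect to $\bm{\gamma}$ gives $E[\tilde{w}_1 e\, g] = E[\tilde{w}_0(1-e)g] = \tilde{g}$, and differentiating with respect to $\bm{\alpha}_1, \bm{\alpha}_0$ gives $E[\tilde{w}_1 e\, c] = E[\tilde{w}_0(1-e)c] = E[c]$. Combining the preceding two paragraphs, $\hat{\tau}_{w^*} \to_p \tau + (\bm{\lambda}_1 - \bm{\lambda}_0)^T(\tilde{g} - E[g])$, and expanding $\tau = (\bm{\beta}_1 - \bm{\beta}_0)^T E[c] + (\bm{\lambda}_1 - \bm{\lambda}_0)^T E[g]$ yields the clean limit $(\bm{\beta}_1 - \bm{\beta}_0)^T E[c] + (\bm{\lambda}_1 - \bm{\lambda}_0)^T \tilde{g}$.

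The final step is to identify this limit with $\tau_{g,\tilde{w}}$. Expanding the definition of $\tau_{g,\tilde{w}}$ under the linear outcome model, the treated inner conditional mean is $\bm{\beta}_1^T c(\bm{X}) + \bm{\lambda}_1^T(g \text{ reweighted by } \tilde{w}_1 e)$, whose outer expectation is $\bm{\beta}_1^T E[c] + \bm{\lambda}_1^T \tilde{g}$ by the population balance of $c$ and the definition of $\tilde{g}$; subtracting the analogous control expression reproduces exactly $(\bm{\beta}_1 - \bm{\beta}_0)^T E[c] + (\bm{\lambda}_1 - \bm{\lambda}_0)^T \tilde{g}$, establishing consistency. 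The hypothesis that $e$ fails \eqref{eq:ps1}--\eqref{eq:ps3} enters only to separate $\tau_{g,\tilde{w}}$ from the ATE: were those conditions to hold, $\tilde{w}_1 e \equiv 1$ would force $\tilde{g} = E[g]$ and recover $\tau$, matching case 1 of Theorem \ref{theorem:db}. I expect the main obstacle to be this identification step --- carefully parsing the conditioning in the definition of $\tau_{g,\tilde{w}}$ and checking that the implicit reweighting of $g$ there coincides with $\tilde{g}$; the secondary technical point is justifying, uniformly in $i$, the replacement of $w_z^*$ by $\tilde{w}_z$ inside $\bar{g}_w$ and term \eqref{eq:zero_bias2}, which rests on the uniform weight consistency of Proposition \ref{prop:supp} combined with the boundedness in Assumption \ref{balanc_func_regularity}.
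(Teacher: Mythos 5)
Your proof is correct, but it takes a different route from the paper's. The paper works entirely at the population level: it applies Proposition \ref{prop:supp} together with the law of large numbers for averages with estimated parameters to conclude $\frac1n\sum_i w^*(\bm{X}_i)Z_iY_i \to_p E[\tilde{w}_1(\bm{X})Z Y]$ directly, and then manipulates this expectation using linearity of $\mu_1$ and the population first-order conditions of \eqref{eq:og_loss} (which give $E[\tilde{w}_1(\bm{X})e(\bm{X})c(\bm{X})] = E[c(\bm{X})]$) to land on $E[E[Y|Z=1, c(\bm{X}), \tilde{w}_1(\bm{X})e(\bm{X})g(\bm{X})]]$, with the control arm handled symmetrically; the finite-sample constraints \eqref{eq:db_b1}--\eqref{eq:db_b3} are never invoked. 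You instead route through the error decomposition \eqref{eq:bias_term}--\eqref{eq:zero_bias2}, use the \emph{exact} finite-sample constraints to collapse term \eqref{eq:bias_term} to $(\bm{\lambda}_1 - \bm{\lambda}_0)^T(\bar{g}_w - \bar{g})$, and only then pass to the limit. Your approach buys a cleaner separation between ``the estimator converges'' and ``what it converges to,'' and makes explicit that the entire deviation from $\tau$ is attributable to the retargeted $g$-population; the paper's is shorter and avoids having to argue separately that terms \eqref{eq:zero_bias1} and \eqref{eq:zero_bias2} vanish. Two shared caveats, not gaps specific to you: both arguments implicitly require Assumption \ref{balanc_func_regularity} (via Proposition \ref{prop:supp}) even though the corollary statement omits it, and both rest on the same slightly informal identification of $E[Y|Z=1, c(\bm{X}), \tilde{w}_1(\bm{X})e(\bm{X})g(\bm{X})]$ with $\bm{\beta}_1^Tc(\bm{X}) + \bm{\lambda}_1^T\tilde{w}_1(\bm{X})e(\bm{X})g(\bm{X})$ under the linear model --- you correctly flag this as the delicate step. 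Your reading of Proposition \ref{prop:supp} as giving uniform control over $i$ is justified here because $\hat{\bm{\theta}} \to_p \tilde{\bm{\theta}}$ combined with the boundedness of $b(\bm{X})$ and continuity of $\rho'$ yields the uniformity you need.
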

When positivity does hold and all models implied by Problem \ref{eq:our_db} are correctly specified, our proposed estimator is semiparametrically efficient as stated in Theorem \ref{theorem:semipar_eff}.

\begin{theorem}
    Suppose Assumptions \ref{strong-ignorability}, \ref{positivity}, and \ref{balanc_func_regularity} hold and $\Var\{Y(a)|\bm{X} = \bm{x}\} < \infty$ for all $\bm{x}$ and $a = 0,1$. Suppose $e(\bm{x})$ satisfies conditions \eqref{eq:ps1}-\eqref{eq:ps3}, $\mu_z(\bm{x})$ is linear in $c(\bm{x})$ and $g(\bm{x})$, and $\tau(\bm{x})$ is linear in $c(\bm{x})$. Then,
    \begin{equation*}
        \sqrt{n}(\hat{\tau}_{w^*} - \tau) \to N(0, V_{opt})
    \end{equation*}
    where $V_{opt} = E\left[\frac{\Var(Y(1)|\bm{X})}{e(\bm{X})} + \frac{\Var(Y(0)|\bm{X})}{1-e(\bm{X})} + \{\tau(\bm{X}) - \tau\}^2 \right]$ is the semiparametric efficiency bound \citep{hahn_role_1998}.
    \label{theorem:semipar_eff}
\end{theorem}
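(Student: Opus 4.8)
The plan is to prove this as a standard semiparametric efficiency result: I would show that $\hat\tau_{w^*}$ is asymptotically linear with influence function equal to the efficient influence function (EIF) for the ATE,
\begin{equation*}
\psi_{\mathrm{eff}}(\bm{X},Z,Y) = \frac{Z\{Y-\mu_1(\bm{X})\}}{e(\bm{X})} - \frac{(1-Z)\{Y-\mu_0(\bm{X})\}}{1-e(\bm{X})} + \mu_1(\bm{X}) - \mu_0(\bm{X}) - \tau,
\end{equation*}
whose variance equals $V_{opt}$ by iterated expectations (the three pieces are mutually uncorrelated given $\bm{X}$), recovering \citet{hahn_role_1998}'s bound. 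The engine is a stacked $M$-estimation argument: I treat the dual variables $\bm\theta = (\bm\alpha_0,\bm\alpha_1,\bm\gamma)$ together with $\tau$ as the solution of a system of estimating equations, derive the joint influence function by a Taylor expansion, and then verify algebraically that it collapses to $\psi_{\mathrm{eff}}$. Consistency of $\hat\tau_{w^*}$ under these hypotheses is already furnished by case 1 of Theorem \ref{theorem:db}, so only the distributional statement remains.

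First I would write the first-order conditions of the dual objective \eqref{eq:og_loss} as sample moment equations for $\bm\theta$. Differentiating in $\bm\alpha_1,\bm\alpha_0,\bm\gamma$ recovers exactly the primal balance constraints \eqref{eq:db_b1}--\eqref{eq:db_b3}, so $\hat{\bm\theta}$ is a $Z$-estimator with estimating function $\psi_{\bm\theta}(\bm{X},Z;\bm\theta)$ whose three blocks are $Zw_1^* c(\bm{X}) - c(\bm{X})$, $(1-Z)w_0^* c(\bm{X}) - c(\bm{X})$, and $Zw_1^* g(\bm{X}) - (1-Z)w_0^* g(\bm{X})$. Proposition \ref{prop:supp} gives consistency of $\hat{\bm\theta}$ to the population root $\bm\theta^*$; because $b(\bm{X})$ is bounded and finite-dimensional (Assumption \ref{balanc_func_regularity}) and $\rho'$ is smooth, a uniform law of large numbers and the usual Taylor argument (no empirical-process machinery needed) give $\sqrt n(\hat{\bm\theta} - \bm\theta^*) = -\bm M^{-1}\frac{1}{\sqrt n}\sum_i \psi_{\bm\theta}(\bm{X}_i,Z_i;\bm\theta^*) + o_p(1)$, where $\bm M = \E[\partial_{\bm\theta}\psi_{\bm\theta}]$ is nonsingular via the full-rank part of Assumption \ref{balanc_func_regularity}. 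Stacking the estimator moment $m_\tau(\bm{X},Z,Y;\bm\theta) = w_1^* ZY - w_0^*(1-Z)Y$ and Taylor-expanding in $\hat{\bm\theta}$ then yields
\begin{equation*}
\sqrt n(\hat\tau_{w^*} - \tau) = \frac{1}{\sqrt n}\sum_{i=1}^n \Big[\{m_\tau(\bm{X}_i,Z_i,Y_i;\bm\theta^*) - \tau\} - \E[\partial_{\bm\theta} m_\tau]^\top \bm M^{-1}\psi_{\bm\theta}(\bm{X}_i,Z_i;\bm\theta^*)\Big] + o_p(1),
\end{equation*}
identifying the candidate influence function $\phi = \{m_\tau(\bm\theta^*)-\tau\} - \E[\partial_{\bm\theta} m_\tau]^\top\bm M^{-1}\psi_{\bm\theta}$.

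The crux is showing $\phi = \psi_{\mathrm{eff}}$, and this is the step I expect to be the main obstacle. Under the hypotheses, the implied-propensity conditions \eqref{eq:ps1}--\eqref{eq:ps3} force $w_1^*(\bm{X}) = 1/e(\bm{X})$ and $w_0^*(\bm{X}) = 1/\{1-e(\bm{X})\}$ at $\bm\theta^*$, so the leading term $m_\tau(\bm\theta^*)-\tau$ is the ordinary known-propensity IPW influence function. A short calculation shows this differs from $\psi_{\mathrm{eff}}$ precisely by the mean-zero term $\{Z - e(\bm{X})\}\{\mu_1(\bm{X})/e(\bm{X}) + \mu_0(\bm{X})/(1-e(\bm{X}))\}$, so the whole result hinges on the correction $-\E[\partial_{\bm\theta} m_\tau]^\top\bm M^{-1}\psi_{\bm\theta}$ cancelling it exactly. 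The convenient structure I would exploit is that at $\bm\theta^*$ each block of $\psi_{\bm\theta}$ factors as $\{Z-e(\bm{X})\}$ times a fixed function of $\bm{X}$ (namely $c/e$, $-c/(1-e)$, and $g/[e(1-e)]$), so the common factor $\{Z-e(\bm{X})\}$ drops out and the required cancellation reduces to a deterministic identity in $\bm{X}$, $\E[\partial_{\bm\theta} m_\tau]^\top\bm M^{-1}\,r(\bm{X}) = \mu_1(\bm{X})/e(\bm{X}) + \mu_0(\bm{X})/\{1-e(\bm{X})\}$, where $r(\bm X)$ collects those fixed functions. This is exactly where the linearity assumptions are indispensable: writing $\mu_z(\bm{x}) = \bm\beta_z^\top c(\bm{x}) + \bm\lambda^\top g(\bm{x})$ with a common $\bm\lambda$ (equivalently $\tau$ linear in $c$) places $\mu_1/e + \mu_0/(1-e)$ in the span generated by the estimating equations, so the linear map $\E[\partial_{\bm\theta} m_\tau]^\top\bm M^{-1}$ can reproduce it. Verifying this identity by computing the Jacobian blocks of $\bm M$ and of $\E[\partial_{\bm\theta}m_\tau]$ and checking the match is the technical heart of the argument; it is the step that simultaneously uses the correct implied propensity score and the correctly specified linear outcome model, i.e., double robustness operating at full strength.

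Finally, with $\phi = \psi_{\mathrm{eff}}$ established, the central limit theorem applied to the i.i.d.\ mean-zero summands gives $\sqrt n(\hat\tau_{w^*}-\tau)\to N(0,\E[\psi_{\mathrm{eff}}^2])$, and $\E[\psi_{\mathrm{eff}}^2] = V_{opt}$ by the variance decomposition noted above, completing the proof. Finiteness of this variance uses $\Var\{Y(a)\mid\bm{X}=\bm{x}\}<\infty$ together with boundedness of $c,g$ and of $1/e,\,1/(1-e)$ under Assumptions \ref{positivity} and \ref{balanc_func_regularity}.
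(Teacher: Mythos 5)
Your proposal is correct, but it takes a genuinely different route from the paper. The paper's proof never forms the stacked sandwich: it decomposes $\hat{\tau}_{w^*}-\tau = \frac1n\sum_i S_i + R_1 + R_2$ with $S_i$ the efficient score, observes that $R_2$ (the outcome-model residual term) is \emph{identically zero in finite samples} because the linearity of $\mu_z$ in $c,g$ with a common $g$-coefficient lets the balance constraints \eqref{eq:db_b1}--\eqref{eq:db_b3} annihilate it exactly, and then kills $R_1$ via Cauchy--Schwarz using the consistency $w_z^*\to 1/e,\,1/(1-e)$ from Proposition \ref{prop:supp} when $e(\bm{x})$ satisfies \eqref{eq:ps1}--\eqref{eq:ps3}. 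Your route instead treats $(\bm{\theta},\tau)$ as a stacked Z-estimator, so the gap between the known-propensity IPW influence function and $\psi_{\mathrm{eff}}$, namely $\{Z-e(\bm{X})\}\{\mu_1(\bm{X})/e(\bm{X})+\mu_0(\bm{X})/(1-e(\bm{X}))\}$, must be absorbed by the correction $-\E[\partial_{\bm\theta}m_\tau]^{\top}\bm{M}^{-1}\psi_{\bm\theta}$; I verified that the identity you flag as the crux does hold --- computing the block Jacobians gives $\E[\partial_{\bm\theta}m_\tau]^{\top}=(\bm\beta_1^{\top},-\bm\beta_0^{\top},\bm\lambda^{\top})\bm{M}$, so the correction equals $-(Z-e)\{\bm\beta_1^{\top}c/e+\bm\beta_0^{\top}c/(1-e)+\bm\lambda^{\top}g/[e(1-e)]\}=-(Z-e)\{\mu_1/e+\mu_0/(1-e)\}$ exactly, using the common $\bm\lambda$. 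What each buys: the paper's argument is shorter and showcases the defining feature of direct balancing (exact finite-sample balance removes the outcome-model term without any expansion), though its $R_1$ bound is treated somewhat informally; your argument is more mechanical and standard, makes explicit the classical ``estimating the nuisance parameters supplies the efficiency gain'' phenomenon, and cleanly isolates where each hypothesis is used, at the cost of needing slightly more regularity (continuity of $\rho''$ and nonsingularity of $\bm{M}$, the latter supplied by the positive-definite Hessian lemma in the paper's supplement under Assumption \ref{balanc_func_regularity}) and the block-matrix bookkeeping. Both yield the same influence function and hence the same limit $N(0,V_{opt})$.
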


However, it is rare that the implied propensity score model is correctly specified and our proposed estimator is motivated by scenarios where Assumption \ref{positivity} does not hold. Thus, we derive and present the asymptotic variance of our estimator when only the outcome and models are correctly specified in Corollary \ref{theorem:new_var}. 

\begin{corollary}
    Suppose Assumptions \ref{strong-ignorability}, \ref{relaxed-positivity}, and \ref{balanc_func_regularity} hold and $\Var\{Y(a)|\bm{X} = \bm{x}\} < \infty$ for all $\bm{x}$ and $a = 0,1$. Suppose $e(\bm{x})$ does not satisfy conditions \eqref{eq:ps1}-\eqref{eq:ps3}, $\mu_z(\bm{x})$ is linear in $c(\bm{x})$ and $g(\bm{x})$, and $\tau(\bm{x})$ is linear in $c(\bm{x})$. Then,
    \begin{equation*}
        \sqrt{n}(\hat{\tau}_{w^*}  - \tau) \to N(0, V_{opt}^*)
    \end{equation*}
    where $V_{opt}^* = E[\tilde{w}_1(\bm{X})^2e(\bm{X})\Var\{Y(1)|\bm{X}\} + \tilde{w}_0(\bm{X})^2\{1-e(\bm{X})\}\Var\{Y(0)|\bm{X}\} + \{\tau(\bm{X}) - \tau\}^2]$.
    \label{theorem:new_var}
\end{corollary}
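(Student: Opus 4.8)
The plan is to exploit the exact finite-sample identity in the error decomposition \eqref{eq:bias+decomp} rather than to rebuild the variance from scratch. Under the stated model we have $\mu_z(\bm{x}) = \bm{\beta}_z^Tc(\bm{x}) + \bm{\lambda}_z^Tg(\bm{x})$ with $\tau(\bm{x})$ linear in $c(\bm{x})$, which forces $\bm{\lambda}_1 = \bm{\lambda}_0$. First I would use the exact constraints \eqref{eq:db_b1} and \eqref{eq:db_b3} satisfied by $w^*$ to collapse the bias term \eqref{eq:bias_term}: substituting the linear form of $\mu_z$ and cancelling the $c(\bm{x})$ contributions via \eqref{eq:db_b1} reduces \eqref{eq:bias_term} to $(\bm{\lambda}_1 - \bm{\lambda}_0)^T\{\frac1n\sum_i w_i^*Z_ig(\bm{X}_i) - \frac1n\sum_ig(\bm{X}_i)\}$, which by \eqref{eq:db_b3} and $\bm{\lambda}_1 = \bm{\lambda}_0$ is \emph{exactly} zero in finite samples. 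This is the key simplification: it removes any dependence of the leading behaviour on the estimand-mismatch term and leaves only \eqref{eq:zero_bias1} and \eqref{eq:zero_bias2}.

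Next I would treat the two surviving terms separately. Term \eqref{eq:zero_bias1} equals $\frac1n\sum_i\{\tau(\bm{X}_i) - \tau\}$ (using $\tau = E[\tau(\bm{X})]$), so by the ordinary CLT $\sqrt{n}$ times this term is asymptotically $N(0, E[\{\tau(\bm{X}) - \tau\}^2])$, delivering the third component of $V_{opt}^*$. Term \eqref{eq:zero_bias2} carries the estimated weights $w_1^*(\bm{X}_i), w_0^*(\bm{X}_i)$, so the substantive step is to show these may be replaced by their population limits $\tilde{w}_1, \tilde{w}_0$ from Proposition \ref{prop:supp} without changing the limiting distribution (reconciling the $n$ versus $n_z$ normalisation through $\frac1n\sum_iw_i^*I[Z_i=z]\to 1$). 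Writing $w_z^* = \rho'\{\cdot\}$ as a smooth function of the dual variables $\hat{\bm{\theta}} = (\hat{\bm{\alpha}}_0, \hat{\bm{\alpha}}_1, \hat{\bm{\gamma}})$ and Taylor-expanding about $\tilde{\bm{\theta}}$, the perturbation contributes $\left[\frac1n\sum_i\partial_{\bm{\theta}}w_1(\bm{X}_i)Z_i\epsilon_i\right]\sqrt{n}(\hat{\bm{\theta}} - \tilde{\bm{\theta}})$. Because the outcome model is correctly specified, $E[\epsilon_i\mid\bm{X}_i, Z_i] = 0$, so the bracketed average converges in probability to $E[\partial_{\bm{\theta}}w_1(\bm{X})Z\epsilon] = 0$; combined with $\sqrt{n}(\hat{\bm{\theta}} - \tilde{\bm{\theta}}) = O_p(1)$ this perturbation is $o_p(1)$. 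Thus term \eqref{eq:zero_bias2} is asymptotically equivalent to its version with fixed weights $\tilde{w}_z$, an i.i.d.\ mean-zero sum whose CLT variance is $E[\tilde{w}_1(\bm{X})^2Z\epsilon^2] + E[\tilde{w}_0(\bm{X})^2(1-Z)\epsilon^2]$ (the cross term vanishes since $Z(1-Z) = 0$); iterated expectations with Assumption \ref{strong-ignorability} turn this into the first two components of $V_{opt}^*$.

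Finally I would assemble the pieces. Term \eqref{eq:zero_bias1} is a function of $\bm{X}$ alone while term \eqref{eq:zero_bias2} has conditional mean zero given $(\bm{X}, Z)$, so their asymptotic covariance is zero and the limiting variance is the sum of the three components, i.e.\ $V_{opt}^*$; a joint CLT with Slutsky gives $\sqrt{n}(\hat{\tau}_{w^*} - \tau) \to N(0, V_{opt}^*)$. As a sanity check, when positivity holds and the propensity model is correct so that $\tilde{w}_1 = 1/e$ and $\tilde{w}_0 = 1/(1-e)$, $V_{opt}^*$ collapses to the bound $V_{opt}$ of Theorem \ref{theorem:semipar_eff}.

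The hard part will be the replacement step for the estimated weights, specifically establishing $\sqrt{n}$-consistency (indeed asymptotic normality) of the dual solution $\hat{\bm{\theta}}$ and controlling the Taylor remainder uniformly. I would obtain the former from standard Z-estimation theory applied to the smooth, strictly concave dual objective \eqref{eq:og_loss}, whose population Hessian is nonsingular under the full-rank and boundedness conditions of Assumption \ref{balanc_func_regularity}; the uniform remainder control follows from the boundedness of $b(\bm{x})$ and the smoothness of $\rho'$, which yield the stochastic equicontinuity of the empirical process indexed by $\bm{\theta}$. The orthogonality $E[\partial_{\bm{\theta}}w_z(\bm{X})\epsilon\mid\bm{X}, Z] = 0$ is precisely what makes the estimation of the weights first-order negligible, mirroring the role of correct outcome-model specification in Theorem \ref{theorem:semipar_eff}.
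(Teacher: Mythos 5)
Your proposal is correct and follows essentially the same route as the paper: the paper's proof decomposes $\hat{\tau}_{w^*}-\tau$ into a weighted-residual term, a term killed exactly by the balance constraints (their $R_2$, your term \eqref{eq:bias_term} under $\bm{\lambda}_1=\bm{\lambda}_0$), and the $\frac1n\sum_i\{\tau(\bm{X}_i)-\tau\}$ term; it then replaces $w^*_z$ by $\tilde{w}_z$ using Proposition \ref{prop:supp} and computes $\Var\bigl[\tilde{w}_1(\bm{X})Z\{Y(1)-\mu_1(\bm{X})\}-\tilde{w}_0(\bm{X})(1-Z)\{Y(0)-\mu_0(\bm{X})\}+\{\tau(\bm{X})-\tau\}\bigr]$ exactly as you do, with the cross terms vanishing by iterated expectations and ignorability. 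The one place you genuinely diverge is the negligibility of the weight-estimation effect: the paper disposes of it with a Cauchy--Schwarz bound of the form $\|w^*_z-\tilde{w}_z\|\cdot\|Z(Y-\mu_1(\bm{X}))\|=o_p(1)O_p(n^{-1/2})$ (inherited from its treatment of $R_1$ in Theorem \ref{theorem:semipar_eff}), whereas you Taylor-expand in the dual parameter and invoke the orthogonality $E[\partial_{\bm{\theta}}w_z(\bm{X})\,\epsilon\mid\bm{X},Z]=0$ together with $\sqrt{n}(\hat{\bm{\theta}}-\tilde{\bm{\theta}})=O_p(1)$ from Z-estimation theory. Your version demands more (a root-$n$ rate for $\hat{\bm{\theta}}$, which Proposition \ref{prop:supp} does not itself supply, though the smooth strictly convex dual objective delivers it), but it is the more standard and more defensible argument: the paper's claim that the empirical $L^2$ norm of the residual process is $O_p(n^{-1/2})$ is the weak link in its own chain, and your orthogonality route avoids it entirely. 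You also correctly flag and resolve the $n$ versus $n_z$ normalisation mismatch in term \eqref{eq:zero_bias2}, which the paper passes over silently.
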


Of key importance, this variance result only requires the relaxed positivity assumption, rather than the standard positivity assumption. However, to compare $V_{opt}$ and $V^*_{opt}$, consider the case where positivity does hold. Since the asymptotic weights tend to be less extreme (i.e., $\tilde{w}_1(\bm{X}_i) \leq \{e(\bm{X}_i)\}^{-1}$ and  $\tilde{w}_0(\bm{X}_i) \leq \{1-e(\bm{X}_i)\}^{-1}$) than IPW through the restriction on the implied propensity score model, $V^*_{opt} \leq V_{opt}$. Thus, our proposed estimator does indeed have reduced asymptotic variance, especially in scenarios where $e(\bm{x})$ is close to zero or positivity does not hold. 
Specifically, as the $g(\bm{x})$ set gets larger, $\tilde{w}_z(\bm{x})$ become less extreme and correspondingly $V^*_{opt}$ becomes smaller. We further derive the asymptotic variance of our proposed estimator with respect to the modified estimand, $\tau_{g, \tilde{w}}$, when $\tau(\bm{x})$ is a linear function of both $c(\bm{x})$ and $g(\bm{x})$ in Corollary \ref{theorem_g_new_var}.

\begin{corollary}
    Suppose Assumptions \ref{strong-ignorability}, \ref{relaxed-positivity}, and \ref{balanc_func_regularity} hold and $\Var\{Y(a)|\bm{X} = \bm{x}\} < \infty$ for all $\bm{x}$ and $a = 0,1$. Assume that $\rho''\{\cdot\}$ and $\rho'''\{\cdot\}$ are continuous over $\mathbb{R}$. Suppose $e(\bm{x})$ does not satisfy conditions \eqref{eq:ps1}-\eqref{eq:ps3} and $\mu_z(\bm{x})$ and $\tau(\bm{x})$ are linear in both $c(\bm{x})$ and $g(\bm{x})$. Then,
    \begin{align*}
        \sqrt{n}(\hat{\tau}_{w^*}  - \tau_{g,\tilde{w}}) \to N(0, V_{g, \tilde{w}})
    \end{align*}
    where $V_{g, \tilde{w}} = E[\tilde{w}_1(\bm{X})^2e(\bm{X})\Var\{Y(1)|\bm{X}\} + \tilde{w}_0(\bm{X})^2\{1-e(\bm{X})\}\Var\{Y(0)|\bm{X}\} + \{\tau_{g, \tilde{w}}(\bm{X}, Z) - \tau_{g, \tilde{w}} + r_1(\bm{X}, Z) - r_0(\bm{X}, Z)\}^2].$
    \label{theorem_g_new_var}
\end{corollary}
The functions $r_z(\bm{x}, z)$ account for the difference between the estimated and asymptotic weights. To avoid introducing substantial additional notation, we present the exact forms of these functions in the proof of Corollary \ref{theorem_g_new_var} in supplementary Section \ref{sec:proofs_supp}.  Note that the additional assumptions on $\rho\{\cdot\}$ hold for common dispersion measures such as $w^2$ and $w\log(w)$. To compare to Corollary \ref{theorem:new_var}, consider the case where the assumptions of Corollary \ref{theorem_g_new_var} hold and the direct balancing weights estimator is computed. In this case, the asymptotic variance of the direct balancing estimator takes the same form as $V_{opt}^*$ but with the asymptotic counterparts of direct balancing weights; we will reference this variance as $V_{opt}^{bw}$. Our proposed asymptotic weights are less extreme than the asymptotic direct balancing weights, however, $\Var\{\tau_{g, \tilde{w}}(\bm{X}, Z)\} \gtrsim  \Var\{\tau(\bm{X})\}$ as $\Var\{w_z(\bm{X})I[Z=z]g(\bm{X})\} \gtrsim \Var\{g(\bm{X})\}$. Thus, there is an unclear relationship between $V_{g, \tilde{w}}$ and $V^{bw}_{opt}$; in some scenarios $V_{g, \tilde{w}} > V^{bw}_{opt}$ while in others $V_{g, \tilde{w}} < V^{bw}_{opt}$. To further explore this relationship, we perform a simulation study to compare the asymptotic variance of these estimators (supplementary Section \ref{sec:supp_sim_var}). When overlap is poor and there is low to moderate treatment effect heterogeneity,  $V_{g, \tilde{w}}$ tends to be smaller than $V^{bw}_{opt}$ (supplementary Figure \ref{fig:asymp_var}). 
With the addition of the constraints $w_i \geq 0$ and $\sum_{i=1}^n w_iZ_i = \sum_{i=1}^n w_i(1-Z_i)$ to both optimization problems, the direct balancing weights are even more extreme such that $\Var(\hat{\tau}_{w^*}) \lesssim \Var(\hat{\tau}_{w^{wb}})$ across a majority of data generating scenarios (supplementary Figure \ref{fig:emp_sd_var}); in fact, when overlap is poor $\Var(\hat{\tau}_{w^*}) << \Var(\hat{\tau}_{w^{wb}})$.

\section{Implementation details}
\label{sec:implementation}
To select the optimal $g(\bm{x})$ set, in terms of error with respect to the ATE, one approach is to examine all possible covariate function sets and select the set that 1) yields a solution to Problem \eqref{eq:our_db}; and 2) minimizes term \eqref{eq:bias_term}. However, examining all possible covariate subsets is computationally prohibitive, especially as covariate dimension increases. Therefore, we propose a greedy algorithm (Algorithm \ref{alg:adaptive}) for identifying an approximately optimal $g(\bm{x})$ subset according to a specific metric 
aligned with either the model- or design-based perspective on estimator error. We also propose an additional algorithm (supplementary Algorithm \ref{alg:simple}) that is less computationally intensive than Algorithm \ref{alg:adaptive}; we found these algorithms resulted in estimators with similar performance  when there was no correlation between covariates (supplementary Figure \ref{fig:mse_alg_adapt}). However, we generally recommend the use of Algorithm \ref{alg:adaptive} if it is not computationally prohibitive.

\begin{algorithm}
	\caption{Identifying an approximately optimal $g(\bm{X})$ set given metric $m$} 
    \label{alg:adaptive}
     \hspace*{\algorithmicindent}
	\begin{algorithmic}[1]
            \State \textbf{Inputs:} The matrix $\mathbf{X}_{n\times p}$, set of covariate functions $b_{1}(\bm{X}), \ldots, b_{J}(\bm{X})$, and a metric $m$ that can be computed for each $b_j(\bm{X})$
            \State \textbf{Initialize:} $b(\bm{X})^{-} = \emptyset $
		\While {$\{w^*_i\}_{i=1}^n$ given $g(\bm{X}) = b(\bm{X})^{-}$ does not exist}
                \State Compute $m$ for $\{b_{1}(\bm{X}), \ldots, b_{j}(\bm{X})\} \setminus b(\bm{X})^{-}\}$ and identify $b^*(\bm{X})$, the covariate with the minimum or maximum value of the metric $m$
                \State Let $b(\bm{X})^{-} = \{b(\bm{X})^{-},  b^*(\bm{X}) \}$
			\State Solve Problem \eqref{eq:our_db} with $g(\bm{X}) = b(\bm{X})^{-}$ and $c(\bm{X}) = \{b_{1}(\bm{X}), \ldots, b_{j}\} \setminus b(\bm{X})^{-}$
		\EndWhile
            \State \textbf{Outputs:} 
            \State \hspace{1.5em} $b(\bm{X})^{-}$ \Comment{Smallest set that yields a solution to Problem \eqref{eq:our_db} when $g(\bm{X}) = b(\bm{X})^{-}$} 
            \State \hspace{1.5em} $\{w^*_i\}_{i=1}^n$ \Comment{Resulting weights from Problem \eqref{eq:our_db} with $g(\bm{X}) = b(\bm{X})^{-}$}
	\end{algorithmic} 
\end{algorithm}

From a design-based perspective, selecting the covariate functions that are each conditionally the most predictive of treatment will tend to result in the smallest $g(\bm{x})$ set that yields a solution to Problem \eqref{eq:our_db}. While selecting this set may not minimize the design-based simplification of estimator error, this set does tend to result in smaller design-based estimator error and simpler estimator interpretation. Thus, for the design-based metric, we propose and implement Spearman semipartial correlation; 
in step four, the maximum correlation is used to identify $b^*(\bm{x})$. We choose this metric because it is rank-based and reflects the impact of removing a covariate in a hypothetical propensity score model, however other metrics could work well here. Our design-based estimator then takes the same form as Equation \eqref{eq:estimator} with weights computed with the $g(\bm{x})$ set selected by Algorithm \ref{alg:adaptive}. 

From a model-based perspective, selecting the covariate functions that do not modify the treatment effect as $g(\bm{x})$ will tend to minimize estimator error. Thus, we implement the absolute value of the one-step doubly robust estimator of treatment effect modification,
proposed by \cite{boileau_nonparametric_2025}, as the model-based metric to use within Algorithm \ref{alg:adaptive}; 
 in step four, the minimum of this metric is used to identify $b^*(\bm{x})$. Our model-based estimator is a cross-fit style estimator to preserve downstream inference; we provide further discussion of this model-based metric and estimator in supplementary Section \ref{sec:extra_mb_estimator_supp}. For either metric/estimator, Algorithm \ref{alg:adaptive} can be initiated with a non-empty $b(\bm{x})^-$ set (step 2) if there are covariates that one wants to include in the $g(\bm{x})$ set \textit{a priori}. If minimizing error with respect to the ATE is not a priority, $g(\bm{x})$ could be chosen such that the modified $g(\bm{x})$ population is of scientific interest which we discuss further in supplementary Section \ref{sec:guidance_implement_supp}.


\section{Simulation experiments}
\label{sec:sim}
We generate a variety of simulation scenarios that vary in the number of confounders (i.e., covariates), level of overlap, level of treatment effect heterogeneity, and proportion of covariates that are effect modifiers. Through these scenarios, we first aim to validate the estimator corresponding to the weights derived through Problem \eqref{eq:our_db} when the true $\tau(\bm{x})$ function is known. In addition, we seek to determine in which data generating scenarios the proposed methods for a completely data-driven selection of $g(\bm{x})$ in Section \ref{sec:implementation} yield estimators with improved performance when compared to IPW, minimal weights, and overlap weights estimators when there is no solution to Problem \eqref{eq:db}. 

\subsection{Data generation}
For clarity in presentation, we drop the observation index on the variables. For $p = 20$ and $p=100$ we simulate $V_1$ - $V_p$ from a truncated normal distribution where $E[V_i] = 0$, $\Var(V_i) = 1$ and $\text{Cov}(V_i, V_j) = 0$ for all $i = 1, \ldots, p$ and $i \neq j$. To obtain a mix of continuous and categorial coefficients, we take $X_1$ - $X_{p/2} =$ $V_1$ - $V_{p/2}$ and $X_j = I[V_j < 0]$ for $j = p/2+1, \ldots p$. We generate the true propensity scores with a logistic model, $e(\bm{X}) = \{1 - \exp(-\sum_{i=0}^p \alpha_i)\}^{-1}$. In this model, $\alpha_0$ is selected to obtain the desired percent of treated observations (20\%, 40\%) and all other coefficients are determined as described in supplementary Section \ref{sec:sim_supp}. 
We simulate treatment assignment as $Z \sim \text{Bernoulli}\{e(\bm{X})\}$ and the outcome as $Y = \mu_z(\bm{X}) + N(0,1)$ for $Z=z$. Both $\mu_0(\bm{x})$ and $\mu_1(\bm{x})$ are modeled as linear combinations of $X_1, \ldots, X_p$. For $\theta = 25\%, 75\%$ of covariates, the corresponding $\mu_0(\bm{x})$ and $\mu_1(\bm{x})$ coefficients are equal  (i.e., these covariates are not effect modifiers). The remaining $\mu_0(\bm{x})$ and $\mu_1(\bm{x})$ coefficients are determined as described in supplementary Section \ref{sec:sim_supp}.

We generate a total of 72 simulation scenarios of all combinations $p=20, 100$, 20\% and 40\% treated, 25\% and 75\% of covariates being effect modifiers, three levels of overlap, and three levels of treatment effect heterogeneity. For each simulation scenario we generate 1,000 independent datasets with either $n = 1000$ ($p=20$) or $n = 2000$ ($p=100$) observations. 
We compute 1) IPW, direct balancing weights, and minimal weights estimators for the ATE; 2) our proposed estimator given the true $g(\bm{X})$ and the design- and model-based estimators proposed in Section \ref{sec:implementation}; and 3) the IPW and balancing weights (i.e., all covariates in the set $g(\bm{x})$) estimators for the ATO. All balancing weights are computed with the stable balancing weights deviance measure \citep{zubizarreta_stable_2015} and with the additional constraints of $w_i \geq 0$ and $\sum_{i=1}^n w_iZ_i = \sum_{i=1}^n w_i(1-Z_i)$. We compare the mean squared error (MSE) of these estimators with respect to the true ATE.

\subsection{Results}
Our proposed estimator implemented with the true $\tau(\bm{x})$ function has the minimum estimator MSE with respect to the ATE across almost all simulation scenarios where there is a solution to Problem \eqref{eq:our_db} for the true $\tau(\bm{x})$ function (Figure \ref{fig:mse_sim} and supplementary Figure \ref{fig:mse_sim_supp}). Specifically, this estimator tends to be minimally biased and have comparable variance to the minimal weights and ATO estimators (supplementary Figures \ref{fig:p20_sim}-\ref{fig:p100_sim_supp}). Furthermore, when there is a lack of overlap and no solution to Problem \eqref{eq:db}, our proposed estimator is the only estimator that has minimal bias for the ATE.
However, it is rare that the true $\tau(\bm{x})$ is known, so we focus on the performance of our proposed design- and model-based estimators that use a data-driven procedure to select $g(\bm{x}).$

\begin{figure}
    \centering
    \includegraphics[width=0.9\linewidth]{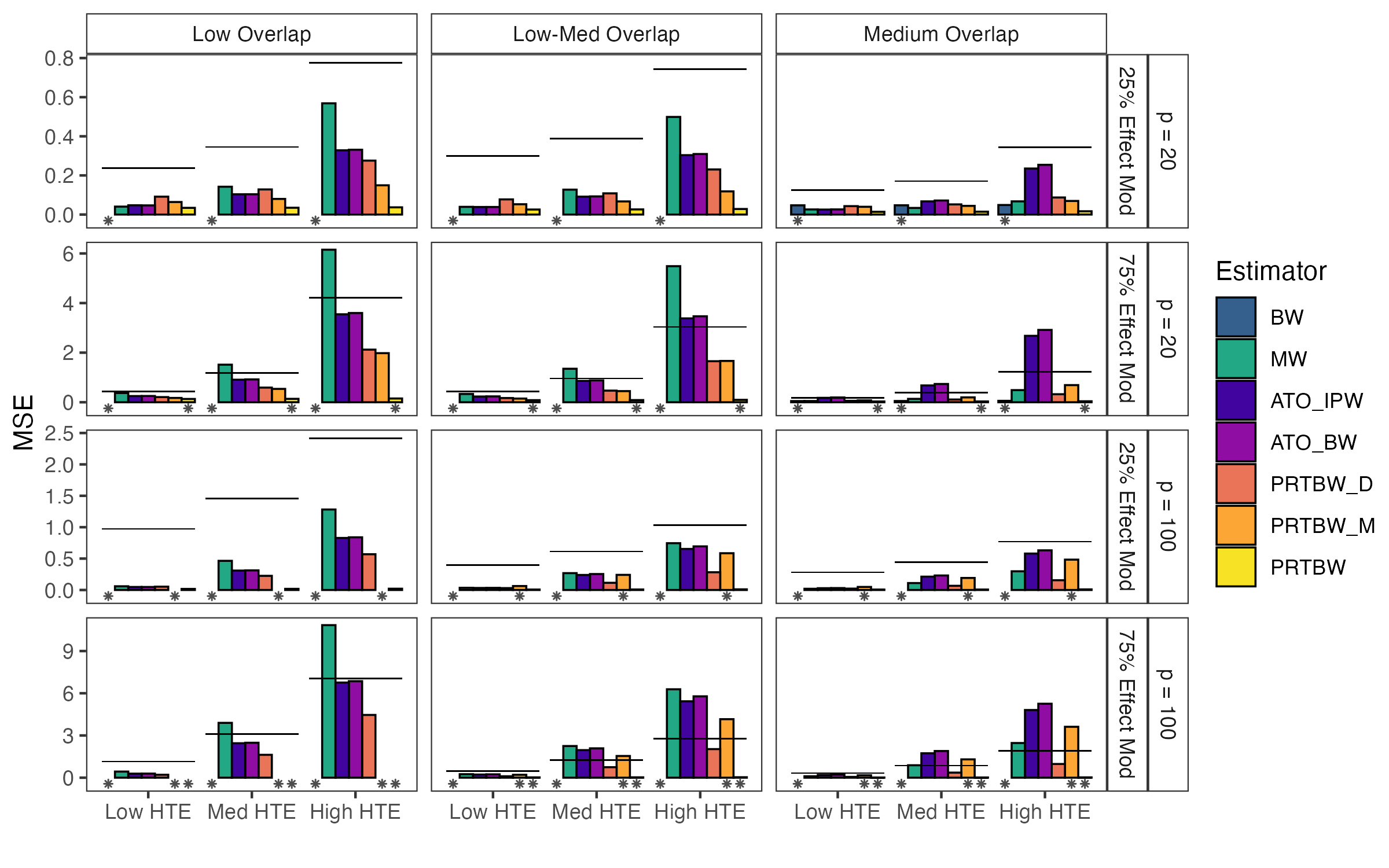}
    \caption{\footnotesize Estimator mean squared error (MSE) for all 20\% treated simulation scenarios. The line indicates the MSE for the IPW ATE estimator for each scenario. The asterisk, $*$, indicates estimators where a solution to the balancing optimization problem did not exist for all datasets. The horizontal axis indicates levels of treatment effect heterogeneity while each panel is labeled by different overlap levels, number of predictors, and percent of covariates that are treatment effect modifiers.}
    \label{fig:mse_sim}
\end{figure}

Across almost all simulation scenarios, either our proposed design- or model-based estimator has minimum estimator MSE with respect to the ATE compared to either the minimal weights or ATO estimators (Figure \ref{fig:mse_sim} and supplementary Figure \ref{fig:mse_sim_supp}). Furthermore, the design-based estimator is the only estimator that has 1) a solution to Problem \eqref{eq:our_db}; and 2) lower MSE than the IPW ATE estimator across all scenarios. The minimal weights estimator tends to have the largest MSE, even larger than the ATE and ATO estimators, when there is moderate-high treatment effect heterogeneity and low overlap. When there is low treatment effect heterogeneity, the minimal weights estimator tends to have the smallest MSE. However, this is due to the estimator achieving minimal variance but possessing substantial statistical bias; thus, our proposed estimators may still be preferable in these scenarios (supplementary Figure \ref{fig:p20_sim}). Our model-based estimator tends to perform better than the design-based estimator for $p=20$, while the reverse is true when $p=100$. We likely see this trend because there may not be sufficient sample size for estimating treatment effect modification and computing the weights with sample splitting for $p=100$. 
We see similar trends in estimator MSE between the 20\% treated (Figure \ref{fig:mse_sim}) and 40\% treated scenarios (supplementary Figure \ref{fig:mse_sim_supp}), though for the 40\% treated scenarios all estimators substantially outperform the IPW estimator.

\section{Two illustrative applications}
\label{sec:illustrative_app}
We apply our proposed weighted estimators for ATE estimation to an EHR study and adapt Problem \eqref{eq:our_db} for transportability to transport a treatment effect from an RCT to a Midwestern academic health center population. We compare our proposed estimators to IPW, direct balancing, and minimal weights estimators for the ATE and ATO and demonstrate their superior performance in terms of covariate balance and estimator variance for these two applications.
\subsection{EHR study on the effect of indwelling arterial catheters on mortality}
\label{sec:real_data_mimic}
We perform a reanalysis of \cite{hsu_association_2015} which estimates the ATE of indwelling arterial catheters (IACs) on mortality in patients with respiratory failure using observational data from the MIMIC-III v1.4 critical care database \citep{johnson_mimic-iii_2016}. 
The data has 2,522 observations of mechanically ventilated patients, where 51.5\% of patients received IAC. The outcome of interest is an indicator of mortality 28 days within hospital admission. Missing data is imputed with single imputation. Pre-treatment covariates include demographic information, baseline measurements (e.g., blood pressure, lab values), risk scores, and missing data indicators. In total, we identify 72 pre-treatment covariates of interest.  There is a lack of overlap in the estimated propensity scores in this data with many propensity scores close to 1 (supplementary Figure \ref{fig:MIMIC_ps}). Due to this lack of overlap, there is no solution to Problem \eqref{eq:db} with the additional constraints of $w_i \geq 0$ and $\sum_{i=1}^n w_iZ_i = \sum_{i=1}^n w_i(1-Z_i)$. We estimate the ATE with IPW, minimal weights, and our proposed design- and model-based estimators that use Algorithm \eqref{alg:adaptive} to select $g(\bm{x})$. Since rare binary covariates tend to be challenging to exactly balance, we also compute our proposed estimators when initializing Algorithm \eqref{alg:adaptive} with the set of binary covariates whose outcomes occur for < 5\% of observations. We also estimate the ATO using IPW and balancing weights (i.e., all covariates are in the $g(\bm{x})$ set).  All balancing weights are computed with the additional constraints of $w_i \geq 0$ and $\sum_{i=1}^n w_iZ_i = \sum_{i=1}^n w_i(1-Z_i)$. We compare the covariate balance, treatment effect estimates, and Wald-type bootstrapped 95\% confidence intervals (CIs) between all methods.

The treatment effect estimates and 95\% CIs  are similar for all estimators except the IPW estimator, which corresponds to a more negative estimate and a larger 95\% CI (Figure \ref{fig:MIMIC}). However, all estimates and 95\% CIs indicate that IAC non-significantly reduces 28 day mortality by a small percentage. The fact that the minimal weights, overlap weights, and PRTBW estimates are similar likely indicates that there is limited treatment effect heterogeneity. Thus, 
these estimators primarily differ in their target populations and the covariate balance achieved. Figure \ref{fig:MIMIC} shows the weighted treated and control sample means for three covariates in comparison to the sample mean (the black horizontal line). IPW and minimal weights do not exactly balance any of the covariates, with corresponding average standardized mean differences (SMD) of $0.161$ and $0.008$, respectively. 
In contrast, all other weights exactly balance the treated and control distributions. 
The smallest $g(\bm{x})$ set is selected for our method when Algorithm \ref{alg:adaptive} is initiated with rare binary covariates.
In Figure \ref{fig:MIMIC}, the corresponding design-based weights balance both diastolic pressure and the SOFA score to the target population, which is not true of any of the other weights. Supplementary Figures \ref{fig:MIMIC_tc}-\ref{fig:MIMIC_cp} show the SMD for all covariates 1) between treated and control groups; and 2) between treated/control groups and the target population. While these estimators yield similar estimates and 95\% CIs for this application, our proposed estimators tend to have improved covariate balance and correspond to a minimally modified target population.

\begin{figure}
    \centering
    \includegraphics[width=0.7\linewidth]{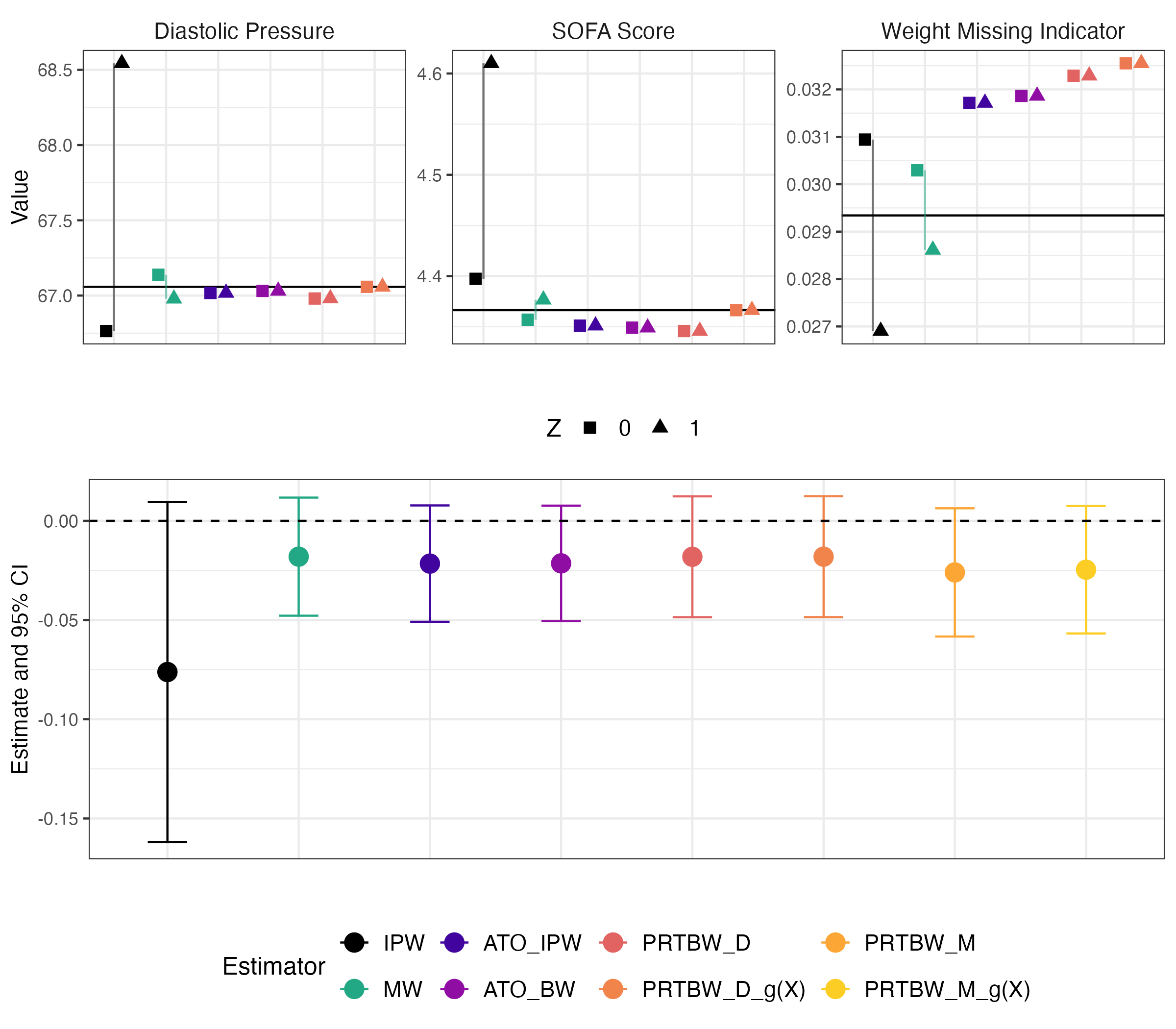}
    \caption{\footnotesize ATE estimates and 95\% confidence intervals of the effect of IAC on 28-day mortality (bottom figure) and weighted covariate balance of three covariates (top figures) for eight different weighting/estimation methods. For the top figures, the horizontal line indicates the sample mean of the covariates in the target population and vertical lines indicate the extent of covariate imbalance for each method.}
    \label{fig:MIMIC}
\end{figure}

\subsection{Transporting a “health care hotspotting” RCT to a Midwestern U.S. academic health center population}
\label{sec:real_data_hotspot}
A recent intervention of popular interest, termed ``health care hotspotting'', aimed to improve healthcare delivery for individuals who interact substantially with the healthcare system through intensive, targeted follow-up after admission. Despite the promising nature of this intervention, an RCT in Camden, NJ found a null effect of this intervention on readmission rates \citep{finkelstein_health_2020}. However, there is evidence that the treatment effect varies by population \citep{yang_hospital_2023} such that we are interested in determining the effect of this intervention in a Midwestern U.S. academic health center population. 
We perform a complete case analysis (< 30 missing observations) with a total of 781 trial population observations and 1305 target population observations. The outcome of interest is a readmission within 30 days. We include all common pre-treatment covariates between both datasets, which includes demographics and hospital stay characteristics, for a total of 21 covariates. There is minimal overlap in the estimated probability of being in the target population primarily due to race; the trial populations is 84.9\% non-white while the target population is 13.6\% non-white (supplementary Figure \ref{fig:hotspot_ps}).  
Thus, we implement our proposed design-based estimator and our weighted estimator when $g(\bm{x})$ \textit{a priori} only includes race. We compare this estimator to the normalized inverse odds weighted (IOW) estimator \citep{dahabreh_extending_2020}, the IOW estimator with population probabilities outside of $[0.1, 0.9]$ trimmed \citep{crump_dealing_2009}, and the direct balancing weights estimator. All balancing weights are computed with the additional constraints of $w_i \geq 0$ and $\sum_{i=1}^n w_iZ_i = \sum_{i=1}^n w_i(1-Z_i)$. We report covariate balance, transported treatment effect estimates, and Wald-type bootstrapped 95\% CIs for these methods.

 The IOW and direct balancing weights estimates and 95\% CIs indicate a non-significant reduction in 30-day readmission (Figure \ref{fig:hotspot}); however, there is no solution to the direct balancing weights optimization problem for many of the bootstrapped datasets such that a true 95\% bootstrapped CI cannot be derived. The trimmed estimator has the largest treatment effect estimate, but also the largest 95\% CI, likely due to the small sample size after trimming. While the 95\% CIs for both of our proposed estimators include zero, the intervals are substantially smaller and the estimates indicate that the intervention may actually increase hospital readmission rates in the target population. Both the IOW and trimmed estimators achieve poor covariate balance with average SMDs of $0.140$ and $0.382$. Supplementary Figures \ref{fig:hotspsot_tc}-\ref{fig:hotspot_cp} show the SMD for all covariates 1) between treated and control groups; and 2) between treated/control groups and the target population. Direct balancing weights and our proposed weights achieve exact covariate balance; however, our proposed weights do modify the sample mean of the race covariate such that the estimate applies to a population that is approximately 80\% non-white. Therefore, the PRTBW estimates indicate that the intervention may in fact increase 30-day readmission rates for either 1) this slightly modified target population; or 2) for the original target population when race is not a treatment effect modifier.

\begin{figure}
    \centering
    \includegraphics[width=0.6\linewidth]{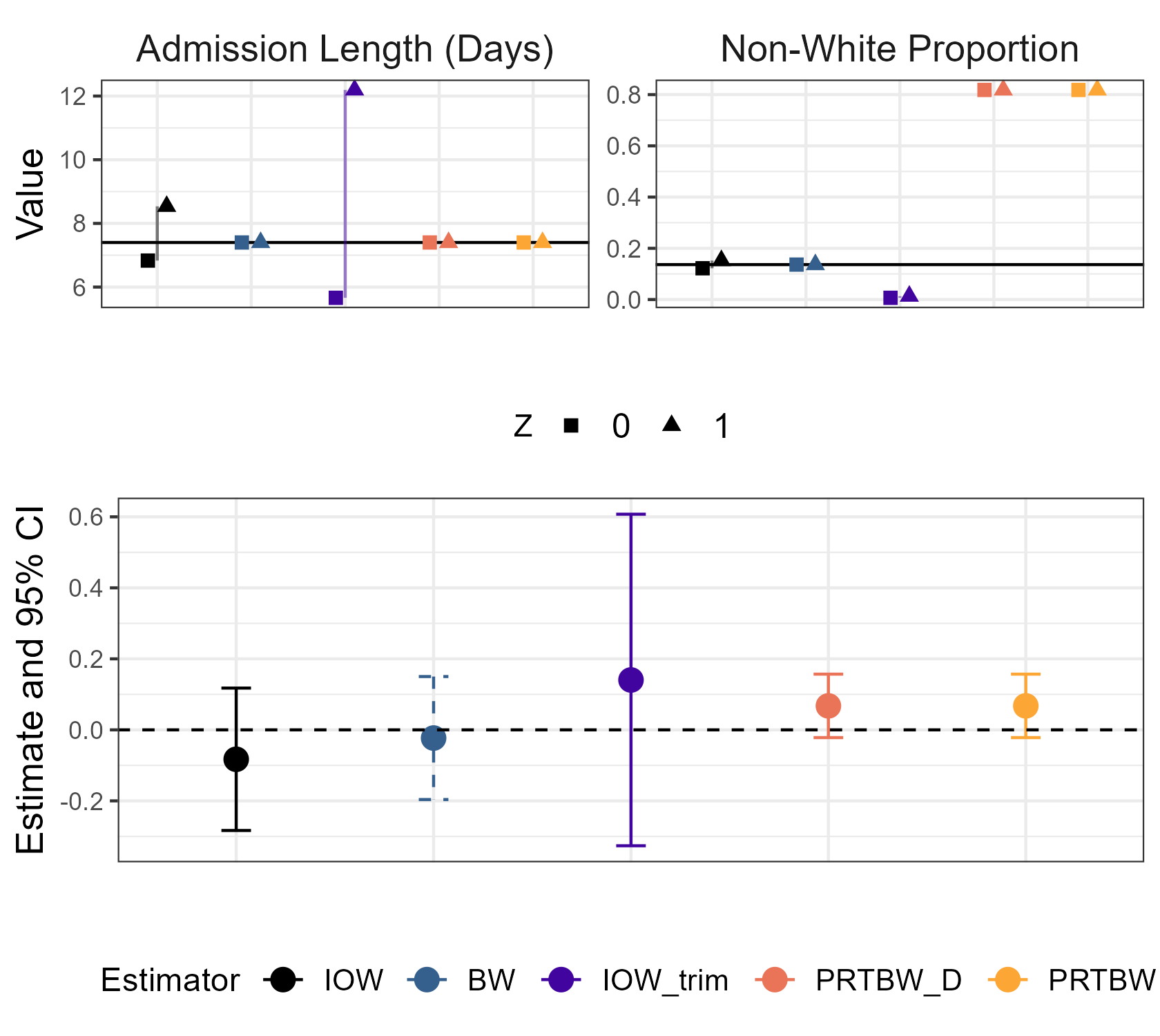}
    \caption{\footnotesize Transported treatment effect estimates and 95\% confidence intervals of the effect of ``health care hotspotting'' on 30-day readmission rates for a Midwestern healthcare center population (bottom figure) and weighted covariate balance of two covariates (top figures) for five different weighting/estimation methods. For the top figures, the horizontal line indicates the sample mean of the covariates in the target population and vertical lines indicate the extent of covariate imbalance for each method.}
    \label{fig:hotspot}
\end{figure}

\section{Discussion}
\label{sec:discussion}

Positivity violations create substantial challenges when estimating causal effects with observational data. These challenges necessitate researchers to confront the trade-offs between estimator bias, variance, and maintaining the original target population of interest. In this work, we proposed a novel weighting procedure that achieves reduced estimator bias and variance through relaxing the balancing constraints to the target population. In doing so, our proposed weights can be derived under an relaxed positivity assumption. Our approach offers a choice in between direct balancing and overlap weights (which are special cases of our method), allowing the analyst to navigate between preservation of the original estimand and mitigation of positivity violations. We have shown that our proposed estimator 1) is consistent for the original estimand when either the implied propensity score model is correct or the set of treatment effect modifiers is properly specified; and 2) is consistent for a slightly modified estimand that is simple to characterize when these conditions do not hold. Furthermore, we have shown that our proposed estimator achieves reduced asymptotic variance under the relaxed positivity assumption. To aid in the practical use of our method, we proposed an algorithm for identifying the set of covariates balanced to a modified population; the corresponding design- and model-based estimators perform well across applications to synthetic data, EHR data, and when transporting RCT effects.

While we focus on estimating the ATE, our proposed weighting procedure can be readily extended to other causal effect estimation problems as discussed in Section \ref{sec:related_estimation} and supplementary Section \ref{sec:connect_methods_supp}. Our proposed extension to distributional balancing weights may be especially promising as these weights do not require strong parametric assumptions on the outcome models. Future work includes exploring the theoretical properties of these extensions. There is also potential for combining the intuition behind our proposed method and modified treatment policies to obtain estimators with reduced bias and variance that correspond to both a treatment and population of interest. While our weighting procedure does perform well when there are positivity violations and higher dimensional covariates (e.g, $p=100$), there is a need for additional methods that address the combined challenges of variable selection (in the context of unmeasured confounding) and positivity violations in $p >> n$ scenarios. Further, while the infeasibility of Problem \eqref{eq:db} is a clear indication that our proposed method or minimal weights need to be used to derive weights, it is generally unclear when estimator performance may be impacted by overlap levels for a given research problem. Thus, future work includes developing measures of overlap that are directly related the impact of overlap on estimator MSE.

\section*{Funding}
This material is based upon work supported by the National Science Foundation Graduate Research Fellowship Program under Grant No. 2237827.

\section*{Disclosure Statement}
The authors report there are no competing interests to declare.

\section*{Data Availability Statement}
The data that supports Section \ref{sec:illustrative_app} are 1) MIMIC-III data: available with queries provided at \url{https://github.com/MIT-LCP/mimic-code}; 2) RCT data: available at \url{https://doi.org/10.7910/DVN/ZJVVQZ}; and 3) Health system target population data: not publicly available.

\putbib
\end{bibunit}


\pagebreak
\center{
\textbf{\Large Supplementary Materials for Partially Retargeted Balancing Weights for Causal Effect Estimation Under Positivity Violations}
}
\setcounter{section}{0}
\setcounter{algorithm}{0}
\setcounter{figure}{0}
\setstretch{1.2}

\begin{bibunit}

\justifying
\section{Technical results and proofs}
\label{sec:proofs_supp}
For all proofs, we notate $|| \cdot ||$ as the L2 norm.
\subsection{Theorem \ref{thm:dual}}
\begin{proof}
We rewrite the problem in matrix notation:
\begin{align*}
    \text{argmin}_{w_i}& \: \: \sum_{i=1}^n h(s_i) \\
    \text{subject to}& \:\: Q_{2K + L \times 2n}s_{2n \times 1} = 0_{2K + L \times 1}
\end{align*}
where 
\begin{align*}
   s_{2n \times 1} &= \begin{bmatrix}
           (1- Z_iw_i)_{n\times 1} \\
           (1- (1-Z_i)w_i)_{n\times 1}
         \end{bmatrix}  \\[8pt]
    Q_{2K + L \times 2n}  &=   \begin{bmatrix}
           c_1(\bm{X}_1) & c_1(\bm{X}_2) & \cdots & c_1(\bm{X}_n) & 0 & 0 & \cdots & 0 \\
           c_K(\bm{X}_1) & c_K(\bm{X}_2) & \cdots & c_K(\bm{X}_n) & 0 & 0& \cdots & 0 \\
           0 & 0 & \cdots & 0 &c_1(\bm{X}_1) & c_1(\bm{X}_2) & \cdots & c_1(\bm{X}_n) &  \\
           0 & 0 & \cdots & 0 &c_K(\bm{X}_1) & c_K(\bm{X}_2) & \cdots & c_K(\bm{X}_n) &  \\
           g_1(\bm{X}_1) & g_1(\bm{X}_2) & \cdots & g_1(\bm{X}_n) & -g_1(\bm{X}_1) & -g_1(\bm{X}_2) & \cdots & -g_1(\bm{X}_n) \\
           g_L(\bm{X}_1) & g_L(\bm{X}_2) & \cdots & g_L(\bm{X}_n) & -g_L(\bm{X}_1) & -g_L(\bm{X}_2) & \cdots & -g_L(\bm{X}_n) \\
         \end{bmatrix}.
\end{align*}
This problem is now in the form of \cite{tseng_relaxation_1991}. The dual of this problem is
\begin{align*}
    \text{minimize}_{\lambda}& \:\:\: q(\lambda) \\
    \text{subject to}& \:\: \: \text{no constraint on}\:\: \lambda,
\end{align*}
where $q(\lambda) = \sum_{j=1}^{2n} h^*_j(Q_j^T\lambda)$ and $h^*_j(t) = \sup_{s_j}\{ts_j - h(s_j)\}$. Consider $j < n+1$. Then we have that
\begin{align*}
   h^*_j(t) &= \sup_{s_j}\{ts_j - h(s_j)\} \\
   &= \sup_{w_j} \{t - tZ_jw_j - h(1- Z_jw_j)\} \\
    &= \sup_{w_j} \{t - tZ_jw_j - Z_jh(1- w_j) - (1-Z_j)h(1)\} \\
    &= t- tZ_jw_j^* - Z_jh(1-w_j^*) - (1-Z_j)h(1).
\end{align*}
Next we have that
\begin{align*}
   h^*_{j+n}(t) &= \sup_{s_{j+n}}\{ts_{j+n} - h(s_{j +n})\} \\
   &= \sup_{w_j} \{t- t(1-Z_j)w_j - h[1- (1-Z_j)w_j]\} \\
    &= t- t(1-Z_j)w_j^* - (1-Z_j)h(1- w_j^*) - Z_jh(1),
\end{align*}
where $w_j^*$ satisfies the first order conditions:
\begin{align*}
    -tZ_j + Z_jh'(1- w^*_j) &= 0  \\
    -t(1-Z_j) + (1-Z_j)h(1- w_j^*) &= 0 \\
   \implies \:\: 1- (h')^{-1}(t)  &= w^*_j.
\end{align*}
Then we have that
\begin{align*}
    h^*_j(t) &= t- tZ_j(1- (h')^{-1}(t))  - Z_jh(1 - 1 + (h')^{-1}(t)) - (1-Z_j)h(1) \\
     &= t- tZ_j\{1- (h')^{-1}(t)\}  - Z_jh\{(h')^{-1}(t)\} - (1-Z_j)h(1) \\ 
     &= -Z_j\{t - t(h')^{-1}(t) +h[(h')^{-1}(t)]- h(-1)\} - h(1) + t \\ \text{and} \\
     h^*_{j+n}(t) &= t- t(1-Z_j)\{1- (h')^{-1}(t)\} - (1-Z_j)h\{1-1+ (h')^{-1}(t)\} - Z_jh(1) \\
     &= t- t(1-Z_j)\{1- (h')^{-1}(t)\} - (1-Z_j)h\{(h')^{-1}(t)\} - Z_jh(1) \\
     &= -(1-Z_j)\{t- t(h')^{-1}(t) + h[(h')^{-1}(t)] - h(1)\} - h(1) + t.
\end{align*}
Let $\rho(t) = t- t(h')^{-1}(t) + h[(h')^{-1}(t)] - h(1)$. This gives
\begin{align*}
    h^*_j(t) &= -Z_j\rho(t) - h(1) + t \\
    h^*_{j+n}(t) &= -(1-Z_j)\rho(t) - h(1) + t.
\end{align*}
Now, note that
\begin{align*}
    \rho'(t) &= 1 - (h')^{-1}(t) - t\{(h')^{-1}(t)\}' + h'[(h')^{-1}(t)] \times \{(h')^{-1}(t)\}' \\
    &= 1 - (h')^{-1}(t) - t\{(h')^{-1}(t)\}'  + t\{(h')^{-1}(t)\}' \\
    &= 1 - (h')^{-1}(t) \\
    \implies \rho'(t) &= w_j^*.
\end{align*}
Thus, the dual formulation becomes
\begin{align*}
    \text{minimize}_{\lambda}& \:\:\: q(\lambda) \\
    \text{subject to}& \:\: \: \text{no constraint on}\:\: \lambda,
\end{align*}
where 
\begin{align*}
    q(\lambda) &= \frac1n\sum_{j=1}^n -Z_j\rho(Q_j^T\lambda) - (1-Z_j)\rho(Q_{j+n}^T\lambda) + Q_j^T\lambda + Q_{j+n}^T\lambda - 2nh(1) \\
    &= \frac1n\sum_{j=1}^n -Z_j\rho(Q_j^T\lambda) - (1-Z_j)\rho(Q_{j+n}^T\lambda) + Q_j^T\lambda + Q_{j+n}^T\lambda.
\end{align*}
Let $B^+(\bm{X}_j) = \{c_1(\bm{X}_j), \ldots, c_K(\bm{X}_j), 0_K,  g_1(\bm{X}_j), \ldots, g_L(\bm{X}_j)\}$ \\ and $B^-(\bm{X}_j) = \{0_K, c_1(\bm{X}_j), \ldots, c_K(\bm{X}_j), -g_1(\bm{X}_j), \ldots, -g_L(\bm{X}_j)\}$. Then, the primal solution $w_j^*$ satisfies
\begin{align*}
    w^*_j &= \rho'\{B^+(\bm{X}_j)^T\hat{\lambda}\} \:\:\text{for}\:\: Z_j = 1 \:\: ; \:\: j = 1, \ldots n \\
    w^*_j &= \rho'\{B^-(\bm{X}_j)^T\hat{\lambda}\} \:\:\text{for}\:\: Z_j = 0 \:\:; \:\:j = 1, \ldots n,
\end{align*}
where $\hat{\lambda}$ is the solution to the dual problem.
Let $\lambda = \begin{bmatrix}
           \bm{\alpha}_{0_{K \times 1}} \\
           \bm{\alpha}_{1_{K \times 1}} \\
           \bm{\gamma}_{L \times 1}
         \end{bmatrix} $ 
and let $c(\bm{X}_j) = \{c_1(\bm{X}_j), \ldots c_K(\bm{X}_j)\}$ and $g(\bm{X}_j) = \{g_1(\bm{X}_j), \ldots, g_L(\bm{X}_j)\}$. Then the dual formulation simplifies to
\begin{align*}
    \underset{\bm{\alpha}_0, \bm{\alpha}_1, \bm{\gamma}}{\textnormal{minimize}}& \:\:\: q(\bm{\alpha}_0, \bm{\alpha}_1, \bm{\gamma}) \\
    \text{subject to}& \:\: \: \text{no constraint on}\:\: \bm{\alpha}_0, \bm{\alpha}_1, \bm{\gamma},\\ \\
    \text{where}\: \:  q(\alpha_0, \alpha_1, \gamma) &= \frac1n \sum_{i=1}^n -Z_i\rho\{\bm{\alpha}_1^Tc(\bm{X}_i) + \bm{\gamma}^Tg(\bm{X}_i)\} - (1-Z_i)\rho\{\bm{\alpha}_0^Tc(\bm{X}_i) -\bm{\gamma}^Tg(\bm{X}_i)\} \\
     &\quad+ \bm{\alpha}_1^Tc(\bm{X}_i) + \bm{\gamma}^Tg(\bm{X}_i) +\bm{\alpha}_0^Tc(\bm{X}_i) -  \bm{\gamma}^Tg(\bm{X}_i)  \\
    &= \frac1n \sum_{i=1}^n -Z_i\rho\{\bm{\alpha}_1^Tc(\bm{X}_i) + \bm{\gamma}^Tg(\bm{X}_i)\} - (1-Z_i)\rho\{\bm{\alpha}_0^Tc(\bm{X}_i) -\bm{\gamma}^Tg(\bm{X}_i)\} \\
     &\quad+ \bm{\alpha}_1^Tc(\bm{X}_i)+\bm{\alpha}_0^Tc(\bm{X}_i)  
\end{align*}

and the primal solution $w_{i}^* = Z_iw^*_{1i} + (1-Z_i)w^*_{0i}$ satisfies $w^*_{1i} = \rho'\{\hat{\bm{\alpha}}_1^Tc(\bm{X}_i) + \hat{\bm{\gamma}}^Tg(\bm{X}_i)\} \:\:\text{and}\:\: w^*_{0i} = \rho'\{\hat{\bm{\alpha}}_0^Tc(\bm{X}_i) - \hat{\bm{\gamma}}^Tg(\bm{X}_i)\}$
where $\hat{\bm{\alpha}}_0, \hat{\bm{\alpha}}_1, \hat{\bm{\gamma}}$ is the solution to the dual problem.
\end{proof}

\subsection{Proposition \ref{thm:relaxed_pos}}

\begin{proof}

We use similar proof techniques to Appendix A of \cite{zhao_entropy_2017}. 
Let $B_{\epsilon}(\bm{x}^*) = \{\bm{x} : || \bm{x} - \bm{x}^*||_\infty \leq \epsilon \}$ for some $\epsilon > 0$ and let $\text{CH}\{\bm{X}_i\}_{Z_i = z}$ be the convex hull generated by $\{\bm{X}_i\}_{Z_i = z}$. For the weights to exist, we need 1) $0 < \text{Pr}\{Z_i | c(\bm{X}_i) = c(\bm{x})\} < 1$ for all $\bm{x}$ as shown by \citep{zhao_entropy_2017}; and 2) $\text{CH}\{\bm{X}_i\}_{Z_i = 1} \cap \text{CH}\{\bm{X}_i\}_{Z_i = 0} \neq \emptyset$. It remains to determine when $\text{CH}\{\bm{X}_i\}_{Z_i = 1} \cap \text{CH}\{\bm{X}_i\}_{Z_i = 0} \neq \emptyset$ holds. We show that if there exists $\bm{x}^*$ such that that $0 < P(Z_i|\bm{X}_i=\bm{x}^*) < 1$, a slightly stronger claim holds: $P\{\text{CH}\{\bm{X}_i\}_{Z_i = 1} \cap \text{CH}\{\bm{X}_i\}_{Z_i = 0} = B_{\epsilon}(\bm{x}^*)\} = P\{B_{\epsilon}(\bm{x}^*) \in \text{CH}\{\bm{X}_i\}_{Z_i = 1}, B_{\epsilon}(\bm{x}^*) \in \text{CH}\{\bm{X}_i\}_{Z_i = 0}\} \to 1$ as $n \to \infty$.

Let $R_i(\bm{x}^*) = i = 1, \ldots 3^p$ be the $3^p$ boxes centered at $\bm{x}^* + \frac{3}{2}\epsilon b$ where $b \in \mathbb{R}^p$ contains entries of $0,-1,1$. Then,  $P(B_{\epsilon}(\bm{x}^*) \in \text{CH}\{\bm{X}_i\}_{Z_i = 1}, B_{\epsilon}(\bm{x}^*) \in \text{CH}\{\bm{X}_i\}_{Z_i = 0}) \geq P(\exists \bm{X}_i \in R_i(\bm{x}^*), Z_i = 0, \exists \bm{X}_j \in R_j(\bm{x}^*), Z_j = 1, i,j = 1, \ldots, 3^p)$. 

Assume that for some $\bm{x}^*$, for all $\bm{x} \in B_{2\epsilon}(\bm{x}^*), \:\:\: 0 < P(Z_i|\bm{X}_i = \bm{x}) < 1$. 
This assumption implies that $\rho_1 = \min_{i} P(\bm{X} \in R_i(\bm{x}^*)|Z = 1) > 0$ and $\rho_0 = \min_{i} P(\bm{X} \in R_i(\bm{x}^*)|Z = 0) > 0$. Then we have the following,
\begin{align*}
    P(\exists \bm{X}_i \in R_i(\bm{x}^*), Z_i = 0, &\exists \bm{X}_j \in R_j(\bm{x}^*), Z_j = 1, i,j = 1, \ldots, 3^p) \\
    &= P(\exists \bm{X}_i \in R_i(\bm{x}^*), Z_i = 0, i = 1, \ldots, 3^p) \\
    &\quad\times P(\exists \bm{X}_j \in R_j(\bm{x}^*), Z_j = 1, j = 1, \ldots, 3^p) \\
    &\geq \{1 - \sum_{i=1}^{3^p} P(X \not\in R_i(\bm{x}^*)|Z=0)^{n}\} \\
    &\quad\times \{1 - \sum_{i=1}^{3^p} P(X \not\in R_j(\bm{x}^*)|Z=1)^{n}\} \\
    &= \{1 - 3^p(1-\rho_0)^n\} \{1 - 3^p(1-\rho_1)^n\} \\
    \lim_{n\to \infty} &\to 1,
\end{align*}
where the first equality holds because each $(\bm{X}_i, Z_i)$ is an iid sample. We see that letting $\epsilon \to 0$, the only assumption required  is the existence of some $\bm{x}^*$ such that $0 < P(Z_i|\bm{X}_i=\bm{x}^*) < 1$.

\end{proof}

\subsection{Proposition \ref{prop:supp}}
\begin{proposition}
Suppose Assumptions \ref{strong-ignorability}, either \ref{positivity} or \ref{relaxed-positivity}, and \ref{balanc_func_regularity} hold and $\rho$ has a continuous first derivative. Let $\bm{\theta} = (\bm{\alpha}_0, \bm{\alpha}_1, \bm{\gamma}).$ Then, \\
i) $\hat{\bm{\theta}} \to_p \tilde{\bm{\theta}}$ where $\tilde{\bm{\theta}} = (\tilde{\bm{\alpha}}_0, \tilde{\bm{\alpha}}_1, \tilde{\bm{\gamma}}).$ is the unique minimizer of $E[-e(\bm{X})\rho\{\bm{\alpha}_1^Tc(\bm{X}) + \bm{\gamma}^Tg(\bm{X})\} - \{1-e(\bm{X})\}\rho\{\bm{\alpha}_0^Tc(\bm{X}) - \bm{\gamma}^Tg(\bm{X})\}  + \bm{\alpha}_1^Tc(\bm{X}) + \bm{\alpha}_0^Tc(\bm{X})]$ and  \\
ii) $w^*_1(\bm{X}) = \rho'\{\hat{\bm{\alpha}}^T_1c(\bm{X}) + \hat{\bm{\gamma}}^Tg(\bm{X})\} \to_p \tilde{w}_1(\bm{X}) = \rho'\{\tilde{\bm{\alpha}}^T_1c(\bm{X}) + \tilde{\bm{\gamma}}^Tg(\bm{X})\}$ and $w^*_0(\bm{X}) = \rho'\{\hat{\bm{\alpha}}_0^Tc(\bm{X}) +-\hat{\bm{\gamma}}^Tg(\bm{X})\} \to_p \tilde{w}_0(\bm{X}) = \rho'\{\tilde{\bm{\alpha}}^T_0c(\bm{X}) - \tilde{\bm{\gamma}}^Tg(\bm{X})\}.$
\label{prop:supp}
\end{proposition}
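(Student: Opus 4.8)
The plan is to treat this as a standard consistency result for an extremum (M-)estimator, where $\hat{\bm{\theta}} = (\hat{\bm{\alpha}}_0, \hat{\bm{\alpha}}_1, \hat{\bm{\gamma}})$ minimizes the empirical dual objective $q_n(\bm{\theta}) = \frac1n\sum_{i=1}^n m(\bm{\theta}; Z_i, \bm{X}_i)$ of Equation \eqref{eq:og_loss}, while $\tilde{\bm{\theta}}$ minimizes its population analogue. First I would note that taking conditional expectations given $\bm{X}$ and using $E[Z \mid \bm{X}] = e(\bm{X})$ shows that the population objective $Q(\bm{\theta}) = E[m(\bm{\theta}; Z, \bm{X})]$ is exactly the function displayed in part (i), so the two objectives are correctly matched. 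The task then reduces to showing $\hat{\bm{\theta}} \to_p \tilde{\bm{\theta}}$ and transferring this to the weights via continuity.

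The main leverage is that both $q_n$ and $Q$ are convex in $\bm{\theta}$: since $\rho$ is strictly concave (Theorem \ref{thm:dual}), each $-\rho(\cdot)$ term is convex and the remaining terms are linear. I would therefore invoke the convexity lemma for argmin consistency (as in Hjort--Pollard / Andersen--Gill): if $Q$ is convex with a unique minimizer $\tilde{\bm{\theta}}$ and $q_n(\bm{\theta}) \to_p Q(\bm{\theta})$ pointwise for each fixed $\bm{\theta}$, then the minimizers converge, $\hat{\bm{\theta}} \to_p \tilde{\bm{\theta}}$. This reduces matters to (a) pointwise convergence and (b) existence and uniqueness of $\tilde{\bm{\theta}}$, avoiding any need to establish uniform convergence directly. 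Existence of $\hat{\bm{\theta}}$ itself on an event of probability tending to one is supplied by Proposition \ref{thm:relaxed_pos} under Assumption \ref{relaxed-positivity}.

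For pointwise convergence I would apply the weak law of large numbers: Assumption \ref{balanc_func_regularity} bounds $c(\bm{X})$ and $g(\bm{X})$ almost surely, and with $\rho$ continuously differentiable the integrand $m(\bm{\theta}; Z, \bm{X})$ is integrable for each fixed $\bm{\theta}$, so $q_n(\bm{\theta}) \to_p Q(\bm{\theta})$. For uniqueness I would examine the Hessian of $Q$: writing $u_1 = \bm{\alpha}_1^Tc(\bm{X}) + \bm{\gamma}^Tg(\bm{X})$ and $u_0 = \bm{\alpha}_0^Tc(\bm{X}) - \bm{\gamma}^Tg(\bm{X})$, the Hessian equals $E[-e(\bm{X})\rho''(u_1)\,\nabla u_1 \nabla u_1^T - \{1-e(\bm{X})\}\rho''(u_0)\,\nabla u_0 \nabla u_0^T]$. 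Strict concavity gives $-\rho'' > 0$, so each term is positive semidefinite; positive definiteness then follows because the full-rank condition on $B(\bm{X}) = [c(\bm{X}), g(\bm{X})]$ in Assumption \ref{balanc_func_regularity} rules out any nonzero direction annihilating both $u_1$ and $u_0$ almost surely. Existence of a finite minimizer (coercivity) I would tie to the positivity content of Assumption \ref{positivity} or \ref{relaxed-positivity}, using the same support/convex-hull reasoning that underlies the finite-sample existence argument in Proposition \ref{thm:relaxed_pos}.

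Finally, part (ii) follows from part (i) by the continuous mapping theorem: since $\rho'$ is continuous and $c(\bm{X}), g(\bm{X})$ are bounded, the maps $\bm{\theta} \mapsto \rho'\{\bm{\alpha}_1^Tc(\bm{X}) + \bm{\gamma}^Tg(\bm{X})\}$ and $\bm{\theta} \mapsto \rho'\{\bm{\alpha}_0^Tc(\bm{X}) - \bm{\gamma}^Tg(\bm{X})\}$ are continuous, so $w_1^*(\bm{X}) \to_p \tilde{w}_1(\bm{X})$ and $w_0^*(\bm{X}) \to_p \tilde{w}_0(\bm{X})$. I expect the main obstacle to be the rigorous verification of existence and uniqueness of the population minimizer $\tilde{\bm{\theta}}$ — in particular establishing coercivity of $Q$ under the relaxed positivity assumption, since the growth of $\rho$ together with the sign structure coupling $\bm{\gamma}$ across the two treatment terms must be handled with care. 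The pointwise convergence and continuous-mapping steps are comparatively routine given the boundedness in Assumption \ref{balanc_func_regularity}.
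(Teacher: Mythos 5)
Your proposal is correct in outline and shares the paper's overall skeleton: identify the population dual objective by conditioning on $\bm{X}$, establish uniqueness of its minimizer via a positive-definite Hessian built from the strict concavity of $\rho$ and the full-rank condition in Assumption \ref{balanc_func_regularity}, conclude $\hat{\bm{\theta}} \to_p \tilde{\bm{\theta}}$, and pass to the weights by the continuous mapping theorem. The one genuine methodological difference is the device used to convert convergence of objectives into convergence of minimizers: you invoke the convexity lemma (Hjort--Pollard style), so that \emph{pointwise} convergence $q_n(\bm{\theta}) \to_p Q(\bm{\theta})$ plus uniqueness of the population argmin suffices, whereas the paper establishes \emph{uniform} almost-sure convergence of $\hat{l}$ to $l$ on compact subsets (via a uniform LLN and the boundedness of the $b_j$) and then runs an explicit $\epsilon$-ball argument showing the sample minimizer eventually lies within any radius $r$ of $\tilde{\bm{\theta}}$. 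Your route is cleaner in that it exploits the convex structure to avoid uniform convergence entirely; the paper's route is the generic Wald-type argument and does not rely on convexity of the empirical objective. The step you flag as the main obstacle --- existence of a finite population minimizer --- is exactly where the paper does the real work: it takes a minimizing sequence $\bm{\theta}^*(t)$ and argues by contradiction that $\|\bm{\theta}^*\|$ must stay bounded, since nonsingularity of the covariance of $b(\bm{X})$ forces the linear indices $\bm{\alpha}_1^Tc(\bm{X}) + \bm{\gamma}^Tg(\bm{X})$ and $\bm{\alpha}_0^Tc(\bm{X}) - \bm{\gamma}^Tg(\bm{X})$ to diverge with positive probability, and strict concavity of $\rho$ then drives $l(\bm{\theta}^*)$ to $+\infty$, contradicting convergence to the finite infimum. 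If you adopt that argument to close the coercivity gap (and note that the sign coupling of $\bm{\gamma}$ across the two terms is handled automatically because divergence of \emph{either} index suffices), your proof is complete and, in the argmin-convergence step, arguably tighter than the paper's.
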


We use similar proof techniques to the proof of proposition 1 in \cite{kallberg_large_2023}. First, we prove the following lemma.
\begin{lemma}
    Given Assumption \ref{balanc_func_regularity}, $l(\bm{\theta}) = E[-e(\bm{X})\rho\{\bm{\alpha}_1^Tc(\bm{X}_i) + \bm{\gamma}^Tg(\bm{X})\} - \{1-e(\bm{X})\}\rho\{\bm{\alpha}_0^Tc(\bm{X}) - \bm{\gamma}^Tg(\bm{X})\}  + \bm{\alpha}_1^Tc(\bm{X}) + \bm{\alpha}_0^Tc(\bm{X})]$ has a unique minimizer $\tilde{\bm{\theta}}$ that satisfies 
    $\inf_{\bm{\theta}} l(\bm{\theta}) = l(\tilde{\bm{\theta}})$.
    \label{lemma}
\end{lemma}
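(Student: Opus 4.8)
The plan is to show that $l$ is a strictly convex, coercive function of $\bm{\theta} = (\bm{\alpha}_0, \bm{\alpha}_1, \bm{\gamma})$, so that its infimum is attained at a single point. Writing $u_1(\bm{\theta}; \bm{X}) = \bm{\alpha}_1^Tc(\bm{X}) + \bm{\gamma}^Tg(\bm{X})$ and $u_0(\bm{\theta}; \bm{X}) = \bm{\alpha}_0^Tc(\bm{X}) - \bm{\gamma}^Tg(\bm{X})$, the integrand of $l$ is the sum of $-e(\bm{X})\rho(u_1)$, $-\{1-e(\bm{X})\}\rho(u_0)$, and the affine term $\bm{\alpha}_1^Tc(\bm{X}) + \bm{\alpha}_0^Tc(\bm{X})$. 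Since $\rho$ is concave and $e(\bm{X}), 1-e(\bm{X}) \geq 0$, each of the first two terms is a convex function of $\bm{\theta}$ (a concave function composed with an affine map and scaled by a nonnegative weight), and the affine term is convex; hence $l$, being an expectation of convex functions, is convex. Because $\rho \in C^1$ and Assumption \ref{balanc_func_regularity} bounds $|b_j(\bm{X})| < M$, I can differentiate under the expectation, so $l$ is $C^2$ with gradient and Hessian given by the corresponding expectations.

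For strict convexity I would examine the Hessian in an arbitrary direction $\bm{d} = (\bm{d}_{\alpha_0}, \bm{d}_{\alpha_1}, \bm{d}_{\gamma})$. Setting $a_1(\bm{X}) = \bm{d}_{\alpha_1}^Tc(\bm{X}) + \bm{d}_{\gamma}^Tg(\bm{X})$ and $a_0(\bm{X}) = \bm{d}_{\alpha_0}^Tc(\bm{X}) - \bm{d}_{\gamma}^Tg(\bm{X})$, the second directional derivative equals $E[-e(\bm{X})\rho''(u_1)a_1(\bm{X})^2 - \{1-e(\bm{X})\}\rho''(u_0)a_0(\bm{X})^2]$. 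Strict concavity of $\rho$ gives $\rho'' < 0$, so the integrand is nonnegative and vanishes only where $e(\bm{X})a_1(\bm{X})^2 = 0$ and $\{1-e(\bm{X})\}a_0(\bm{X})^2 = 0$. I would then argue that the second derivative is zero only for $\bm{d} = 0$: on the set where $e(\bm{X}) \in (0,1)$ (full measure under Assumption \ref{positivity}, and of positive probability by the existence of $\bm{x}^*$ in Assumption \ref{relaxed-positivity}), vanishing forces $a_1 = a_0 = 0$, and the full-rank condition of Assumption \ref{balanc_func_regularity} then forces $(\bm{d}_{\alpha_1}, \bm{d}_{\gamma}) = 0$ and $(\bm{d}_{\alpha_0}, \bm{d}_{\gamma}) = 0$, i.e. $\bm{d} = 0$. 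This yields a positive-definite Hessian, hence strict convexity and at most one minimizer.

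For existence I would show $l$ is coercive, i.e. $l(\bm{\theta}) \to \infty$ as $\|\bm{\theta}\| \to \infty$, so that the continuous convex $l$ attains its infimum; together with strict convexity this minimizer $\tilde{\bm{\theta}}$ is unique and satisfies $\inf_{\bm{\theta}} l(\bm{\theta}) = l(\tilde{\bm{\theta}})$. Coercivity in the $\bm{\alpha}_0, \bm{\alpha}_1$ directions is driven by the affine terms $\bm{\alpha}_1^Tc + \bm{\alpha}_0^Tc$ together with the linear lower bound on the convex function $-\rho$, while coercivity in the $\bm{\gamma}$ direction comes from the strict curvature established above; I would assemble these into boundedness of the sublevel sets, following the technique of the proof of Proposition 1 in \cite{kallberg_large_2023}.

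The main obstacle is the strict-convexity and coercivity step under the relaxed positivity Assumption \ref{relaxed-positivity} rather than full positivity. When positivity is violated, the weights $e(\bm{X})$ and $1-e(\bm{X})$ vanish on a positive-measure region, so the Hessian integrand is annihilated there and one cannot invoke the full-rank condition on the whole support; the delicate point is to show that $a_1$ and $a_0$ are nonetheless forced to vanish identically (and that the linear terms still dominate in the $\bm{\gamma}$ direction) using only the restriction of $\{c(\bm{X}), g(\bm{X})\}$ to the overlap region $\{0 < e(\bm{X}) < 1\}$ guaranteed by Assumption \ref{relaxed-positivity}. Handling this interaction between the overlap region and the full-rank condition is where the argument requires the most care.
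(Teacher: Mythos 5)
Your proposal follows essentially the same route as the paper's proof: existence via boundedness of minimizing sequences (your coercivity step is the same thing in different clothing), and uniqueness via strict convexity established through a positive-definite Hessian, both resting on strict concavity of $\rho$ and the full-rank condition in Assumption \ref{balanc_func_regularity}. Your directional second derivative
$E[-e(\bm{X})\rho''(u_1)a_1(\bm{X})^2 - \{1-e(\bm{X})\}\rho''(u_0)a_0(\bm{X})^2]$
is in fact the correct block form of the quadratic form $\bm{d}^T H(\bm{\theta})\bm{d}$; the paper instead writes the Hessian as though it were a single weighted Gram matrix $B^T\mathrm{diag}(c_i)B$ with scalar weights $c_i = -e(\bm{X}_i)\rho''(u_1) - \{1-e(\bm{X}_i)\}\rho''(u_0) > 0$, which conflates the $\bm{\alpha}_1$ and $\bm{\alpha}_0$ blocks (each of which is weighted by only one of $e$ and $1-e$, not their sum). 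Consequently, the obstacle you flag at the end is genuine and is not resolved by the paper either: where $e(\bm{X}) = 0$ the $\bm{\alpha}_1$ block of the integrand is annihilated, so strict positive-definiteness requires that the restriction of $\{c(\bm{X}), g(\bm{X})\}$ to the overlap region $\{0 < e(\bm{X}) < 1\}$ still have a nonsingular second-moment matrix, a condition strictly stronger than the stated full-rank assumption on the whole support and one that Assumption \ref{relaxed-positivity} alone does not deliver (it only guarantees the overlap region has positive probability). If you want a complete argument under relaxed positivity, you should add this restricted nondegeneracy condition explicitly; with it, your outline goes through, and without it neither your argument nor the paper's establishes uniqueness.
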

\begin{proof}

    Let $\bm{\theta}^* = \bm{\theta}^*(t)$, $t \geq 1$, be a sequence of random vectors where $\lim_{t \to \infty} l(\bm{\theta}^*) = l(\tilde{\bm{\theta}}) < \infty$. Without loss of generality assume $\bm{\theta}^*/||\bm{\theta}^*|| \to_p \bm{a}$ where $\bm{a}$ is a constant vector. If $||\bm{\theta}^*||$ is bounded, $\bm{\theta}^* \to_p \tilde{\bm{\theta}}$ for some $\tilde{\bm{\theta}}$ where $\inf_{\bm{\theta}} l(\bm{\theta}) = l(\tilde{\bm{\theta}})$ by the continuity of $\rho$ such that $\tilde{\bm{\theta}}$ is a minimum of $l(\bm{\theta})$. The solution $\tilde{\bm{\theta}}$ is unique if $l(\bm{\theta})$ is strictly convex. Thus, it is sufficient to show that $||\bm{\theta}^*||$ is bounded and $l(\bm{\theta})$ is strictly convex.
    
    We must first show that $||\bm{\theta}^*||$ is bounded. We will use proof by contradiction and assume that $||\bm{\theta}^*|| \to \infty$. Since the covariance matrix of $b(\bm{X}) = \{c(\bm{X}), g(\bm{X})\}$ is non-singular by Assumption \ref{balanc_func_regularity}, $\bm{\alpha}_1^{T}c(\bm{X}_i) + \bm{\gamma}^{T}g(\bm{X})\} \to \pm \infty$ and $\bm{\alpha}_0^{T}c(\bm{X}_i) - \bm{\gamma}^{T}g(\bm{X})\} \to \pm \infty$ with non-zero probability. Then since $\rho$ is strictly concave, this implies that $-\rho\{\bm{\alpha}_1^{T}c(\bm{X}_i) + \bm{\gamma}^{T}g(\bm{X})  \to \infty$ and $-\rho\{\bm{\alpha}_0^{T}c(\bm{X}_i) -\bm{\gamma}^{T}g(\bm{X})  \to \infty$. However, this implies that $\lim_{t \to \infty} l(\bm{\theta}^*) \to \infty$ with non-zero probability, a contradiction to $\lim_{t \to \infty} l(\bm{\theta}^*) = l(\tilde{\bm{\theta}})$. Thus, $||\bm{\theta}^*||$ is bounded and $\bm{\theta}^* \to_p \tilde{\bm{\theta}}$ for some $\tilde{\bm{\theta}}$ such that $\inf_{\bm{\theta}} l(\bm{\theta}) = l(\tilde{\bm{\theta}})$. It remains to show that this solution is unique by proving the strict convexity of $l(\bm{\theta})$. We will show that the Hessian of $l(\bm{\theta})$, $\bm{H}(\bm{\theta})$, is positive-definite. Now $\bm{H}(\bm{\theta})$ is given by the following (row, column), 
    \begin{align*}
        \bm{H}_{k_1, k_2}(\bm{\theta}) &= E[-e(\bm{X})\rho''\{\bm{\alpha}_1^{T}c(\bm{X}) + \bm{\gamma}^{T}g(\bm{X})\}c_{k_1}(\bm{X})c_{k_2}(\bm{X}) \\
        &\quad- \{1-e(\bm{X})\}\rho''\{\bm{\alpha}_0^{T}c(\bm{X}) - \bm{\gamma}^{T}g(\bm{X})\}c_{k_1}(\bm{X})c_{k_2}(\bm{X})], \\[6pt]
        \bm{H}_{l_1, l_2}(\bm{\theta}) &= E[-e(\bm{X})\rho''\{\bm{\alpha}_1^{T}c(\bm{X}) + \bm{\gamma}^{T}g(\bm{X})\}g_{l_1}(\bm{X})g_{l_2}(\bm{X}) \\
        &\quad- \{1-e(\bm{X})\}\rho''\{\bm{\alpha}_0^{T}c(\bm{X}_i) - \bm{\gamma}^{T}g(\bm{X})\}g_{l_1}(\bm{X})g_{l_2}(\bm{X})], \\[6pt]
        \bm{H}_{k_1, l_1}(\bm{\theta}) &= E[-e(\bm{X})\rho''\{\bm{\alpha}_1^{T}c(\bm{X}_i) + \bm{\gamma}^{T}g(\bm{X})\}c_{k_1}(\bm{X})g_{l_1}(\bm{X}) \\
        &\quad- \{1-e(\bm{X})\}\rho''\{\bm{\alpha}_0^{T}c(\bm{X}_i) - \bm{\gamma}^{T}g(\bm{X})\}c_{k_1}(\bm{X})g_{l_1}(\bm{X})].
    \end{align*}
     We can construct an estimator for $\bm{H}(\bm{\theta})$, $\hat{\bm{H}}(\bm{\theta})$ using the sample equivalents of $\bm{H}_{k_1, k_2}(\bm{\theta}), \bm{H}_{l_1, l_2}(\bm{\theta}), $ and $\bm{H}_{k_1, l_1}(\bm{\theta})$. Let $c_i = -e(\bm{X}_i)\rho''\{\bm{\alpha}_1^{T}c(\bm{X}_i) + \bm{\gamma}^{T}g(\bm{X}_i)\} + -\{1-e(\bm{X_i})\}\rho''\{\bm{\alpha}_0^{T}c(\bm{X}_i) - \bm{\gamma}^{T}g(\bm{X_i})\}$. By the strict concavity of $\rho$, $c_i > 0$. Let $v^* = [\sqrt{c_1}, \ldots, \sqrt{c_n}]$. Let $v$ be any vector of length $n$. Then $v^T\hat{\bm{H}}(\bm{\theta})v = v^T \odot v^{*^T} B(\bm{X})^TB(\bm{X}) v \odot v^* > 0$ since $B(\bm{X})$ is full rank by Assumption \ref{balanc_func_regularity}. Then $\hat{H}$ is positive definite and  since $E[\hat{\bm{H}}(\bm{\theta})] = \bm{H}(\bm{\theta})$, $\hat{\bm{H}}(\bm{\theta}) \to_p \bm{H}(\bm{\theta})$, $\bm{H}(\bm{\theta})$ is positive definite and $l(\bm{\theta})$ is strictly convex as desired.
\end{proof}
Now we will prove Proposition \ref{prop:supp}.
\begin{proof}

By Theorem 6.3 in \cite{boos_essential_2013} and the boundedness of $b_j(\bm{X})$ (Assumption \ref{balanc_func_regularity}), $\hat{l}(\bm{\theta}) \to l(\bm{\theta})$ uniformly almost surely for all $\bm{\theta}$ in every compact subset of $\mathbb{R}^{2K+L}$ where $\hat{l}(\bm{\theta}) = \frac1n \sum_{i=1}^n -Z_i\rho\{\bm{\alpha}_1^Tc(\bm{X}_i) + \bm{\gamma}^Tg(\bm{X}_i)\} - (1-Z_i)\rho\{\bm{\alpha}_0^Tc(\bm{X}_i) - \bm{\gamma}^Tg(\bm{X}_i)\}  +\bm{\alpha}_1^Tc(\bm{X}_i) + \bm{\alpha}_0^Tc(\bm{X}_i)$. By Lemma 6, we know the Hessian of $\hat{l}(\bm{\theta})$ is positive definite and thus that there is a unique solution to $\hat{l}(\bm{\theta})$. Then, let $R_{m, r}$ be the indicator that the solution is within distance $r > 0$ of $\tilde{\bm{\theta}}$ for sample size $n = m$. We want to show that $\{R_{m, r}\}_{m \geq 1}$ has finitely many zeros such that $\sum_{m=1}^\infty (1-R_{m, r}) < \infty$ for an arbitrary $r$; this is equivalent to $\lim_{n \to \infty} P(|\hat{\bm{\theta}} - \tilde{\bm{\theta}}| > \epsilon) = 0$ for any $\epsilon > 0$ as desired. Let $\Delta = \{\bm{\theta}: ||\bm{\theta} - \tilde{\bm{\theta}}|| \leq r\}$ and  $\Delta^* = \{\bm{\theta}: ||\bm{\theta} - \tilde{\bm{\theta}}|| =r\}$. Then, $\hat{l}(\bm{\theta}) \to l(\bm{\theta})$ implies that there exists a number $n^*$ such that $|\hat{l}(\bm{\theta}) - l(\bm{\theta})| < r/2$ and  $|\hat{l}(\tilde{\bm{\theta}}) - l(\tilde{\bm{\theta}})| < r/2$ almost surely for $n \geq n^*$. Since $\tilde{\bm{\theta}}$ minimizes $l(\bm{\theta})$, $\hat{l}(\bm{\theta})$ has a minima in $\Delta$ and a solution for $n \geq n^*$. This implies that $R_{m, r} = 1$ for $m \geq n^*$ and thus that $\sum_{m=1}^\infty (1-R_{m, r}) < \infty$. Then, $\hat{\bm{\theta}} \to_p \tilde{\bm{\theta}}$ where $\tilde{\bm{\theta}}$ is the unique minimizer of $l(\bm{\theta})$ by Lemma 6.

Then since $\rho'$ is continuous, by the continuous mapping theorem we have that $w^*_1(\bm{X}) = \rho'\{\hat{\bm{\alpha}}^T_1c(\bm{X}) + \hat{\bm{\gamma}}^Tg(\bm{X})\} \to_p \tilde{w}_1(\bm{X}) = \rho'\{\tilde{\bm{\alpha}}^T_1c(\bm{X}) + \tilde{\bm{\gamma}}^Tg(\bm{X})\}$ and $w^*_0(\bm{X}) = \rho'\{\hat{\bm{\alpha}}_0^Tc(\bm{X}) -\hat{\bm{\gamma}}^Tg(\bm{X})\} \to_p \tilde{w}_0(\bm{X}) = \rho'\{\tilde{\bm{\alpha}}^T_0c(\bm{X}) - \tilde{\bm{\gamma}}^Tg(\bm{X})\}.$
\end{proof}

\subsection{Theorem \ref{theorem:db}}
\begin{proof}{(1)}

Assume that the true models are $\{e(\bm{x})\}^{-1} = \rho'\{\tilde{\bm{\alpha}}_1^{T}c(\bm{x}) + \tilde{\bm{\gamma}}^{T}g(\bm{x})\} $ and  $\{1-e(\bm{x})\}^{-1} = \rho'\{\tilde{\bm{\alpha}}_0^{T}c(\bm{x}) - \tilde{\bm{\gamma}}^{T}g(\bm{x})\}$. Then given Assumptions \ref{strong-ignorability} and \ref{positivity},
\begin{align*}
    \tau &= E\left[\frac{Z_iY_i}{e(\bm{X}_i)}\right] - E\left[\frac{(1-Z_i)Y_i}{1-e(\bm{X}_i)}\right] \\[6pt]
    &= E[\rho'\{\tilde{\bm{\alpha}}_1^{T}c(\bm{x}) + \tilde{\bm{\gamma}}^{T}g(\bm{x})\}Z_iY_i] - E[\rho'\{\tilde{\bm{\alpha}}_0^{T}c(\bm{x}) - \tilde{\bm{\gamma}}^{T}g(\bm{x})\}(1-Z_i)Y_i].
\end{align*}
Further,
\begin{align*}
    \tau_{w^*} &= \sum_{i=1}^n w^*(\bm{X})Z_iY_i - \sum_{i=1}^n w^*(\bm{X})(1-Z_i)Y_i \\
    &= \sum_{i=1}^n \rho'\{\hat{\bm{\alpha}}^T_1c(\bm{X}) + \hat{\bm{\gamma}}^Tg(\bm{X})\} Z_iY_i - \sum_{i=1}^n \rho'\{\hat{\bm{\alpha}}_0^Tc(\bm{X}) -\hat{\bm{\gamma}}^Tg(\bm{X})\}(1-Z_i)Y_i 
\end{align*}
Then, given Proposition \ref{prop:supp}, Assumption \ref{balanc_func_regularity}, and the assumption that $\rho''\{\cdot\}$ is continuous over $\mathbb{R}$, the law of large numbers for averages with estimated parameters applies (\cite{boos_essential_2013} Theorem 7.3) and
\begin{align*}
    \sum_{i=1}^n \rho'\{\hat{\bm{\alpha}}^T_1c(\bm{X}) + \hat{\bm{\gamma}}^Tg(\bm{X})\} Z_iY_i &\to_p E[\rho'\{\tilde{\bm{\alpha}}_1^{T}c(\bm{x}) + \tilde{\bm{\gamma}}^{T}g(\bm{x})\}Z_iY_i], \\
    \sum_{i=1}^n \rho'\{\hat{\bm{\alpha}}_0^Tc(\bm{X}) -\hat{\bm{\gamma}}^Tg(\bm{X})\}(1-Z_i)Y_i &\to_p E[\rho'\{\tilde{\bm{\alpha}}_0^{T}c(\bm{x}) - \tilde{\bm{\gamma}}^{T}g(\bm{x})\}(1-Z_i)Y_i].
\end{align*}
It follows directly that $\tau_{w^*} \to_p \tau.$
\end{proof}

\begin{proof}{(2)}
    
Assume $\mu_1(\bm{x}) = \bm{\beta}_1^Tc(\bm{x}) + \bm{\lambda}^Tg(\bm{x})$ and $\mu_0(\bm{x}) = \bm{\beta}_0^Tc(\bm{x}) + \bm{\lambda}^Tg(\bm{x})$ for some $\bm{\beta}_z \in \mathbb{R}^K$ and $\bm{\lambda} \in \mathbb{R}^L$. We will use a similar proof technique by \cite{zhao_entropy_2017} to show that augmenting the our proposed estimator with estimated outcome regressions does not change the estimator. Let $\hat{\mu}_0(\bm{x})$,  $\hat{\mu}_1(\bm{x})$, and $\hat{e}(\bm{x})$ be outcome regression and propensity score estimates. Then the standard doubly robust estimator for the ATE is,
\begin{equation*}
    \hat{\tau}_{DR} = \frac1{n_1} \sum_{Z_i =1}\left[\frac{Y_i - \hat{\mu}_1(\bm{X}_i)}{\hat{e}(\bm{X}_i)} -  \hat{\mu}_1(\bm{X}_i)\right] - \frac1{n_0} \sum_{Z_i =1}\left[\frac{Y_i - \hat{\mu}_0(\bm{X}_i)}{1 - \hat{e}(\bm{X}_i)} -  \hat{\mu}_0(\bm{X}_i)\right] 
\end{equation*}
Now consider using $w^*$ to estimate the propensity score. Then,
\begin{align*}
    \hat{\tau}_{DR} - \hat{\tau}_{w^*} &= -\sum_{Z_i =1} w^*(\bm{X}_i)\hat{\mu}_1(\bm{X}_i) + \frac1n \sum_i \hat{\mu}_1(\bm{X}_i)  + \sum_{Z_i =0} w(\bm{X}_i)^*\hat{\mu}_0(\bm{X}_i)- \frac1n \sum_i \hat{\mu}_0(\bm{X}_i) \\
    &= - \sum_{Z_i =1} w^*(\bm{X}_i)[\hat{\bm{\beta}}_1^Tc(\bm{X}_i) + \hat{\bm{\lambda}}^Tg(\bm{X})_i] + \frac1n \sum_i \hat{\bm{\beta}}_1^Tc(\bm{X}_i) + \hat{\bm{\lambda}}^Tg(\bm{X})_i \\
   &\quad+ \sum_{Z_i =0} w^*(\bm{X}_i)[\hat{\bm{\beta}}_0^Tc(\bm{X}_i) + \hat{\bm{\lambda}}^Tg(\bm{X})_i] - \frac1n \sum_i \hat{\bm{\beta}}_0^Tc(\bm{X}_i) + \hat{\bm{\lambda}}^Tg(\bm{X})_i \\
   &= \hat{\bm{\beta}}_1^T [\sum_i c(\bm{X}_i) - \sum_i w^*(\bm{X}_i)Z_ic(\bm{X}_i)] \\
   &\quad-  \hat{\bm{\beta}}_0^T [\sum_i c(\bm{X}_i) - \sum_i w^*(\bm{X}_i)(1-Z_i)c(\bm{X}_i)] \\
   &\quad+ \hat{\bm{\lambda}}^T[\sum_i w^*(\bm{X}_i)(1-Z_i)g(\bm{X}_i) - \sum_i w^*(\bm{X}_i)Z_ig(\bm{X}_i)] \\
   &= 0,
\end{align*}
and $\hat{\tau}_{DR} = \hat{\tau}_{w^*}$. Since $\hat{\tau}_{DR} \to_p \tau$ \citep{bang_doubly_2005} we also have $ \hat{\tau}_{w^*} \to_p \tau$.
\end{proof}

\subsection{Corollary \ref{coro:consistency}}
\begin{proof}
Assume $\mu_1(\bm{x}) = \bm{\beta}_1^Tc(\bm{x}) + \bm{\lambda}_1^Tg(\bm{x})$ and $\mu_0(\bm{x}) = \bm{\beta}_0^Tc(\bm{x}) + \bm{\lambda}_0^Tg(\bm{x})$ for some $\bm{\beta}_z \in \mathbb{R}^K$ and $\bm{\lambda}_z \in \mathbb{R}^L$. Then, given Proposition \ref{prop:supp}, Assumption \ref{balanc_func_regularity}, and the assumption that $\rho''\{\cdot\}$ is continuous over $\mathbb{R}$, the law of large numbers for averages with estimated parameters applies (\cite{boos_essential_2013} Theorem 7.3) and
\begin{align*}
    \frac1n \sum_{i=1}^n w^*(\bm{X})Z_iY_i &\to_p E[\tilde{w}(\bm{X}_i)Z_iY_i] \\
    &= E[\tilde{w}(\bm{X}_i)Z_i\{\bm{\beta}_1^Tc(\bm{X}_i) + \bm{\lambda}_1^Tg(\bm{X}_i)\}] \\
    &= E[\bm{\beta}_1^Tc(\bm{X}_i) + \tilde{w}(\bm{X}_i)Z_i\bm{\lambda}_1^Tg(\bm{X}_i)] \\
    &= E[\bm{\beta}_1^Tc(\bm{X}_i) + \bm{\lambda}_1^T\tilde{w}(\bm{X}_i)e(\bm{X}_i)g(\bm{X}_i)] \\
    &= E[E[Y_i(1)|c(\bm{X}_i), \tilde{w}(\bm{X}_i)e(\bm{X}_i)g(\bm{X}_i)]] \\
    &= E[E[Y_i|Z_i=1, c(\bm{X}_i), \tilde{w}(\bm{X}_i)e(\bm{X}_i)g(\bm{X}_i)]]. 
\end{align*}
The second equality holds by $\tilde{w}$ satisfying the first order conditions of the population version of Equation \eqref{eq:og_loss}. Similarly, $\frac1n \sum_{i=1}^n w^*(\bm{X})(1-Z_i)Y_i \to E[E[Y_i|Z_i=0, c(\bm{X}_i), \tilde{w}(\bm{X}_i)\{1-e(\bm{X}_i)\}g(\bm{X}_i)]]$. Then $\hat{\tau}_{w^*} \to \tau_{g, \tilde{w}}$ as desired.
\end{proof}

\subsection{Theorem \ref{theorem:semipar_eff}}
\begin{proof}

We can decompose $\hat{\tau}_{w^*} - \tau$ into the following:
\begin{align*}
    \hat{\tau}_{w^*} - \tau &= \frac1n \sum_{i=1}^n w^*(\bm{X}_i)Z_iY_i - w^*(\bm{X}_i)(1-Z_i)Y_i - \tau \\
    &= \frac1n\sum_{i=1}^n w^*_1(\bm{X}_i)Z_i\{Y_i - \mu_1(\bm{X}_i)\} + \frac1n\sum_{i=1}^n (w^*_1(\bm{X}_i)Z_i -1)\mu_1(\bm{X}_i) \\
    &\quad- \frac1n\sum_{i=1}^n w^*_0(\bm{X}_i)(1-Z_i)\{Y_i - \mu_0(\bm{X}_i)\} + \frac1n\sum_{i=1}^n \{w^*_0(\bm{X}_i)(1-Z_i) -1\}\mu_0(\bm{X}_i) \\
    &\quad+ \frac1n\sum_{i=1}^n \tau(\bm{X}_i) - \tau \\
    &= \frac1n \sum_{i=1}^n S_i + R_1 + R_2,
\end{align*}
where
\begin{align*}
    S_i &= \frac{Z_i}{e(\bm{X}_i)}\{Y_i - \mu_1(\bm{X}_i)\} - \frac{1-Z_i}{1-e(\bm{X}_i)}\{Y_i - \mu_0(\bm{X}_i)\} + \{\tau(\bm{X}_i) - \tau\} \\
    R_1 &= \frac1n \sum_{i=1}^n (w^*_1(\bm{X}_i) - \frac1{e(\bm{X}_i)})Z_i\{Y_i - \mu_1(\bm{X}_i)\} - (w^*_0(\bm{X}_i) - \frac1{1-e(\bm{X}_i)})(1-Z_i)\{Y_i - \mu_0(\bm{X}_i)\} \\
    R_2 &= \frac1n\sum_{i=1}^n (w_1^*(\bm{X}_i)Z_i -1)\mu_1(\bm{X}_i)\}] - \{w_0^*(\bm{X}_i)(1-Z_i) -1\}\mu_0(\bm{X}_i) \\
     &= \frac1n \sum_{i=1}^n (w^*_1(\bm{X}_i)Z_i - 1)\{\bm{\beta}_1^Tc(\bm{X}_i) +\bm{\lambda}^Tg(\bm{X}_i)\} - \{w^*_0(\bm{X}_i)(1-Z_i) - 1\}\{\bm{\beta}_1^Tc(\bm{X}_i) +\bm{\lambda}^Tg(\bm{X}_i)\} \\
    &= \frac1n \sum_{i=1}^n (w^*_1(\bm{X}_i)Z_i - 1)\bm{\beta}_1^Tc(\bm{X}_i) - \sum_{i=1}^n \{w_0^*(\bm{X}_i)(1-Z_i) - 1\}\bm{\beta}_0^Tc(\bm{X}_i) \\
    &\quad+ \sum_{i=1}^n \{w_1^*(\bm{X}_i)Z_i - w_0^*(\bm{X}_i)(1-Z_i)\}\bm{\lambda}^Tg(\bm{X}_i).
\end{align*}
Note that $R_2 = 0$ by the constraints in the optimization problem. Then, since $S_i$ takes the same form as the efficient score for the ATE in \cite{hahn_role_1998}, $\hat{\tau}_{w^*}$ is asymptotically normal and semiparametrically efficient as long as $R_1$ is $o_p(n^{-1/2}).$

We will focus on the first component of $R_1$, which we decompose as $R_1 = R_1^1 + R_1^2$. We can bound the first term as
\begin{align*}
   |R_1^1| &= \left\lvert\frac1n \sum_{i=1}^n \left(w^*_1(\bm{X}_i) - \frac1{e(\bm{X}_i)}\right)Z_i\{Y_i - \mu_1(\bm{X}_i)\}\right\rvert \\[4pt]
    &= \left\lvert\int \left(w^*_1(\bm{X}_i) - \frac{1}{e(\bm{X}_i)}\right)\{Z_iY_i - Z_i\mu_1(\bm{X}_i)\} dF_n(\bm{x})\right\rvert\\[4pt]
    &\leq \left\lVert w^*_1(\bm{X}_i) - \frac{1}{e(\bm{X}_i)}\right\rVert \cdot \left\lVert\{Z_iY_i - Z_i\mu_1(\bm{X}_i)\}\right\rVert \\[4pt]
    &= o_p(1)O_p(n^{-1/2}) = o_p(n^{-1/2}),
\end{align*}
where the third inequality follows from Cauchy-Schwartz and the fourth equality follows from Theorem 1 and and the central limit theorem. Similar arguments hold for $R_1^2$ and we get that 
\begin{align*}
    R_1 &\leq |R_1^1| + |R_1^2| \\
    &= o_p(n^{-1/2}) + o_p(n^{-1/2}) = o_p(n^{-1/2})
\end{align*}
as desired. 
\end{proof}

\subsection{Corollary \ref{theorem:new_var}}
\begin{proof}
    
When the propensity score model is not correctly specified, $R_1$ will no longer go to zero. Then, the variance of $\hat{\tau}_{w^*}$ will be $\Var(S_i + R_{1_i})$ where
\begin{align*}
    S_i + R_{1_i} &= w_1^*(\bm{X}_i)Z_i\{Y_i - \mu_1(\bm{X}_i)\} - w_0^*(\bm{X}_i)(1-Z_i)\{Y_i - \mu_0(\bm{X}_i)\} + \{\tau(\bm{X}_i) - \tau\} \\
    &= w_1^*(\bm{X}_i)Z_i\{Y_i(1) - \mu_1(\bm{X}_i)\} - w_0^*(\bm{X}_i)(1-Z_i)\{Y_i(0) - \mu_0(\bm{X}_i)\} + \{\tau(\bm{X}_i) - \tau\} \\
    &= \{w_1^*(\bm{X}_i) - \tilde{w}_1(\bm{X}_i)\}Z_i\{Y_i(1) - \mu_1(\bm{X}_i)\} - \{w_0^*(\bm{X}_i) - \tilde{w}_0(\bm{X}_i)\}(1-Z_i)\{Y_i(0) - \mu_0(\bm{X}_i)\} \\
    &\quad+ \tilde{w}_1(\bm{X}_i)Z_i\{Y_i(1) - \mu_1(\bm{X}_i)\} - \tilde{w}_0(\bm{X}_i)(1-Z_i)\{Y_i(0) - \mu_0(\bm{X}_i)\} + \{\tau(\bm{X}_i) - \tau\}.
\end{align*}
Using similar arguments used for $R_1$ above and Proposition 3,
\begin{align*}
    \{w_1^*(\bm{X}_i) - \tilde{w}_1(\bm{X}_i)\}Z_i\{Y_i(1) - \mu_1(\bm{X}_i)\} - \{w_0^*(\bm{X}_i) - \tilde{w}_0(\bm{X}_i)\}(1-Z_i)\{Y_i(0) - \mu_0(\bm{X}_i)\} = o_p(n^{-1/2})
\end{align*}
such that 
\begin{align*}
    \Var(S_i + R_{1i}) = \Var[\tilde{w}_1(\bm{X}_i)Z_i\{Y_i(1) - \mu_1(\bm{X}_i)\} - \tilde{w}_0(\bm{X}_i)(1-Z_i)\{Y_i(0) - \mu_0(\bm{X}_i)\} + \{\tau(\bm{X}_i) - \tau\}].
\end{align*}

Then,
\begin{align*}
   \Var(S_i + R_{1_i}) &= \Var[\tilde{w}_1(\bm{X}_i)Z_i\{Y_i(1) - \mu_1(\bm{X}_i)\}] + \Var[\tilde{w}_0(\bm{X}_i)(1-Z_i)\{Y_i(0) - \mu_0(\bm{X}_i)\}] \\
   &\quad+ \Var\{\tau(\bm{X}_i) - \tau\} \\
&\quad- 2\Cov[\tilde{w}_1(\bm{X}_i)Z_i\{Y_i(1) - \mu_1(\bm{X}_i)\}, \tilde{w}_0(\bm{X}_i)(1-Z_i)\{Y_i(0) - \mu_0(\bm{X}_i)\}] \\
   &\quad+ 2\Cov[\tilde{w}_1(\bm{X}_i)Z_i\{Y_i(1) - \mu_1(\bm{X}_i)\}, \{\tau(\bm{X}_i) - \tau\}] \\
   &\quad- 2\Cov[\tilde{w}_0(\bm{X}_i)(1-Z_i)\{Y_i(0) - \mu_0(\bm{X}_i)\}, \{\tau(\bm{X}_i) - \tau\}].
\end{align*}
We have that
\begin{align*}
    - 2\Cov[\tilde{w}_1(\bm{X}_i)Z_i\{Y_i(1) - &\mu_1(\bm{X}_i)\}, \tilde{w}_0(\bm{X}_i)(1-Z_i)\{Y_i(0) - \mu_0(\bm{X}_i)\}]  \\
    &= -2E[\tilde{w}_1(\bm{X}_i)Z_i\{Y_i(1) - \mu_1(\bm{X}_i)\}] \\
    &\quad\times E[\tilde{w}_0(\bm{X}_i)(1-Z_i)\{Y_i(0) - \mu_0(\bm{X}_i)\}]] \\
    &= -2E(E[\tilde{w}_1(\bm{X}_i)Z_i\{Y_i(1) - \mu_1(\bm{X}_i)\}|\bm{X}_i])\\
    &\quad\times E(E[\tilde{w}_0(\bm{X}_i)(1-Z_i)\{Y_i(0) - \mu_0(\bm{X}_i)\}|\bm{X}_i]) \\
    &= -2E(\tilde{w}_1(\bm{X}_i)E[Z_i\{Y_i(1) - \mu_1(\bm{X}_i)\}|\bm{X}_i]) \\
    &\quad\times E(\tilde{w}_0(\bm{X}_i)E[(1-Z_i)\{Y_i(0) - \mu_0(\bm{X}_i)\}|\bm{X}_i]) \\
    \text{(by ignorability)} \:\: &= -2E[\tilde{w}_1(\bm{X}_i)e(\bm{X}_i)\{\mu_1(\bm{X}_i) - \mu_1(\bm{X}_i)\}] \\
    &\quad\times E[\tilde{w}_0(\bm{X}_i)\{1-e(\bm{X}_i)\}\{\mu_0(\bm{X}_i) - \mu_0(\bm{X}_i)\}]  \\
    &= 0 \times 0 = 0.
\end{align*}
By a similar argument, $E[\tilde{w}_1(\bm{X}_i)Z_i\{Y_i(1) - \mu_1(\bm{X}_i)\}]E[\{\tau(\bm{X}_i) - \tau\}] = E[\tilde{w}_0(\bm{X}_i)(1-Z_i)\{Y_i(0) - \mu_0(\bm{X}_i)\}]E[\{\tau(\bm{X}_i) - \tau\}] = 0$ so it remains to show that $E[\tilde{w}_1(\bm{X}_i)Z_i\{Y_i(1) - \mu_1(\bm{X}_i)\}\{\tau(\bm{X}_i) - \tau\}] = E[\tilde{w}_0(\bm{X}_i)(1-Z_i)\{Y_i(0) - \mu_0(\bm{X}_i)\}\{\tau(\bm{X}_i) - \tau\}] = 0$. Now,
\begin{align*}
   E[\tilde{w}_1(\bm{X}_i)Z_i\{Y_i(1) - \mu_1(\bm{X}_i)\}\{\tau(\bm{X}_i) - \tau\}] &= E(E[\tilde{w}_1(\bm{X}_i)Z_i\{Y_i(1) - \mu_1(\bm{X}_i)\}\{\tau(\bm{X}_i) - \tau\}|\bm{X}_i]) \\
   &= E(\tilde{w}_1(\bm{X}_i)\{\tau(\bm{X}_i) - \tau\}E[Z_i\{Y_i(1) - \mu_1(\bm{X}_i)\}|\bm{X}_i]) \\
   \text{(by ignorability)} &= E[\tilde{w}_1(\bm{X}_i)\{\tau(\bm{X}_i) - \tau\}e(\bm{X}_i)\{\mu_1(\bm{X}_i) - \mu_1(\bm{X}_i)\}] \\
   &= 0
\end{align*}
and by a similar argument $E[\tilde{w}_0(\bm{X}_i)(1-Z_i)\{Y_i(0) - \mu_0(\bm{X}_i)\}\{\tau(\bm{X}_i) - \tau\}] = 0$ such that all covariance terms are zero. Next,
\begin{align*}
    \Var[\tilde{w}_1(\bm{X}_i)Z_i\{Y_i(1) - \mu_1(\bm{X}_i)\}] &= E(\Var[\tilde{w}_1(\bm{X}_i)Z_i\{Y_i(1) - \mu_1(\bm{X}_i)\}|\bm{X}_i]) \\
    &\quad+ \Var(E[\tilde{w}_1(\bm{X}_i)Z_i\{Y_i(1) - \mu_1(\bm{X}_i)\}|\bm{X}_i]) \\
    \text{(by similar argument to above)}\:\: &= E(\Var[\tilde{w}_1(\bm{X}_i)Z_i\{Y_i(1) - \mu_1(\bm{X}_i)\}|\bm{X}_i]) + \Var(0) \\
    &= E(\tilde{w}_1(\bm{X}_i)^2\Var[Z_i\{Y_i(1) - \mu_1(\bm{X}_i)\}|\bm{X}_i]).
\end{align*}
Now, note that 
\begin{align*}
    \Var\left[Z_i\{Y_i(1) - \mu_1(\bm{X}_i)\}|\bm{X}_i\right] &= \Var(Z_i |\bm{X}_i)\Var[\{Y_i(1) - \mu_1(\bm{X}_i)\}|\bm{X}_i] \\
    &\quad+ \Var(Z_i |\bm{X}_i) E[\{Y_i(1) - \mu_1(\bm{X}_i)\}|\bm{X}_i]^2 \\
    &\quad+ \Var[\{Y_i(1) - \mu_1(\bm{X}_i)\}|\bm{X}_i]E[Z_i|\bm{X}_i]^2 \\
    &= e(\bm{X}_i)\{1-e(\bm{X}_i)\}\Var\{Y_i(1)|\bm{X}_i\} \\
    &\quad+ e(\bm{X}_i)^2\{1-e(\bm{X}_i)\}\Var\{Y_i(1)|\bm{X}_i\} \\
    &= e(\bm{X}_i)(1-e(\bm{X}_i) + e(\bm{X}_i))\Var\{Y_i(1)|\bm{X}_i\} \\
    &= e(\bm{X}_i)\Var\{Y_i(1)|\bm{X}_i\}.
\end{align*}
Then,
\begin{align*}
    \Var[\tilde{w}_1(\bm{X}_i)Z_i\{Y_i(1) - \mu_1(\bm{X}_i)\}] &= E[\tilde{w}_1(\bm{X}_i)^2e(\bm{X}_i)\Var\{Y_i(1)|\bm{X}_i\}]
\end{align*}
and by a similar argument,
\begin{align*}
    \Var[\tilde{w}_0(\bm{X}_i)(1-Z_i)\{Y_i(0) - \mu_0(\bm{X}_i)\}] &= E[\tilde{w}_0(\bm{X}_i)^2\{1-e(\bm{X}_i)\}\Var\{Y_i(0)|\bm{X}_i\}].
\end{align*}
Finally, we have that
\begin{align*}
    \Var\{\tau(\bm{X}_i) - \tau\} &= E[\{\tau(\bm{X}_i) - \tau\}^2] - E[\tau(\bm{X}_i) - \tau]^2 \\
    &= E[\{\tau(\bm{X}_i) - \tau\}^2] - (E[\tau(\bm{X}_i)] - \tau)^2 \\
    &= E[\{\tau(\bm{X}_i) - \tau\}^2].
\end{align*}
Then, we have that the asymptotic variance of the estimator when the propensity score model is misspecified is
\begin{align*}
    E[\tilde{w}_1(\bm{X}_i)^2e(\bm{X}_i)\Var\{Y_i(1)|\bm{X}_i\} + \tilde{w}_0(\bm{X}_i)^2\{1-e(\bm{X}_i)\}\Var\{Y_i(0)|\bm{X}_i\} + \{\tau(\bm{X}_i) - \tau\}^2].
\end{align*}
\end{proof}

\newpage
\subsection{Corollary \ref{theorem_g_new_var}}
\begin{proof}

Assume that $\mu_z(\bm{x})$ and $\tau(\bm{x})$ are linear in both $c(\bm{x})$ and $g(\bm{x})$ such that $\mu_1(\bm{x}) = \bm{\beta}_1^Tc(\bm{x}) + \bm{\lambda}_1^Tg(\bm{x})$ and $\mu_0(\bm{x}) = \bm{\beta}_0^Tc(\bm{x}) + \bm{\lambda}_0^Tg(\bm{x})$ for some $\bm{\beta}_z \in \mathbb{R}^K$ and $\bm{\lambda}_z \in \mathbb{R}^L$ where $\bm{\beta}_0 \neq \bm{\beta}_1$ and $\bm{\lambda}_0 \neq \bm{\lambda}_1$. Instead of targeting the ATE, $\tau$, $\hat{\tau}_{w^*}$ targets $\tau_{g, \tilde{w}} = E[E[Y|Z = 1, c(\bm{X}), \tilde{w}(\bm{X})e(\bm{X})g(\bm{X_i})]] - E[E[Y|Z = 0, c(\bm{X}), \tilde{w}(\bm{X})\{1 - e(\bm{X})\}g(\bm{X})]] = E[\tau_{g, \tilde{w}}(\bm{X}, Z)]$ where $\tau_{g, \tilde{w}}(\bm{x}, z) = (\bm{\beta}_1 - \bm{\beta}_0)^Tc(\bm{x})  + \bm{\lambda}_1^T\tilde{w}_1(\bm{x})zg(\bm{x}) - \bm{\lambda}_0^T\tilde{w}_0(\bm{x})(1-z)g(\bm{x})$. 
We have the following decomposition of estimator error:
\begin{align*}
    \hat{\tau}_{w^*} - \tau_{g, \tilde{w}} &= \frac1n\sum_{i=1}^n w^*_1(\bm{X}_i)Z_iY_i - \frac1n\sum_{i=1}^n w^*_0(\bm{X}_i)(1-Z_i)Y_i -\tau_{g, \tilde{w}}\\
    &= \frac1n\sum_{i=1}^n \tilde{w}_1(\bm{X}_i)Z_i\{Y_i(1) - \mu_1(\bm{X}_i)\} - \tilde{w}_0(\bm{X}_i)(1-Z_i)\{Y_i(0) - \mu_0(\bm{X}_i)\} - \tau_{g, \tilde{w}} \\
    &\quad + \frac1n\sum_{i=1}^n\{w_1^*(\bm{X}_i) - \tilde{w}_1(\bm{X}_i)\}Z_i\{Y_i(1) - \mu_1(\bm{X}_i)\} \\
    &\quad -\frac1n\sum_{i=1}^n \{w_0^*(\bm{X}_i) - \tilde{w}_0(\bm{X}_i)\}(1-Z_i)\{Y_i(0) - \mu_0(\bm{X}_i)\} \\
    &\quad + \frac1n\sum_{i=1}^n w^*_1(\bm{X}_i)Z_i\mu_1(\bm{X}_i)  - \frac1n\sum_{i=1}^n w^*_0(\bm{X}_i)(1-Z_i)\mu_0(\bm{X}_i) \\
    &= \frac1n\sum_{i=1}^n \tilde{w}_1(\bm{X}_i)Z_i\{Y_i(1) - \mu_1(\bm{X}_i)\} - \tilde{w}_0(\bm{X}_i)(1-Z_i)\{Y_i(0) - \mu_0(\bm{X}_i)\} \\
    &\quad + \frac1n\sum_{i=1}^n \bm{\beta}_1^Tc(\bm{X}_i) - \bm{\beta}_0^Tc(\bm{X}_i) \\ &\quad\quad +E[\bm{\lambda}_1^T\tilde{w}_1(\bm{X})Zg(\bm{X})] - E[\bm{\lambda}_0^T\tilde{w}_0(\bm{X})(1-Z)g(\bm{X})] - \tau_{g, \tilde{w}} \\
    &\quad + \frac1n\sum_{i=1}^n\{w_1^*(\bm{X}_i) - \tilde{w}_1(\bm{X}_i)\}Z_i\{Y_i(1) - \mu_1(\bm{X}_i)\} \\
    &\quad - \frac1n\sum_{i=1}^n\{w_0^*(\bm{X}_i) - \tilde{w}_0(\bm{X}_i)\}(1-Z_i)\{Y_i(0) - \mu_0(\bm{X}_i)\} \\
    &\quad + \frac1n\sum_{i=1}^n \bm{\lambda}_1^Tw^*_1(\bm{X}_i)Z_ig(\bm{X}_i) - E[\bm{\lambda}_1^T\tilde{w}_1(\bm{X})Zg(\bm{X})] \\
    &\quad - \frac1n\sum_{i=1}^n \bm{\lambda}_0^Tw^*_0(\bm{X}_i)(1-Z_i)g(\bm{X}_i) 
    + E[\bm{\lambda}_0^T\tilde{w}_0(\bm{X})(1-Z)g(\bm{X})].
\end{align*}
By Assumption \ref{balanc_func_regularity} and the assumption that $\rho''\{\cdot\}$ and $\rho'''\{\cdot\}$ are continuous over $\mathbb{R}$, the results of Theorems 5.28 and 7.2 in \cite{boos_essential_2013} apply. Thus, there exist some $q_1(\bm{X}_i, Z_i)$ and $q_0(\bm{X}_i, Z_i)$ with $E[q_z(\bm{X}_i, Z_i)] = 0$, $z=0,1$ such that
\begin{align*}
    \frac1n\sum_{i=1}^n \bm{\lambda}_1^Tw^*_1(\bm{X}_i)Z_ig(\bm{X}_i) - E[\bm{\lambda}_1^T\tilde{w}_1(\bm{X}_i)Z_ig(\bm{X}_i)] &= \frac1n\sum_{i=1}^n q_1(\bm{X}_i, Z_i) + R_2 \\
    \frac1n\sum_{i=1}^n \bm{\lambda}_0^Tw^*_0(\bm{X}_i)(1-Z_i)g(\bm{X}_i)-  E[\bm{\lambda}_0^T\tilde{w}_0(\bm{X}_i)(1-Z_i)g(\bm{X}_i)]&= \frac1n\sum_{i=1}^n q_0(\bm{X}_i, Z_i) + R_3
\end{align*}
where $R_2, R_3 = o_p(n^{-1/2})$. Note that we know the exact form of these two functions, $q_1(\bm{X}_i, Z_i)$ and $q_0(\bm{X}_i, Z_i)$, as specified in Theorem 5.28 in \cite{boos_essential_2013}.  Then $\hat{\tau}_w - \tau_{g, \tilde{w}} = S_i + R_1 + R_2 + R_3$ where
\begin{align*}
    S_i &= \tilde{w}_1(\bm{X}_i)Z_i\{Y_i(1) - \mu_1(\bm{X}_i)\} - \tilde{w}_0(\bm{X}_i)(1-Z_i)\{Y_i(0) - \mu_0(\bm{X}_i)\} \\
    &\quad+ \bm{\beta}_1^Tc(\bm{X}_i) - \bm{\beta}_0^Tc(\bm{X}_i) - E[\bm{\beta}_1^Tc(\bm{X}_i) - \bm{\beta}_0^Tc(\bm{X}_i)] + q_1(\bm{X}_i, Z_i) - q_0(\bm{X}_i, Z_i) \\
    R_1 &= \frac1n\sum_{i=1}^n\{w_1^*(\bm{X}_i) - \tilde{w}_1(\bm{X}_i)\}Z_i\{Y_i(1) - \mu_1(\bm{X}_i)\} \\ 
    &\quad- \frac1n\sum_{i=1}^n\{w_0^*(\bm{X}_i) - \tilde{w}_0(\bm{X}_i)\}(1-Z_i)\{Y_i(0) - \mu_0(\bm{X}_i)\} 
\end{align*}
where $R_1$ has previously been shown to be $o_p(n^{-1/2})$. Then, 
\begin{align*}
    \Var(\hat{\tau}_w - \tau_{g, \tilde{w}}) &= \Var(S_i) \\
    &= E[\tilde{w}_1(\bm{X}_i)^2e(\bm{X}_i)\Var\{Y_i(1)|\bm{X}_i\} + \tilde{w}_0(\bm{X}_i)^2\{1-e(\bm{X}_i)\}\Var\{Y_i(0)|\bm{X}_i\}] \\
    &\quad+ E\left[\left\{\bm{\beta}_1^Tc(\bm{X}_i) - \bm{\beta}_0^Tc(\bm{X}_i) - E[\bm{\beta}_1^Tc(\bm{X}_i) - \bm{\beta}_0^Tc(\bm{X}_i)] \right.\right. 
    \\ &\quad\quad\quad\quad \left.\left. + q_1(\bm{X}_i, Z_i) - q_0(\bm{X}_i, Z_i)\right\}^2\right]
\end{align*}
as all covariance terms cancel as shown in Corollary \ref{theorem:new_var}. It remains to determine $q_1(\bm{X}_i, Z_i)$ and $q_0(\bm{X}_i, Z_i).$ By Theorem 5.28 in \cite{boos_essential_2013},
\begin{align*}
    q_1(\bm{X}_i, Z_i) &= \bm{\lambda}_1^T\tilde{w}_1(\bm{X}_i)Z_ig(\bm{X}_i) - E[\bm{\lambda}_1^T\tilde{w}_1(\bm{X}_i)Z_ig(\bm{X}_i)] + r_1(\bm{X}_i, Z_i) \:\:\: \text{and}\\
    q_0(\bm{X}_i, Z_i) &= \bm{\lambda}_0^T\tilde{w}_0(\bm{X}_i)(1-Z_i)g(\bm{X}_i) - E[\bm{\lambda}_0^T\tilde{w}_0(\bm{X}_i)(1-Z_i)g(\bm{X}_i)] + r_0(\bm{X}_i, Z_i),
\end{align*}
where $r_z(\bm{X}_i, Z_i)$ depend on the asymptotic distribution of $w^*_1(\bm{X}_i)$ and $w^*_0(\bm{X}_i)$. Therefore, to derive $r_z(\bm{X}_i, Z_i)$, we first derive the asymptotic distribution of $\hat{\bm{\alpha}}_1, \hat{\bm{\alpha}}_0, \hat{\bm{\gamma}}$ using the theory of M-estimation.

Let $m_k = E[c_k(\bm{X}_i)]$ and $\bm{m} = (m_1, \ldots m_K)$. Let $\bm{\theta} = (\bm{m}, \bm{\alpha}_1, \bm{\alpha}_0, \bm{\gamma})$. Let
\begin{align*}
    \phi_k(\bm{X}, \bm{m}) &= c_k(\bm{X}) - m_k  \\
    \varphi^{(1)}_k(\bm{X}, Z, \bm{m}, \bm{\alpha}_1, \bm{\gamma}) &= \rho'\{\bm{\alpha}_1^{T}c(\bm{X}) + \bm{\gamma}^{T}g(\bm{X})\}Zc_k(\bm{X}) - m_k  \\
     \varphi^{(0)}_k(\bm{X}, Z, \bm{m}, \bm{\alpha}_0, \bm{\gamma}) &= \rho'\{\bm{\alpha}_0^{T}c(\bm{X}) - \bm{\gamma}^{T}g(\bm{x})\}(1-Z)c_k(\bm{X}) - m_k \\
     \varphi^{(2)}_l(\bm{X}, Z, \bm{\alpha}_0, \bm{\alpha}_1 \bm{\gamma})&=\rho'\{\bm{\alpha}_1^{T}c(\bm{X}) + \bm{\gamma}^{T}g(\bm{X})\}Zg_l(\bm{X}) - \rho'\{\bm{\alpha}_0^{T}c(\bm{X}) - \bm{\gamma}^{T}g(\bm{X})\}(1-Z)g_l(\bm{X}) \\
\end{align*}
Let $\psi(\bm{X}, Z, \bm{\theta}) = (\phi, \varphi^{(1)}, \varphi^{(0)}, \varphi^{(2)})$. Then $\hat{\alpha}_0, \hat{\alpha}_1, \hat{\gamma}$ is the solution to 
\begin{align*}
    \frac1n\sum_{i=1}^n \psi(\bm{X}_i, Z_i, \bm{\theta}) = 0.
\end{align*}
Given the assumptions on $\rho\{\cdot\}$, Assumption \ref{balanc_func_regularity}, and similar arguments to the proof of Lemma \ref{lemma}, by the theory of M-estimation and Theorem 7.2 in \cite{boos_essential_2013},
\begin{align*}
    \hat{\bm{\theta}} - \tilde{\bm{\theta}} &= \sum_{i=1}^n \bm{A}^{-1}(\tilde{\bm{\theta}})\psi(\bm{X}_i, Z_i, \bm{\tilde{\theta})} + R_4
\end{align*}
where $R_4 = o_p(n^{-1/2})$. We will now determine $\bm{A}(\tilde{\bm{\theta}})$. For ease in notation, let $h'_1 = \rho''\{\tilde{\bm{\alpha}}_1^{T}c(\bm{x}) + \tilde{\bm{\gamma}}^{T}g(\bm{x})\}Z$ and $h'_0 = \rho''\{\tilde{\bm{\alpha}}_0^{T}c(\bm{x}) - \tilde{\bm{\gamma}}^{T}g(\bm{x})\}(1-Z)$ and $\bm{H}_{\bm{a},\bm{b}} = E(\bm{ab}^T) = \Cov(\bm{a},\bm{b}) + E[\bm{a}]E[\bm{b}]^T$.
Then,
\begin{align*}
    A(\tilde{\bm{\theta}}) &= E\begin{bmatrix}
    \bm{I}_K & \bm{0}_K & \bm{0}_K & \bm{0}_{K\times L}\\
    \bm{I}_K & -h'_1c(\bm{X})c(\bm{X})^T & \bm{0}_K & -h'_1c(\bm{X})g(\bm{X})^T\\
    \bm{I}_K & \bm{0}_K & -h'_0c(\bm{X})c(\bm{X})^T & -h'_0c(\bm{X})g(\bm{X})^T\\
    \bm{0}_{L \times K} & -h'_1g(\bm{X})c(\bm{X})^T &h'_0g(\bm{X})c(\bm{X})^T  & \{-h'_1 -h'_0\}g(\bm{X})g(\bm{X})^T \\
    \end{bmatrix} \\[12pt]
    &= \begin{bmatrix}
    \bm{I}_K & \bm{0}_K & \bm{0}_K & \bm{0}_{K \times L}\\
    \bm{I}_K & -\bm{H}_{h'_1c(\bm{X}),c(\bm{X})} & \bm{0}_K & -\bm{H}_{h'_1c(\bm{X}),g(\bm{X})}\\
    \bm{I}_K & \bm{0}_K & -\bm{H}_{h'_0c(\bm{X}),c(\bm{X})}  & -\bm{H}_{h'_0c(\bm{X}),g(\bm{X})}\\
    \bm{0}_{L \times K} & -\bm{H}_{h'_1g(\bm{X}),c(\bm{X})}  & \bm{H}_{h'_0g(\bm{X}),c(\bm{X})}  & -\bm{H}_{(h'_1+ h'_0)g(\bm{X}),g(\bm{X})}  \\
    \end{bmatrix}
\end{align*}
Symbolically inverting $\bm{A}(\tilde{\bm{\theta}})$ is non-trivial, so we proceed with $\bm{A}^{-1}(\tilde{\bm{\theta}})\psi(\bm{X}, Z, \bm{\tilde{\theta})}$ rather than the simplified form. Let $\bm{A}^{-1}(\tilde{\bm{\theta}})\psi(\bm{X}, Z, \bm{\tilde{\theta})}_{\tilde{\bm{\delta}}}$ be the rows of $\bm{A}^{-1}(\tilde{\bm{\theta}})\psi(\bm{X}, Z, \bm{\tilde{\theta})}$ corresponding to any $\tilde{\bm{\delta}} \subset \tilde{\bm{\theta}}$. Then, by Theorem 5.28 in \cite{boos_essential_2013},
\begin{align*}
    q_1(\bm{X}_i, Z_i) &= \bm{\lambda}_1^T\tilde{w}_1(\bm{X}_i)Z_ig(\bm{X}_i) - E[\bm{\lambda}_1^T\tilde{w}_1(\bm{X}_i)Z_ig(\bm{X}_i)] \\
    &\quad+ E\left[\frac{\partial}{\partial (\tilde{\bm{\alpha}}_1, \tilde{\bm{\gamma}})}\bm{\lambda}_1^T\tilde{w}_1(\bm{X}_i)Z_ig(\bm{X}_i)\right]\bm{A}^{-1}(\tilde{\bm{\theta}})\psi(\bm{X}_i, Z_i, \bm{\tilde{\theta})}_{(\tilde{\bm{\alpha}}_1, \tilde{\bm{\gamma}})} \\[6pt]
    q_0(\bm{X}_i, Z_i) &= \bm{\lambda}_0^T\tilde{w}_0(\bm{X}_i)(1-Z_i)g(\bm{X}_i) - E[\bm{\lambda}_0^T\tilde{w}_0(\bm{X}_i)(1-Z_i)g(\bm{X}_i)] \\
    &\quad+ E\left[\frac{\partial}{\partial (\tilde{\bm{\alpha}}_0, \tilde{\bm{\gamma}})}\bm{\lambda}_0^T\tilde{w}_0(\bm{X}_i)(1-Z_i)g(\bm{X}_i)\right]\bm{A}^{-1}(\tilde{\bm{\theta}})\psi(\bm{X}_i, Z_i, \bm{\tilde{\theta})}_{(\tilde{\bm{\alpha}}_0, \tilde{\bm{\gamma}})},
\end{align*}
where 
\begin{align*}
    E\left[\frac{\partial}{\partial (\tilde{\bm{\alpha}}_1, \tilde{\bm{\gamma}})}\bm{\lambda}_1^T\tilde{w}_1(\bm{X}_i)Z_ig(\bm{X}_i)\right] &= [\bm{\lambda}_1^T\bm{H}_{h'_1g(\bm{X}),c(\bm{X})}, \bm{\lambda}_1^T\bm{H}_{h'_1g(\bm{X}),g(\bm{X})}] \text{ and} \\[6pt]
    E\left[\frac{\partial}{\partial (\tilde{\bm{\alpha}}_0, \tilde{\bm{\gamma}})}\bm{\lambda}_0^T\tilde{w}_0(\bm{X}_i)(1-Z_i)g(\bm{X}_i)\right] &= [\bm{\lambda}_0^T\bm{H}_{h'_0g(\bm{X}),c(\bm{X})}, -\bm{\lambda}_0^T\bm{H}_{h'_0g(\bm{X}),g(\bm{X})}].
\end{align*}
 Then, we have that
\begin{align*}
    \Var(\hat{\tau}_w - \tau_{g, \tilde{w}}) &= E[\tilde{w}_1(\bm{X}_i)^2e(\bm{X}_i)\Var\{Y_i(1)|\bm{X}_i\} + \tilde{w}_0(\bm{X}_i)^2\{1-e(\bm{X}_i)\}\Var\{Y_i(0)|\bm{X}_i\}] \\
    &\quad+ E[\{\tau_{g, \tilde{w}}(\bm{X}_i, Z_i) - \tau_{g, \tilde{w}} + r_1(\bm{X}_i, Z_i) - r_0(\bm{X}_i, Z_i)\}^2] \\
    &= V_{g, \tilde{w}}
\end{align*}
for $r_1(\bm{X}_i, Z_i) = [\bm{\lambda}_1^T\bm{H}_{h'_1g(\bm{X}),c(\bm{X})}, \bm{\lambda}_1^T\bm{H}_{h'_1g(\bm{X}),g(\bm{X})}]\bm{A}^{-1}(\tilde{\bm{\theta}})\psi(\bm{X}_i, Z_i, \bm{\tilde{\theta})}_{(\tilde{\bm{\alpha}}_1, \tilde{\bm{\gamma}})}$ and $r_0(\bm{X}_i, Z_i) = [\bm{\lambda}_0^T\bm{H}_{h'_0g(\bm{X}),c(\bm{X})}, -\bm{\lambda}_0^T\bm{H}_{h'_0g(\bm{X}),g(\bm{X})}]\bm{A}^{-1}(\tilde{\bm{\theta}})\psi(\bm{X}_i, Z_i, \bm{\tilde{\theta})}_{(\tilde{\bm{\alpha}}_0, \tilde{\bm{\gamma}})}$. Thus, 
\begin{align*}
    \sqrt{n}(\hat{\tau}_w - \tau_{g, \tilde{w}}) \to_d N(0, V_{g, \tilde{w}})
\end{align*}
as desired.
\end{proof}

\section{Restrictions on the implied propensity score model}
\label{sec:implied_ps_supp}
We explore the restrictions on the implied propensity score model for two commonly used measures of dispersion: 1) entropy weights \citep{hainmueller_entropy_2012}; and 2) stable balancing weights \citep{zubizarreta_stable_2015}. The first order conditions in \eqref{eq:ps_conditions} imply the following condition:
 \begin{align*}
    (\rho')^{-1}(\{e(\bm{x})\}^{-1}) +  (\rho')^{-1}(\{1-e(\bm{x})\}^{-1}) &= (\bm{\alpha}_0 + \bm{\alpha}_1)^Tc(\bm{x}) 
 \end{align*}
Without a defining a specific measure of dispersion $D(\cdot)$ (and therefore a specific $\rho(\cdot)$) it is hard to connect this to $\Var(Z|\bm{X})$. However, we explore two common deviance measures \citep{wang_minimal_2020}:
\begin{enumerate}
    \item Entropy weights: $D(t) = t\log(t)$ $\implies$ $\rho'(t) = e^{-t-1}$  $\implies$ $(\rho')^{-1}(t) = -\log(t) - 1$
    \item Stable balancing weights: $D(t) = t^2$ $\implies$ $\rho'(x) = - \frac{t}{2} $ $\implies$ $(\rho')^{-1}(x) = -2t$
\end{enumerate}
 Consider the entropy measure of weight dispersion first. Applying this to the first order conditions in \eqref{eq:ps_conditions}, we get the following
 \begin{align*}
     \log\{e(\bm{x})\} = \bm{\alpha}_1^Tc(\bm{x}) + \bm{\gamma}^Tg(\bm{x}) + 1 \\
     \log\{1-e(\bm{x})\} = \bm{\alpha}_1^Tc(\bm{x}) - \bm{\gamma}^Tg(\bm{x}) + 1 \\  \\
     \log\{e(\bm{x})\} + \log\{1-e(\bm{x})\} = (\bm{\alpha}_0 + \bm{\alpha}_1)^Tc(\bm{x}) + 2  \\
     \log\{e(\bm{x})(1-e(\bm{x})\}= (\bm{\alpha}_0 + \bm{\alpha}_1)^Tc(\bm{x})  + 2 \\ 
     \Var(Z|\bm{X} = \bm{x}) = \text{exp}\{(\bm{\alpha}_0 + \bm{\alpha}_1)^Tc(\bm{x})  + 2 \}
 \end{align*}
 such that $\Var(Z|\bm{X} = \bm{x})$ only varies by $c(\bm{x})$ in the implied propensity score model.
 Now, consider the dispersion measure used for stable balancing weights:
 \begin{align*}
     \frac{1}{e(\bm{x})} &= -2\bm{\alpha}_1^Tc(\bm{x}) -2\bm{\gamma}^Tg(\bm{x})  \\
      \frac{1}{1-e(\bm{x})} &= -2\bm{\alpha}_0^Tc(\bm{x}) + 2\bm{\gamma}^Tg(\bm{x})  \\ \\
      \frac{1}{e(\bm{x})} + \frac{1}{1-e(\bm{x})}&= -2\bm{\alpha}_1^Tc(\bm{x}) -2\bm{\alpha}_0^Tc(\bm{x})  \\
       \frac{1}{e(\bm{x})(1-e(\bm{x}))}&= -2\bm{\alpha}_1^Tc(\bm{x}) -2\bm{\alpha}_0^Tc(\bm{x}) \\
       \Var(Z|\bm{X} = \bm{x}) &= \frac{1}{-2\bm{\alpha}_1^Tc(\bm{x}) -2\bm{\alpha}_0^Tc(\bm{x})}  
 \end{align*}
and again, $\Var(Z|\bm{X} = \bm{x})$ only varies by $c(\bm{x})$ in the implied propensity score model. While this relationship may not hold for every possible dispersion measure, for these two dispersion measures the implied propensity score model of Problem \eqref{eq:our_db} allows $E[Z|\bm{X} = \bm{x}]$ to vary with all functions of covariates but $\Var(Z|\bm{X} = \bm{x})$ to vary only with the set of covariates that are balanced to the population. Since this relationship does not hold for any arbitrary $c(\bm{x})$ and $g(\bm{x})$ sets, this model places substantial restrictions on what forms these sets can take for the implied propensity score model to be true. Given a specific $c(\bm{x})$ set, the $g(\bm{x})$ set that satisfies the implied propensity score model can be identified for each $D(\cdot)$ by using the first order conditions \eqref{eq:ps1}-\eqref{eq:ps3} to solve for $\bm{\gamma}^Tg(\bm{x})$ as a function of $\bm{\alpha}_z^Tc(\bm{x})$.

\pagebreak
\section{Connections to other estimation problems}
\label{sec:connect_methods_supp}
\subsection{Multiple treatment groups}
Consider the case where there are $M > 2$ treatment groups such that $Z_i = m$ for $m = 0, \ldots, M-1$ and where we want to estimate the average treatment effect between treatments $m_1,m_2 = 0, \ldots, M -1$, $\tau_{m_1, m_2} = E[\mu_{m_1}(\bm{x}) - \mu_{m_2}(\bm{x})]$. Adapting Problem \eqref{eq:our_db} to derive weights for estimating $\tau_{m_1, m_2}$ simply requires replacing $z=0,1$ with $z= m_1, m_2$ in Problem \eqref{eq:our_db}. However, if weights are derived in this way for a series of average treatment effects (e.g., $\tau_{m_1, m_2}, \tau_{m_2, m_3}, \tau_{m_1, m_3}$ etc.,), each average treatment effect will correspond to a different population for the covariate set $g(\bm{x})$ or even a different $g(\bm{x})$ set. To maintain the same $g(\bm{x})$ covariate population for average treatment effects corresponding all contrasts for $M_1 \leq M$ treatments, we would include $M_1$ constraints balancing $c(\bm{x})$ in treatment groups to the sample population (e.g., constraint \eqref{eq:db_b1}) and $M_1 - 1$ constraints balancing $g(\bm{x})$ between treatment groups (e.g., constraint \eqref{eq:db_b3}). As $M_1$ increases, generally the set $g(\bm{x})$ will need to be larger for the existence of a solution. However, for there to be no estimator error for every contrast, $g(\bm{x})$ must contain no effect modifiers for \textit{all} treatment groups. Thus, deriving weights for all contrasts separately may allow for a smaller $g(\bm{x})$ set, but this set and corresponding population may vary for each contrast. In contrast, deriving weights for all contrasts together will result in the same target population for all average treatment effect estimands but the set $g(\bm{x})$ may need to be large to derive the weights.

\subsection{Weighted average treatment effects}
Consider the estimand $E[h(\bm{X})\tau(\bm{X})]$ for some function $h(\bm{x})$. This as a weighted average treatment effect (WATE) where the population covariate density $f(\bm{x})$ is modified to $h(\bm{x})f(\bm{x})$. The ATO is the WATE when $h(\bm{x}) = e(\bm{x})\{1 - e(\bm{x})\}$. When adapting our proposed constrained optimization problem to derive weights for estimating the WATE, $\{w_i^{WATE}\}$, we can simply replace the right-hand side of constraint \eqref{eq:db_b1}  with $\sum_{i=1}^n h(\bm{X}_i)c_k(\bm{X}_i)/\sum_{i=1}^n h(\bm{X}_i)$. The resulting weights will balance $c(\bm{x})$ to the weighted population with density $h(\bm{x})f(\bm{x})$ while $g(\bm{x})$ will only be balanced between the treated and control groups. Similar to the ATT, all discussion and methods proposed in Sections \ref{sec:estimator_interp} and \ref{sec:implementation} can be used for the WATE. However, if the population defined by $h(\bm{x})f(\bm{x})$ has improved overlap compared to the sample population, the set $g(\bm{x})$ required to yield a solution will be smaller than the set required to yield a solution for the ATE. Thus, various $h(\bm{x})$ functions could be used in combination with our proposed constrained optimization problem to improve overlap and yield a solution. Yet, this will modify a large portion of the sample covariate population to the population described by $h(\bm{x})f(\bm{x})$ such that this modified population ideally is of specific interest to the given research question.

\subsection{Distributional balancing weights}
Direct balancing weights tend to correspond to estimators with reduced variance in part because they focus only on balancing linear combinations of the set of basis functions $b(\bm{x}).$ However, if the CATE is not only linear in this set of basis functions, the direct balancing weights estimator is generally biased for the ATE. This has motivated the development of distributional balancing weights which are derived to minimize the distance between each of the weighted treatment group empirical distributions and the population empirical distribution \citep{huling_energy_2024}. For distances that are integral probability metrics, the corresponding distributional balancing weights balance all functions in a given class (e.g, weights proposed by \cite{huling_energy_2024} balance all functions in a particular Sobolev space). Thus, these weights balance a much wider class of functions than direct balancing weights, which is desirable as $\mu_z(\bm{X})$ is often unknown. Let $\mathcal{D}$  a distributional distance measure, $F_n(\bm{x})$ be the empirical covariate distribution, and $F_{n, z, \bm{w}}(\bm{x})$ be the weighted empirical covariate distribution in treatment group $z$. Then distributional weights commonly take the follow form,
\begin{align*}
    \bm{w}_n = \text{argmin}_{\bm{w}} &\:\:\: \mathcal{D}\{F_{n, 0, \bm{w}}(\bm{x}), F_n(\bm{x})\} + \mathcal{D}\{F_{n, 1, \bm{w}}(\bm{x}), F_n(\bm{x})\} \label{eq:dist_bal_w}
\end{align*}
where $\mathcal{D}\{F_{n, 0, \bm{w}}(\bm{x}), F_{n, 1, \bm{w}}(\bm{x})\}$ can also be added to minimization objective. While there is always a solution to this optimization problem, when there is a lack of overlap, the corresponding estimator may have inflated variance. To address this, \cite{chen_robust_2024} propose weighting $\mathcal{D}\{F_{n, 0, \bm{w}}(\bm{x}), F_n(\bm{x})\} + \mathcal{D}\{F_{n, 1, \bm{w}}(\bm{x}), F_n(\bm{x})\}$ by a hyperparameter $0 \geq \alpha \geq 1$ and $\mathcal{D}\{F_{n, 0, \bm{w}}(\bm{x}), F_{n, 1, \bm{w}}(\bm{x})\}$ by $1-\alpha$; increasing the relative weight of $\mathcal{D}\{F_{n, 0, \bm{w}}(\bm{x}), F_{n, 1, \bm{w}}(\bm{x})\}$ tends to decrease variance but increase bias due to deviation from the target population. When $\alpha = 0$, the corresponding weights target the ATO. However, this results in the entire covariate population changing for any $\alpha > 0$. Thus, we propose the following distributional weights,
\begin{align*}
    \bm{w}_n = \text{argmin}_{\bm{w}} &\:\:\: \mathcal{D}[F_{n, 0, \bm{w}}\{c(\bm{x})\}, F_n\{c(\bm{x})\}] + \mathcal{D}[F_{n, 1, \bm{w}}\{c(\bm{x})\}, F_n\{c(\bm{x})\}] \\
    &+ \mathcal{D}\{F_{n, 0, \bm{w}}(\bm{x}), F_{n, 1, \bm{w}}(\bm{x})\},
\end{align*}
where the distribution of only a subset of the covariates (i.e., $c(\bm{x})$) is balanced to the empirical distribution, but all covariates are balanced between treated and control groups. In addition, $\mathcal{D}\{F_{n, 0, \bm{w}}(\bm{x}), F_{n, 1, \bm{w}}(\bm{x})\}$ could be replaced by $\mathcal{D}[F_{n, 0, \bm{w}}\{g(\bm{x})\}, F_{n, 1, \bm{w}}\{g(\bm{x})\}]$ if desired. Alternatively, \cite{mak_projected_2018} define distributional distances with a kernel that allows for the variable influence of each covariate on the distance. This could be used within a distributional balancing weights set-up to weaken the impact of the $g(\bm{x})$ on the distributional distance between the weighted treatment groups and the sample population.

\section{Simulation study on estimator variance}
\label{sec:supp_sim_var}
We generate data using the same methods described in Section \ref{sec:sim} and supplementary Section \ref{sec:sim_supp} with $\theta = 100\%$ of covariates being treatment effect modifiers. Thus, we generate a total of 36 simulation scenarios that are all combinations p = 20, 100, 20\% and 40\% treated, three levels of overlap,
and three levels of treatment effect heterogeneity. We compute a Monte Carlo estimate of the asymptotic variance of the direct balancing weights estimator (Corollary \ref{theorem:new_var}) and the asymptotic variance of our proposed estimator when $c(\bm{x})$ includes 75\% and 25\% of all covariates (Corollary \ref{theorem_g_new_var}). We also compute the empirical variance of these estimators when computing the weights with and without the additional constraints of $w_i \geq 0$ and $\sum_{i=1}^n w_iZ_i = \sum_{i=1}^n w_i(1-Z_i)$.

\begin{figure}
    \centering
    \includegraphics[width=0.9\linewidth]{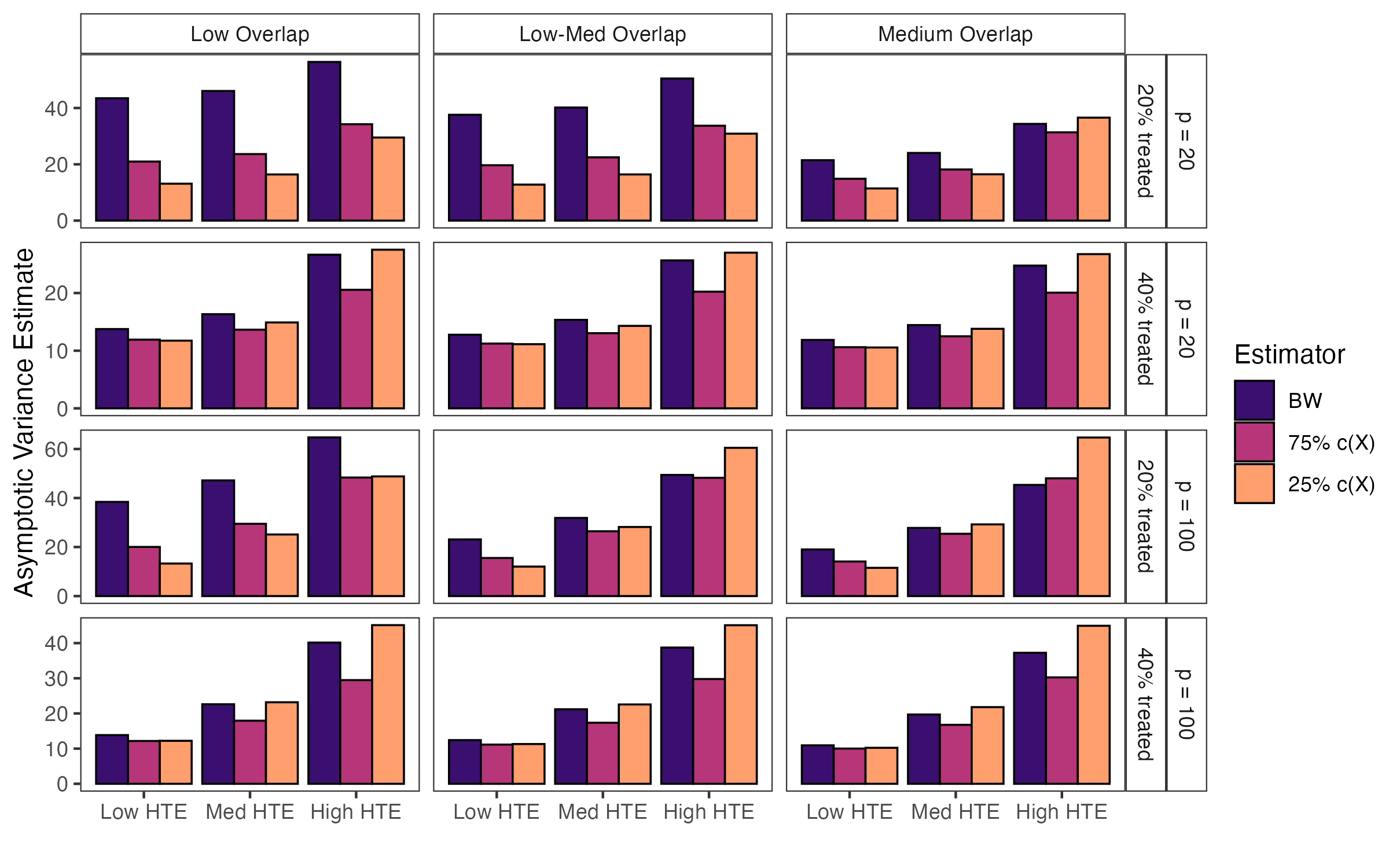}
    \caption{Estimated asymptotic variance estimates for all simulation scenarios. In the legend, ``X\% c(X)'' indicates that X\% of covariates are included in the $c(x)$ set for the PRTBW weighted estimators. The horizontal axis indicates levels of treatment effect heterogeneity while the simulation scenarios are labeled on the top and right of the panels.}
    \label{fig:asymp_var}
\end{figure}

Across a majority of scenarios, our proposed estimators have smaller estimated asymptotic variance than the direct balancing weights estimator (supplementary Figure \ref{fig:asymp_var}). However, when there is high treatment effect heterogeneity, our proposed estimator with only 25\% of covariates in the $c(\bm{x})$ set tends to have larger estimated asymptotic variance than the direct balancing weights estimator, likely due to the large variance of $\tau_{g, \tilde{w}}(\bm{X},Z)$ compared to $\tau(\bm{X})$. Despite this, our proposed estimator with 75\% of covariates in the $c(\bm{x})$ set has smaller asymptotic variance than the direct balancing weights estimator across almost all scenarios. Thus, while our proposed estimator will not always have smaller asymptotic variance than the direct balancing estimator, our proposed estimator does have similar or lower variance across many scenarios. The empirical estimator variance is similar the estimated asymptotic variance for all estimators across all scenarios which further validates the estimated asymptotic variances (supplementary Figure \ref{fig:asymp_emp_var}).

\begin{figure}
    \centering
    \includegraphics[width=0.9\linewidth]{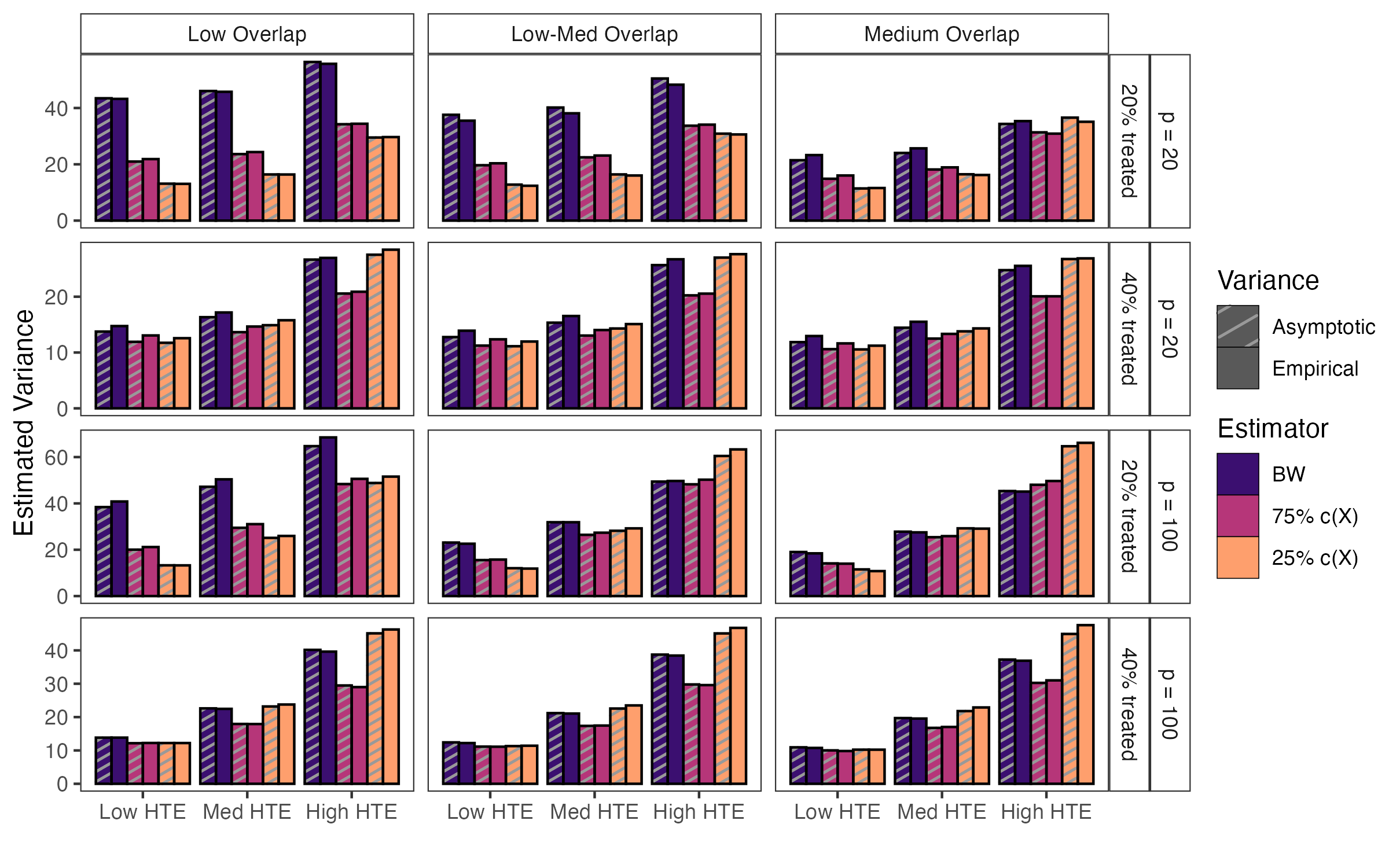}
    \caption{Estimated asymptotic variance estimates and empirical estimator variance for all simulation scenarios. In the legend, ``X\% c(X)'' indicates that X\% of covariates are included in the $c(x)$ set for the PRTBW weighted estimators. The horizontal axis indicates levels of treatment effect heterogeneity while the simulation scenarios are labeled on the top and right of the panels.}
    \label{fig:asymp_emp_var}
\end{figure}

While the asymptotic results do not directly apply with the additional constraints of $\sum_{i=1}^n w_iZ_i = \sum_{i=1}^n w_i(1-Z_i)$ to Problem \eqref{eq:our_db}, we can explore the behavior of the estimators in this case with empirical estimator variance. In many scenarios, there is no solution to the direct balancing optimization problem with these additional constraints such that empirical estimator variance cannot be computed (supplementary Figure \ref{fig:emp_sd_var}). Across all scenarios where direct balancing weights could be calculated, our proposed estimators have smaller or similar empirical variance to the direct balancing weights estimator; when overlap is poor our proposed estimators have substantially smaller empirical variance. Therefore, our proposed estimators tend to outperform the direct balancing weights estimator in terms of variance across almost all scenarios when the weights are constrained to not extrapolate.

\begin{figure}
    \centering
    \includegraphics[width=0.9\linewidth]{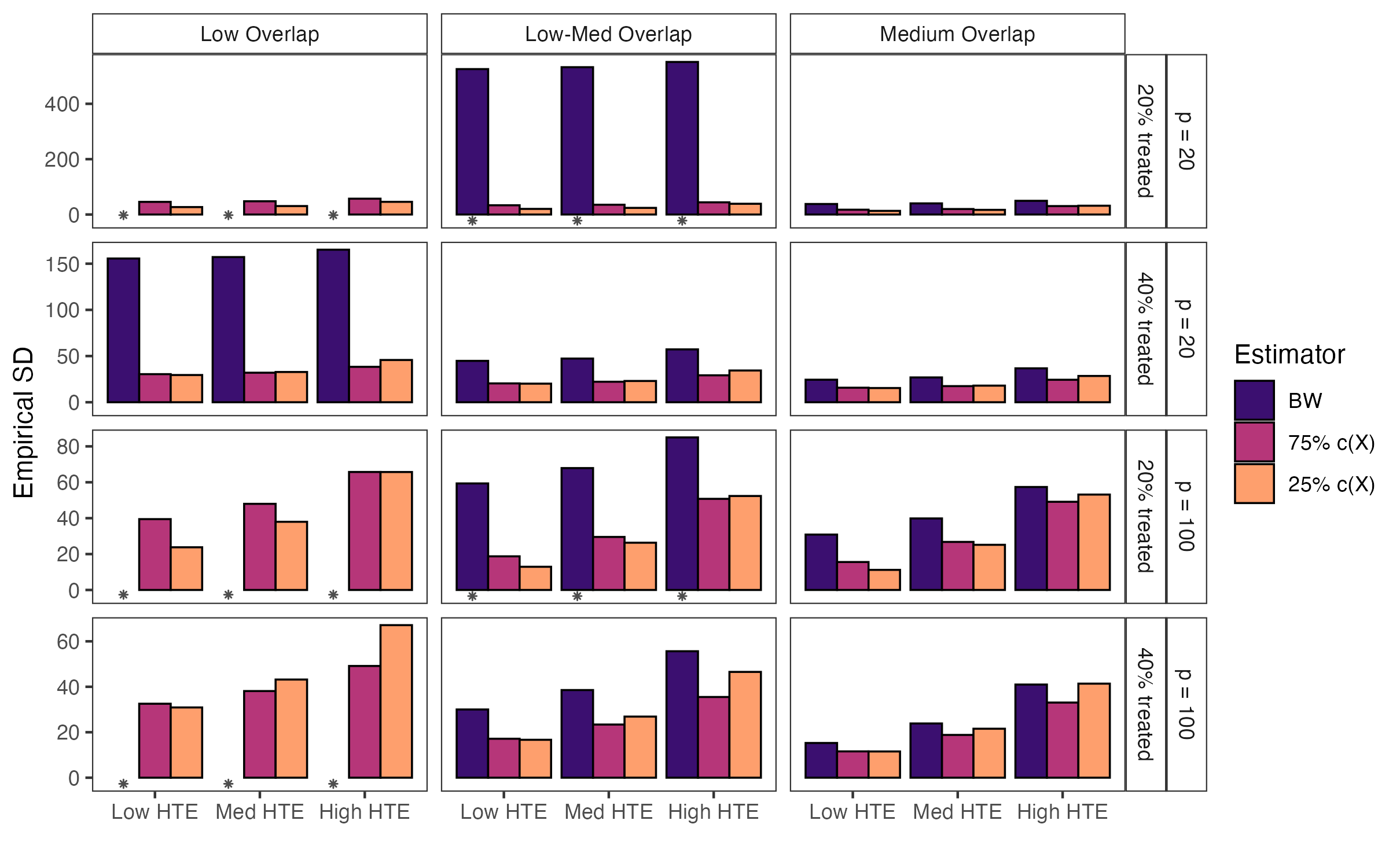}
    \caption{Empirical estimator variance for all simulation scenarios when including additional constraints of $w_i \geq 0$ and $\sum_{i=1}^n w_iZ_i = \sum_{i=1}^n w_i(1-Z_i)$ in Problem \eqref{eq:our_db}. In the legend, ``X\% c(X)'' indicates that X\% of covariates are included in the $c(x)$ set for the weighted PRTBW estimators. The asterisk, $*$, indicates estimators where a solution to the balancing weights optimization problem did not exist for all 1,000 datasets. The horizontal axis indicates levels of treatment effect heterogeneity while the simulation scenarios are labeled on the top and right of the panels.}
    \label{fig:emp_sd_var}
\end{figure}

\section{Additional guidance and implementation details}
\subsection{Guidance on selecting the \texorpdfstring{$g(\bm{x})$}{TEXT} set}
\label{sec:guidance_implement_supp}
Selecting the $g(\bm{x})$ set requires researchers to evaluate the trade-offs between estimator variance and estimator bias due to the modified $g(\bm{x})$ population for their particular application.  While these trade-offs are unavoidable when there are positivity violations, it is still desirable to select the $c(\bm{x})$ and $g(\bm{x})$ sets such that the corresponding estimator has minimal error with respect to an estimand of scientific interest, whether this the ATE or the average treatment effect for a modified population. Both subject matter knowledge as well as data driven techniques can be used to identify the $c(\bm{x})$ and $g(\bm{x})$ set; here, we provide further guidance on how a variety of factors and preferences may influence the selection of these two sets. 

As mentioned in the previous section, the corresponding estimator for Problem \eqref{eq:our_db} has minimal error with respect to the ATE when $c(\bm{x})$ contains all treatment effect modifiers effect modifiers. When the ATE is of scientific interest, it is therefore desirable to determine the set $g(\bm{x})$ that 1) yields a solution to Problem \eqref{eq:our_db}; and 2) minimally modifies the treatment effect. There are a variety of scenarios where set of effect modifiers is likely to be smaller than the set of confounders. Of particular interest are scenarios with high dimensional confounders, which tend to yield overlap issues \citep{damour_overlap_2021}, but where the set of effect modifiers potentially has moderate to low dimension.
Subject matter knowledge may indicate which covariates are unlikely to be effect modifiers and thus which covariates should be in the $g(\bm{x})$.  A data-driven procedure could also be used to select $g(\bm{x})$ by estimating how much each covariate influences treatment effect heterogeneity; in fact, we propose an algorithm and corresponding model-based estimator to do this Section \ref{sec:implementation}. 

In other scenarios, either 1) all covariates of interest may be effect modifiers or 2) the size of $g(\bm{x})$ required to ensure a solution to Problem \eqref{eq:our_db} may be large such that some effect modifiers need to be included in the set. From a design-based perspective, $g(\bm{x})$ is the set of covariates that may differ from the original sample population; thus in these cases, it is desirable to choose $g(\bm{x})$ such that the resulting weighted population is still meaningful within the given scientific context. 
In these cases, $g(\bm{x})$ could be chosen as the set of covariates for which their population is of less interest or relevance to the original research question. However, when there are high dimensional covariates in may be challenging to determine which subset yields a solution to Problem \eqref{eq:our_db} \textit{and} whose corresponding population may be modified without substantially altering the original research question. 
As described in Section \ref{sec:positivity_considerations}, the set $g(\bm{x})$ needs to be predictive of treatment assignment $Z$ in order to substantially relax the positivity assumption. As such, there is some level of predictive capacity, $\eta$, that the covariate functions in $g(\bm{x})$ need to jointly satisfy in order for Problem \eqref{eq:our_db} to yield a solution; thus, there are many possible $g(\bm{x})$ sets from which to select.
However, selecting the covariate functions that are individually the most predictive of treatment will tend to result in the smallest of such sets that have predictive capacity $\eta$. While selecting the smallest $g(\bm{x})$ set that yields a solution to Problem \eqref{eq:our_db} may not minimize the design-based component of estimator error, it tends to both 1) simplify the interpretation of the resulting modified population; and 2) be more computationally efficient to operationalize. Thus, in Section \ref{sec:implementation}, we proposed a design-based algorithm for identifying this set. 

Furthermore, both model- and design-based perspectives can be used together to select a $g(\bm{x})$ set that best matches the original research objective (e.g., $g(\bm{x})$ contains a mix of covariates that are either unlikely to be effect modifiers or highly predictive of $Z$). Therefore, while a lack of overlap forces researchers to confront trade-offs between estimator variance, statistical bias, and bias due population modification, we propose a variety of intuitive arguments and tools such that researchers can incorporate their subject-matter knowledge and research priorities when navigating these tradeoffs within our proposed optimization procedure.

\subsection{More details on the proposed model-based estimator}
\label{sec:extra_mb_estimator_supp}
In Section \ref{sec:implementation}, we propose a model-based estimator. To preserve downstream inference we use the following cross-fit style estimator,
\begin{align}
    \hat{\tau}^{CF}_{w^*} &= \frac{|\mathcal{I}_1|}{n}\hat{\tau}^{\mathcal{I}_1} + \frac{|\mathcal{I}_2|}{n}\hat{\tau}^{\mathcal{I}_2}, \:\:\: 
    \hat{\tau}^{\mathcal{I}_1} = \frac{1}{|\mathcal{I}_1|}\sum_{i \in \mathcal{I}_1} \{\hat{w}_i^{*\mathcal{I}_2}Z_iY_i - \hat{w}_i^{*\mathcal{I}_2}(1-Z_i)Y_i\}
\end{align}
where $\mathcal{I}_1$ and $\mathcal{I}_2$ each contain half of the data, randomly split, $\hat{w}_i^{*\mathcal{I}_2}$ are weights computed with $\mathcal{I}_2$ and Algorithm \ref{alg:adaptive} with a metric of treatment effect modification, and $\hat{\tau}^{\mathcal{I}_2}$ is defined in the same manner by swapping $\mathcal{I}_1$ and $\mathcal{I}_2$. We implement the absolute value of the one-step doubly robust estimator of treatment effect modification,
\begin{align*}
       TEM_j &= \left|\frac{1}{\sum_i X_{ij}^2} \sum_{i=1}^n X_{ij}\left\{(Y_i - \hat{\mu}_{Z_i}(\bm{X}_i))\frac{2Z_i -1}{Z_i\hat{e}(\bm{X}_i) + (1-Z_i)(1-\hat{e}(\bm{X}_i))} \right.\right.
       \\
       &\quad\quad\quad\quad\quad\quad\quad\quad+ \left.\left. \vphantom{\sum_{i=1}^n} \hat{\mu}_1(\bm{X}_i) - \hat{\mu}_0(\bm{X}_i)\right\}\right|,
\end{align*}
proposed by \cite{boileau_nonparametric_2025}, as this treatment effect modification measure resulted in estimators with the lowest bias across a majority of the simulation scenarios, compared to the \cite{hines_variable_2023} metric and the coefficients from a linear CATE model (supplementary Figure \ref{fig:mse_alg_adapt}). 
\begin{figure}
    \centering
    \includegraphics[width=0.9\linewidth]{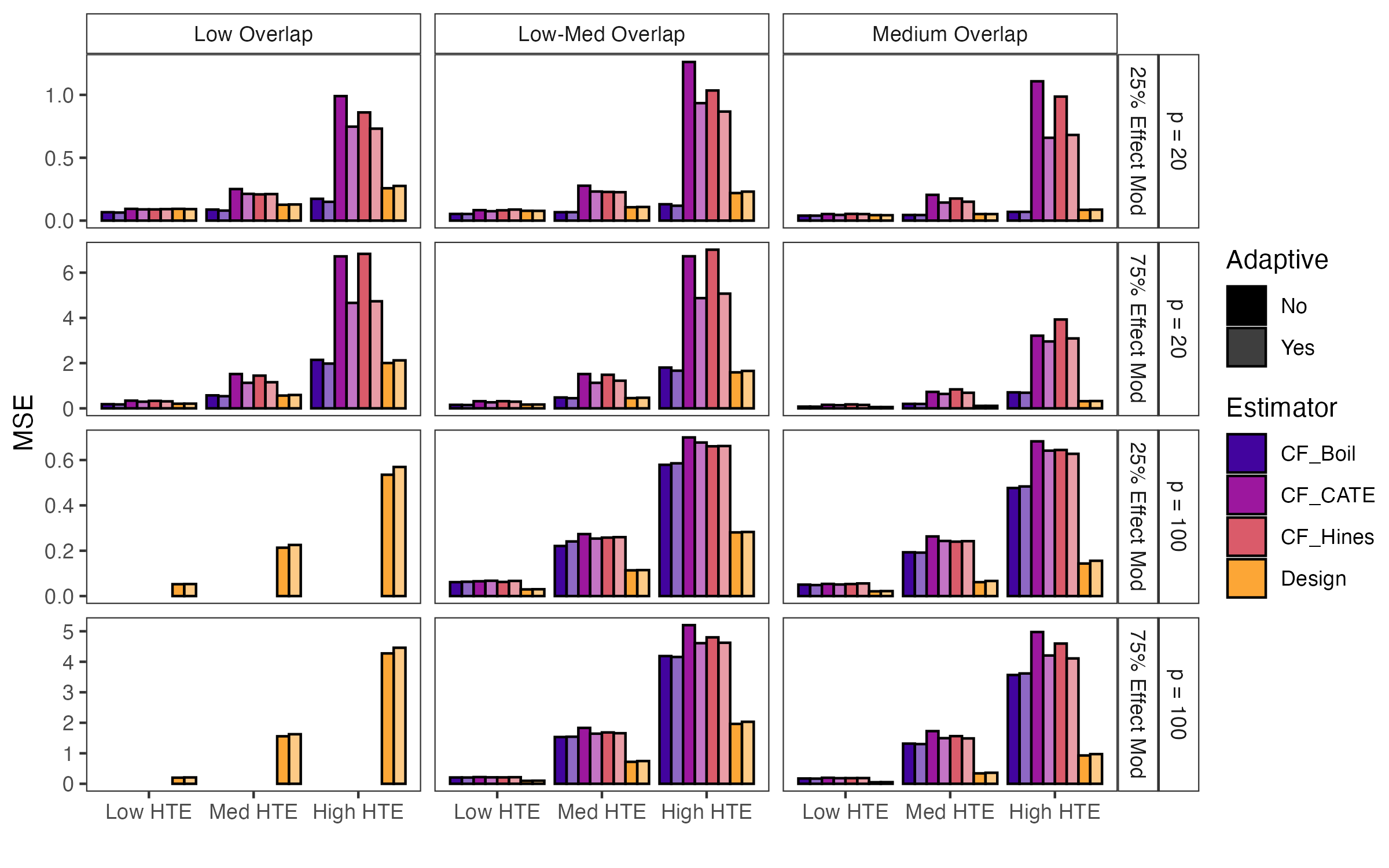}
    \caption{Comparison of estimator MSE with respect to the ATE for our proposed design-based estimator and model-based estimator (with the \cite{boileau_nonparametric_2025}, \cite{hines_variable_2023}, and linear CATE model coefficient metrics) for both supplementary Algorithm \ref{alg:simple} (not-adaptive) and Algorithm \ref{alg:adaptive} (adaptive). The scenarios are labeled on the top and right of the panels.}
    \label{fig:mse_alg_adapt}
\end{figure}
In this estimator, $\hat{e}(\bm{x}), \hat{\mu}_z(\bm{x})$ are propensity score and outcome model estimates.
Since propensity score estimates are likely to be extreme in scenarios with a lack of overlap we also explored $TEM^*_j = \text{abs}[\frac{1}{\sum_i X_{ij}^2} \sum_{i=1}^n X_{ij}\left\{\hat{\mu}_1(\bm{X}_i) - \hat{\mu}_0(\bm{X_i})\right\}]$ which only requires outcome model estimates. In simulations, estimators corresponding to $TEM_j$ tended to perform similarly or better to estimators corresponding to $TEM^*_j$ (supplementary Figure \ref{fig:mse_nops}) such that we recommend and implement $TEM_j$ in applications of our methods. 

\begin{figure}
    \centering
    \includegraphics[width=0.9\linewidth]{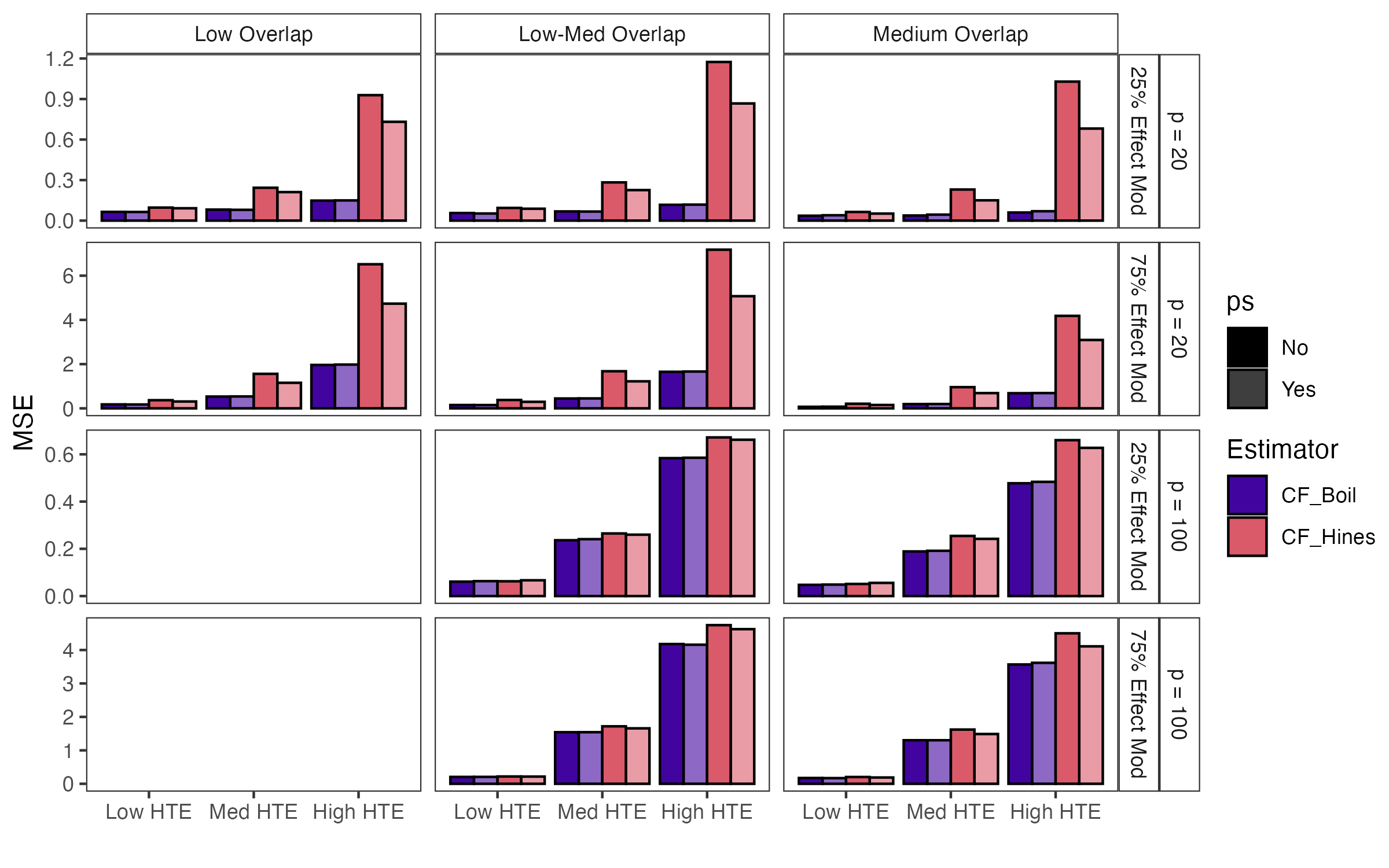}
    \caption{Comparison of estimator MSE for our model-based estimator with the \cite{boileau_nonparametric_2025} and \cite{hines_variable_2023} metrics calculated with and without estimated propensity scores. The scenarios are labeled on the top and right of the panels.}
    \label{fig:mse_nops}
\end{figure}

\section{Additional simulation methods}
\label{sec:sim_supp}

For the data generation propensity score model, $\alpha_j = 0 \pm a$ for $a = 0.5/\gamma, 1.33/\gamma, 2.16/\gamma, 3/\gamma$ for $j > 0$. Here, $a$ determines the strength of relationship between a given covariate and treatment assignment where there are an equal number of coefficients generated with each of $a = 0.5/\gamma, 1.33/\gamma, 2.16/\gamma, 3/\gamma$. As $\gamma$ increases, propensity scores become more extreme and overlap decreases. We set $\mu_0(\bm{X}) = 0.25X_1 + \cdots + 0.25X_p$ and $\mu_1(\bm{X}) = \sum_{i=0}^p \beta_i X_i$. We consider $\theta = 0.25, 0.75$ as the proportion of covariates are effect modifiers. We set $\beta_0 = 1$ and $\beta_j  = 0.25  \pm c$ for  $c = 0.75\delta,  0.6\delta, 0.45\delta, 0.3\delta, 0$. Here, $c$ determines the strength of relationship between a given covariate and $\tau(\bm{X})$; for $(1-\theta)p$ coefficients $\beta_j = 0$ and there are an approximately equal number of remaining coefficients set as each of $c = 0.75\delta,  0.6\delta, 0.45\delta, 0.3\delta$. Then, as $\delta$ increases, treatment effect heterogeneity increases for a given $\theta$. Therefore, we generate 20 different potential types of covariates for each scenario (i.e., all combinations of $\alpha_j$ and $\beta_j$ values). See supplementary Table \ref{tab:sim_params} for the $\gamma, \delta$ values used for each scenario.

\begin{table}[]
\caption{Overlap and treatment heterogeneity hyperparmeters used for data generation for each simulation scenario as described in Section \ref{sec:sim} and supplementary Section \ref{sec:sim_supp}.}
\centering
\begin{tabular}{l|llllll}
                    \hline  &      & $\bm{\gamma}$   &     &      & $\bm{\delta}$  &      \\
                      & Low  & Low-Med & Med & Low  & Medium & High  \\ \hline
20\% Treated, p = 20  & 0.75 & 1       & 2   & 0.50 & 1      & 2    \\
40\% Treated, p = 20  & 0.50 & 0.75    & 1   & 0.50 & 1      & 2    \\
20\% Treated, p = 100 & 2    & 4       & 5   & 0.25 & 0.75   & 1.25 \\
20\% Treated, p = 100 & 1    & 2       & 3   & 0.25 & 0.75   & 1.25 \\ \hline
\end{tabular}
\label{tab:sim_params}
\end{table}

\begin{algorithm}
	\caption{Identifying an approximately optimal $g(\bm{X})$ set given metric $m$} \label{alg:simple}
     \hspace*{\algorithmicindent}
	\begin{algorithmic}[1]
            \State \textbf{Inputs:} The matrix $\mathbf{X}_{n\times p}$ and set of covariate functions ordered according to some metric, $m$, $b_{(1)}(\bm{X}), \ldots, b_{(J)}(\bm{X})$
            \State \textbf{Initialize:} $b(\bm{X})^{-} = \emptyset $, $j = 0$
		\While {$\{w^*_i\}_{i=1}^n$ given $g(\bm{X}) = b(\bm{X})^{-}$ does not exist}
                \State Let $j = j + 1$
                \State Let $b(\bm{X})^{-} = \{b(\bm{X})^{-},  b_{(j)}(\bm{X}) \}$
			\State Solve Problem \eqref{eq:our_db} for $g(\bm{X}) = b(\bm{X})^{-}$ and $c(\bm{X}) = \{b_{(1)}(\bm{X}), \ldots, b_{(J)}(\bm{X})\} \setminus b(\bm{X})^{-}$
		\EndWhile
            \State \textbf{Outputs:} 
            \State \hspace{1.5em} $b(\bm{X})^{-}$ \Comment{Smallest set that yields a solution to Problem \eqref{eq:our_db} when $g(\bm{X}) = b(\bm{X})^{-}$} 
            \State \hspace{1.5em} $\{w^*_i\}_{i=1}^n$ \Comment{Resulting weights from \eqref{eq:our_db} with $g(\bm{X}) = b(\bm{X})^{-}$}
	\end{algorithmic} 
\end{algorithm}

\begin{figure}
    \centering
    \includegraphics[width=0.9\linewidth]{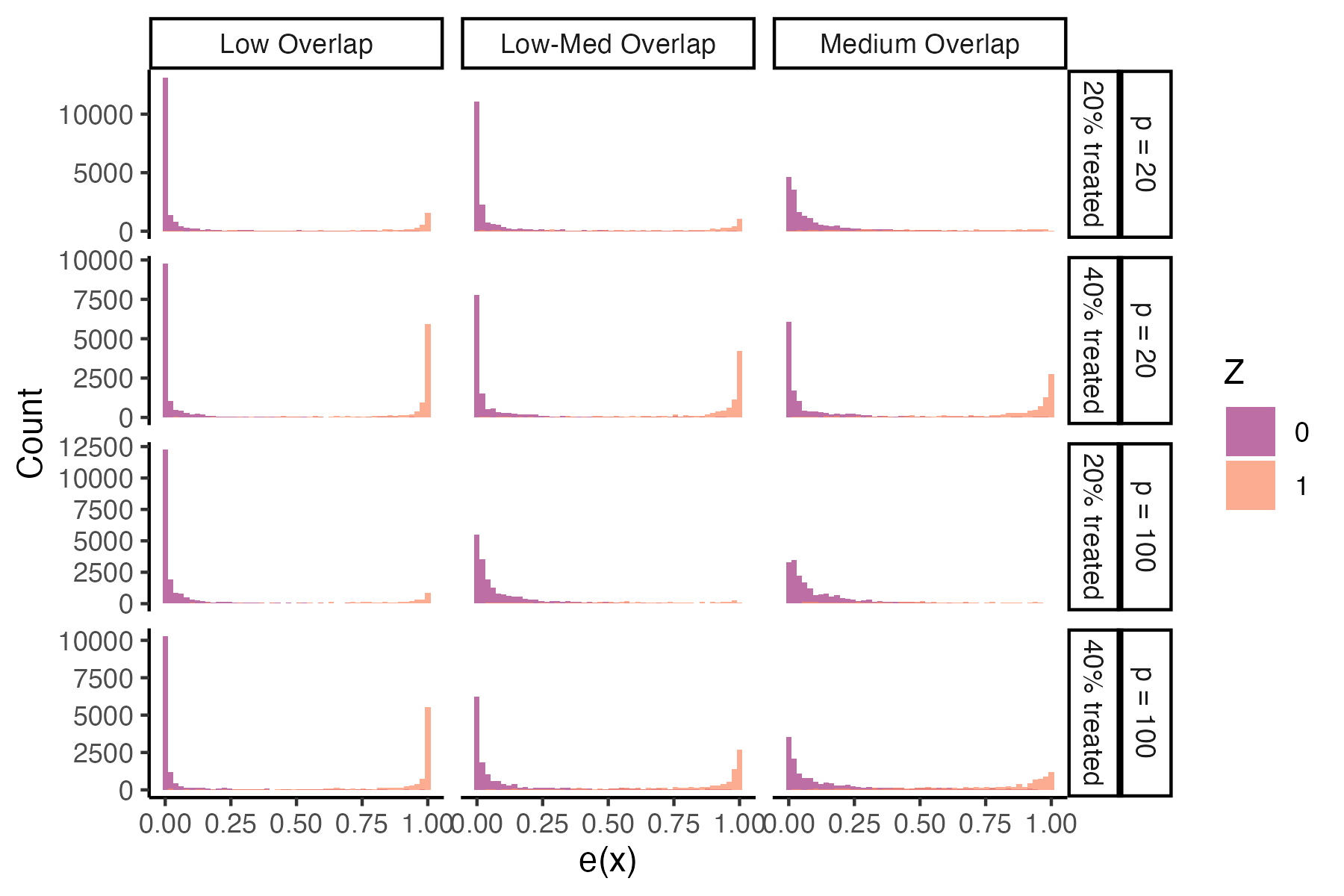}
    \caption{Propensity score distribution for the simulation scenarios described in Section \ref{sec:sim}. The scenarios are labeled on the top and right of the panels.}
    \label{fig:sim_ps_dist}
\end{figure}

\begin{figure}
    \centering
    \includegraphics[width=0.9\linewidth]{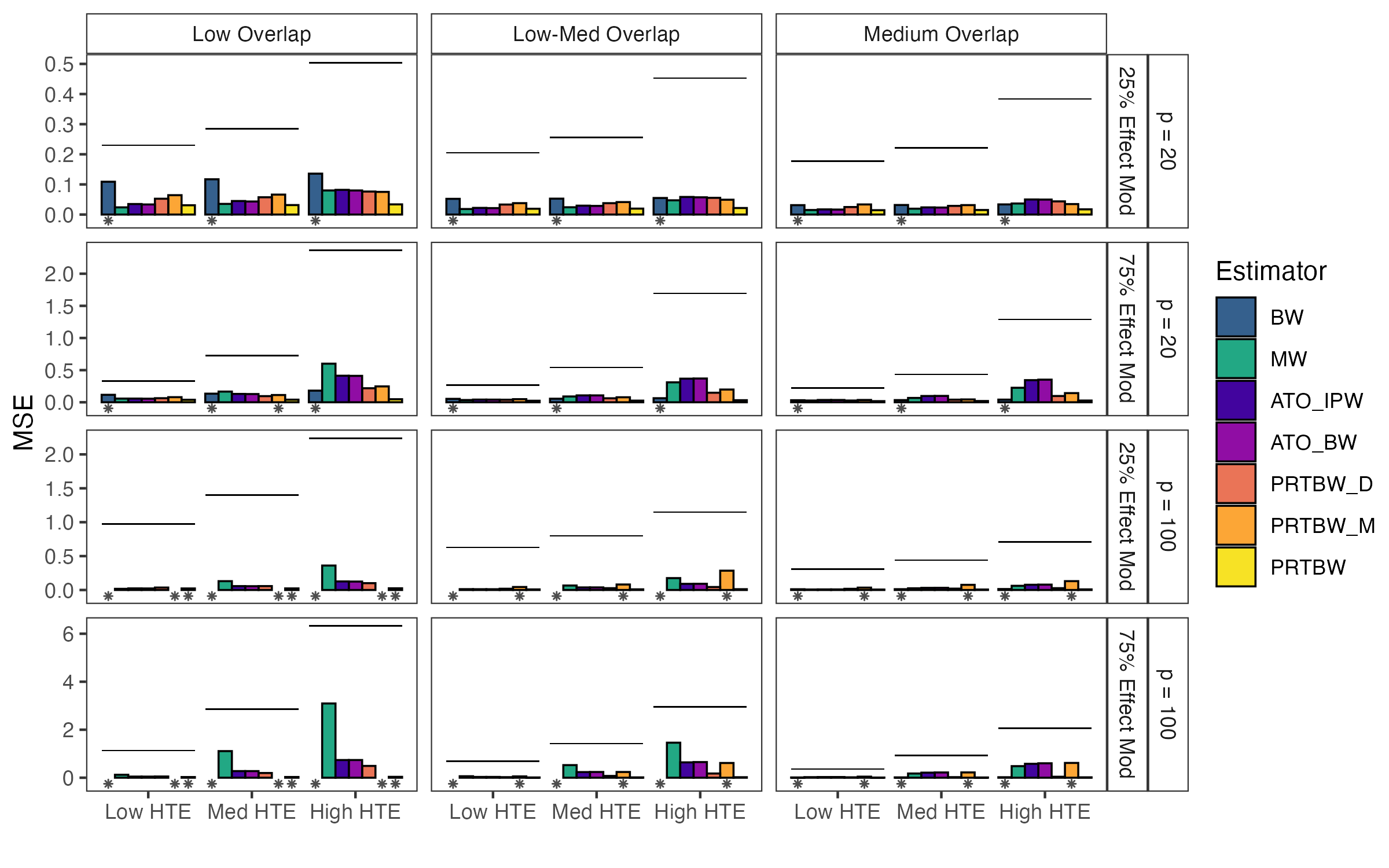}
    \caption{Estimator mean squared error (MSE) for all 40\% treated simulation scenarios. The line indicates the MSE for the IPW ATE estimator for each scenario. The asterisk, $*$, indicates estimators where a solution to the balancing weights optimization problem did not exist for all datasets. The horizontal axis indicates levels of treatment effect heterogeneity while the simulation scenarios are labeled on the top and right of the panels.}
    \label{fig:mse_sim_supp}
\end{figure}


\begin{figure}
    \centering
    \includegraphics[width=0.9\linewidth]{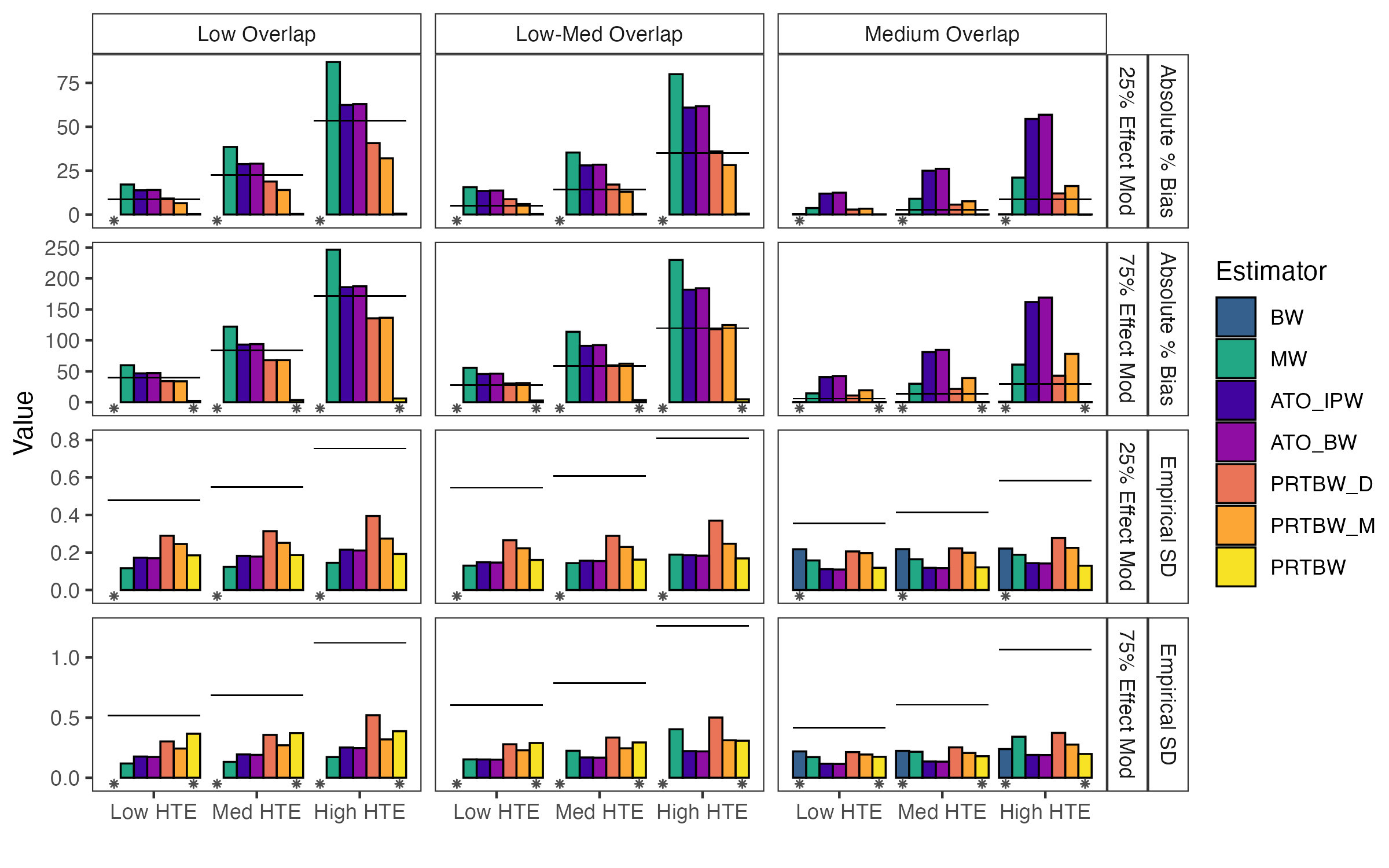}
    \caption{Estimator absolute percent bias with respect to the ATE  and empirical standard deviation (SD) for 20\% treated and $p=20$ scenarios. The line indicates the MSE for the IPW ATE estimator for each scenario. The asterisk, $*$, indicates estimators where a solution to the balancing weights optimization problem did not exist for all datasets. The horizontal axis indicates levels of treatment effect heterogeneity while the simulation scenarios are labeled on the top and right of the panels.}
    \label{fig:p20_sim}
\end{figure}

\begin{figure}
    \centering
    \includegraphics[width=0.9\linewidth]{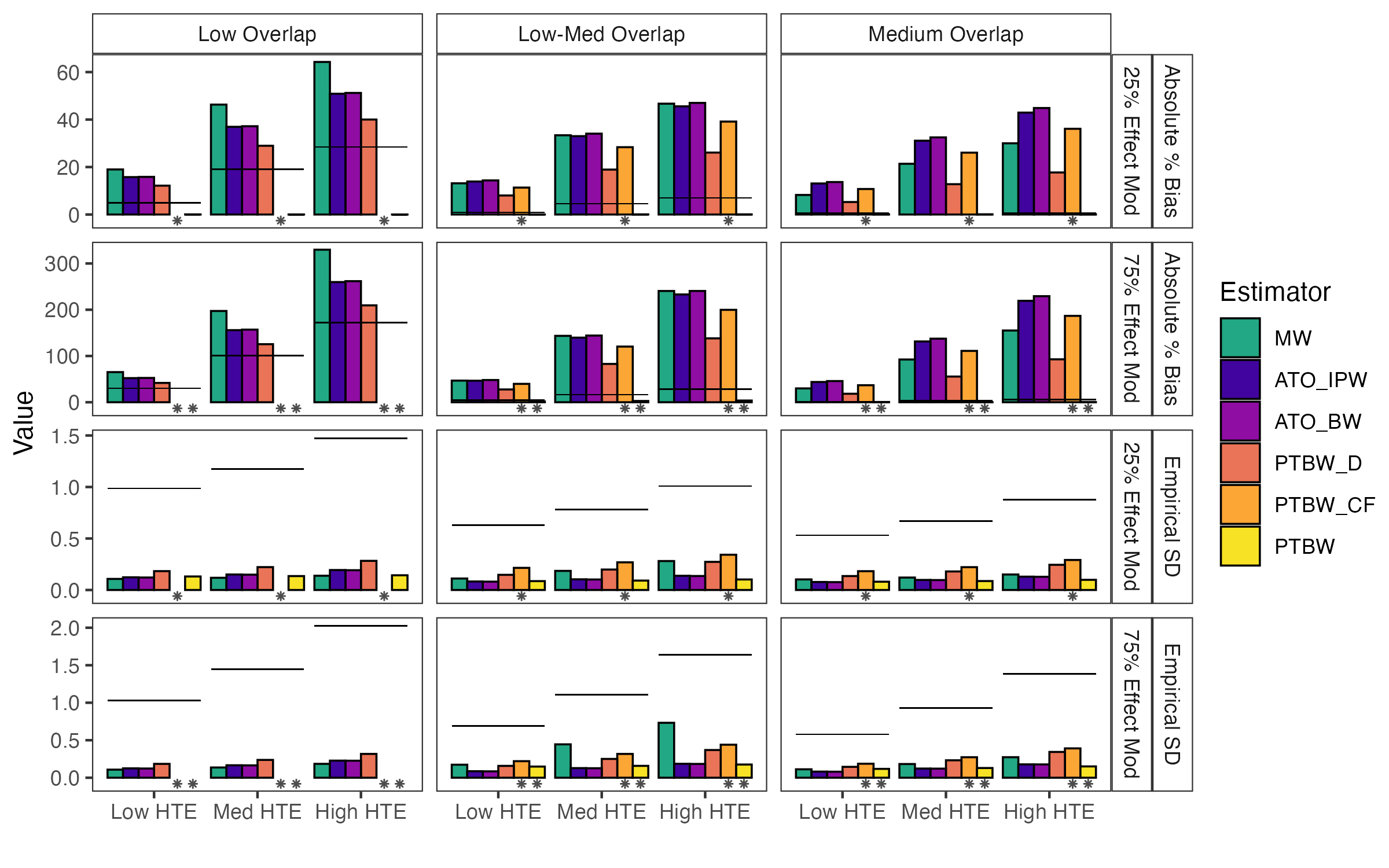}
    \caption{Estimator absolute percent bias with respect to the ATE  and empirical standard deviation (SD) for 20\% treated and $p=100$ scenarios. The line indicates the MSE for the IPW ATE estimator for each scenario. The asterisk, $*$, indicates estimators where a solution to the balancing weights optimization problem did not exist for all datasets. The horizontal axis indicates levels of treatment effect heterogeneity while the simulation scenarios are labeled on the top and right of the panels.}
    \label{fig:p100_sim}
\end{figure}

\begin{figure}
    \centering
    \includegraphics[width=0.9\linewidth]{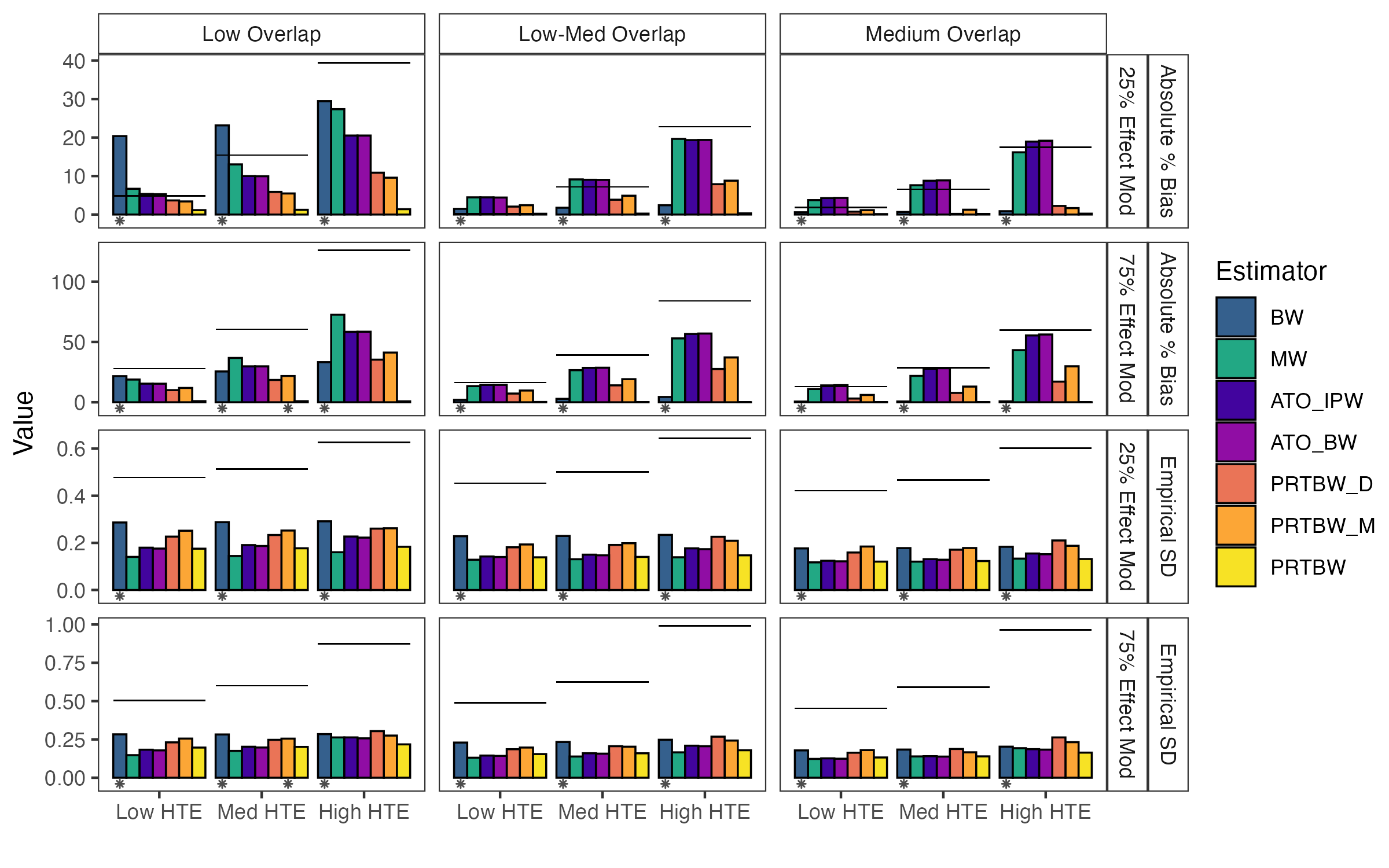}
    \caption{Estimator absolute percent bias with respect to the ATE  and empirical standard deviation (SD) for 40\% treated and $p=20$ scenarios. The line indicates the MSE for the IPW ATE estimator for each scenario. The asterisk, $*$, indicates estimators where a solution to the balancing weights optimization problem did not exist for all datasets. The horizontal axis indicates levels of treatment effect heterogeneity while the simulation scenarios are labeled on the top and right of the panels.}
    \label{fig:p20_sim_supp}
\end{figure}

\begin{figure}
    \centering
    \includegraphics[width=0.9\linewidth]{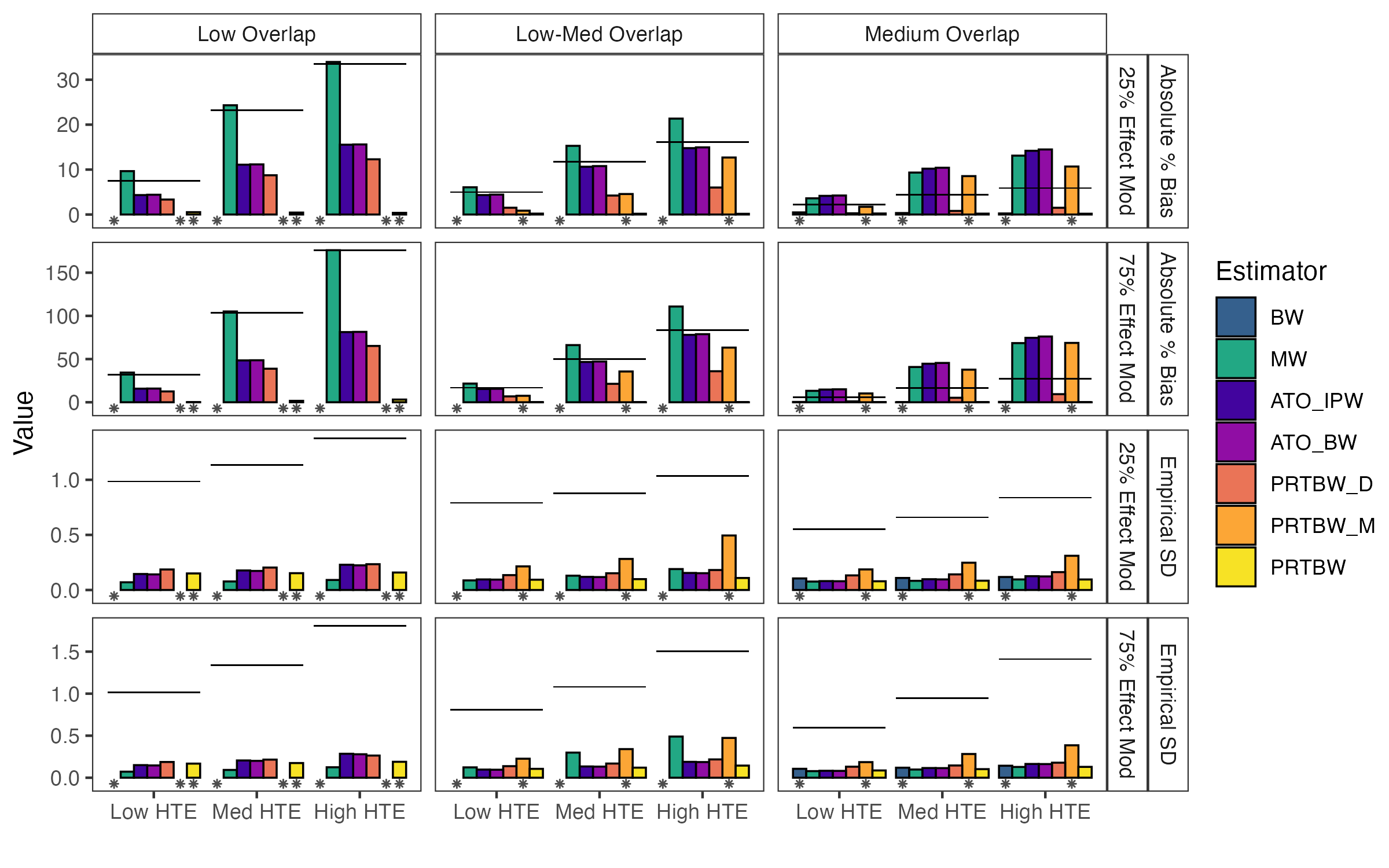}
    \caption{Estimator absolute percent bias with respect to the ATE  and empirical standard deviation (SD) for 40\% treated and $p=100$ scenarios. The line indicates the MSE for the IPW ATE estimator for each scenario. The asterisk, $*$, indicates estimators where a solution to the balancing weights optimization problem did not exist for all datasets. The horizontal axis indicates levels of treatment effect heterogeneity while the simulation scenarios are labeled on the top and right of the panels.}
    \label{fig:p100_sim_supp}
\end{figure}

\begin{figure}[h!]
    \centering
    \includegraphics[width=0.7\linewidth]{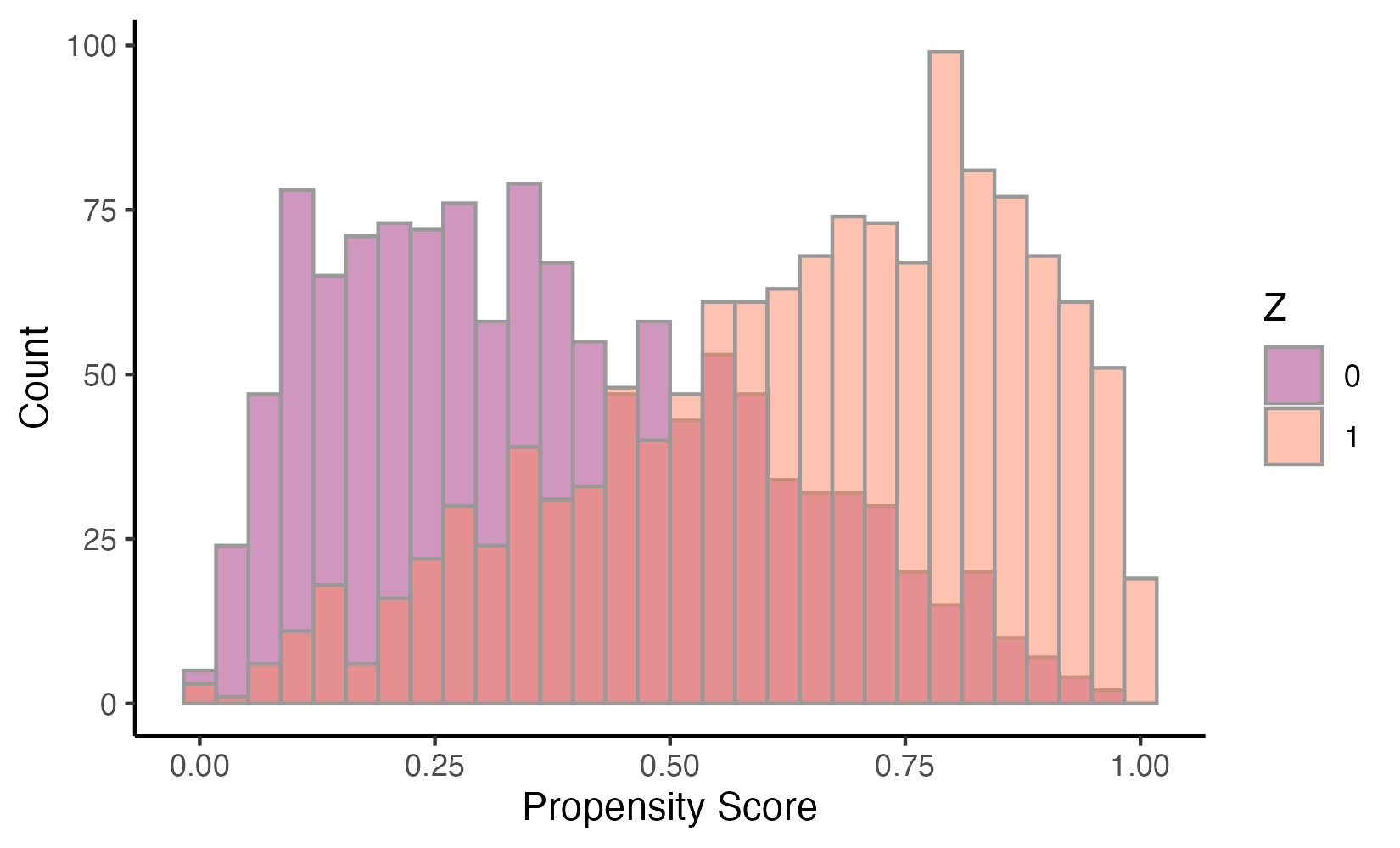}
    \caption{Distribution of estimated propensity scores for the study on indwelling arterial catheters (IAC) with data from the MIMIC-III database (Section \ref{sec:real_data_mimic})}
    \label{fig:MIMIC_ps}
\end{figure}

\begin{figure}[h!]
    \centering
    \includegraphics[width=1\linewidth]{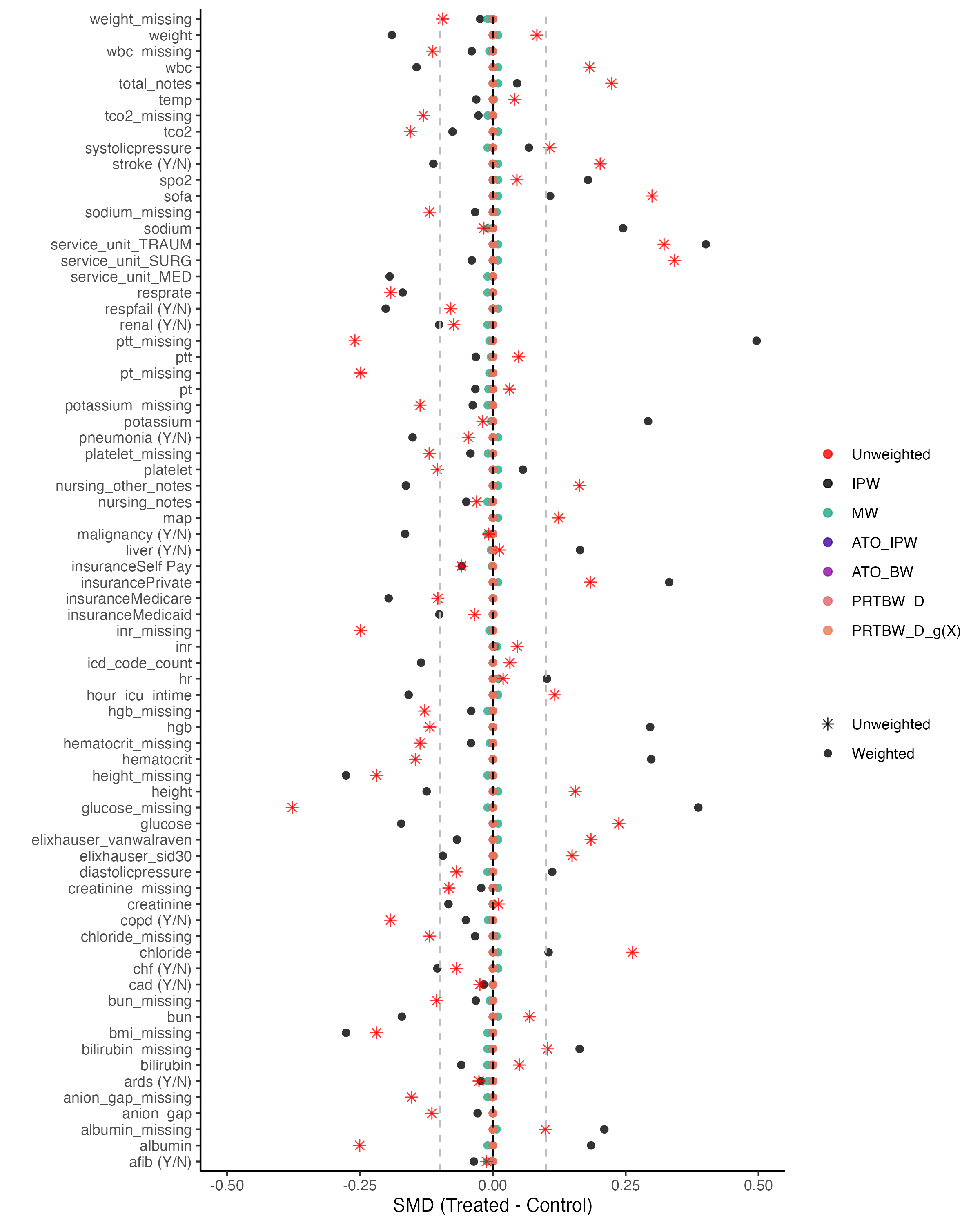}
    \caption{Treated and control group standardized mean differences (SMD) of all covariates for the study on indwelling arterial catheters (IAC) with data from the MIMIC-III database (Section \ref{sec:real_data_mimic}).}
    \label{fig:MIMIC_tc}
\end{figure}

\begin{figure}[h!]
    \centering
    \includegraphics[width=1\linewidth]{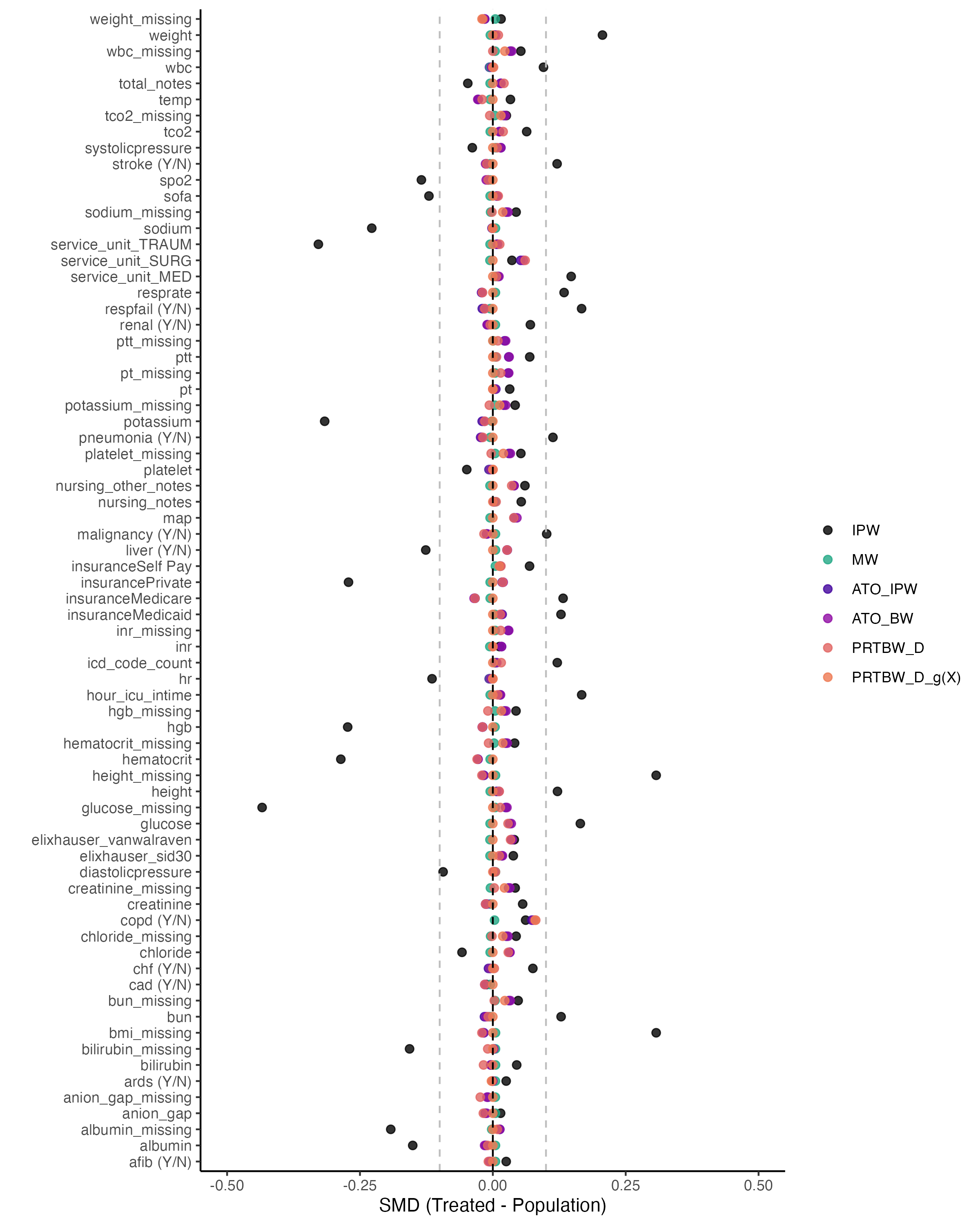}
    \caption{Treated and sample population standardized mean differences (SMD) of all covariates for the study on indwelling arterial catheters (IAC) with data from the MIMIC-III database (Section \ref{sec:real_data_mimic}).}
    \label{fig:MIMIC_tp}
\end{figure}

\begin{figure}[h!]
    \centering
    \includegraphics[width=1\linewidth]{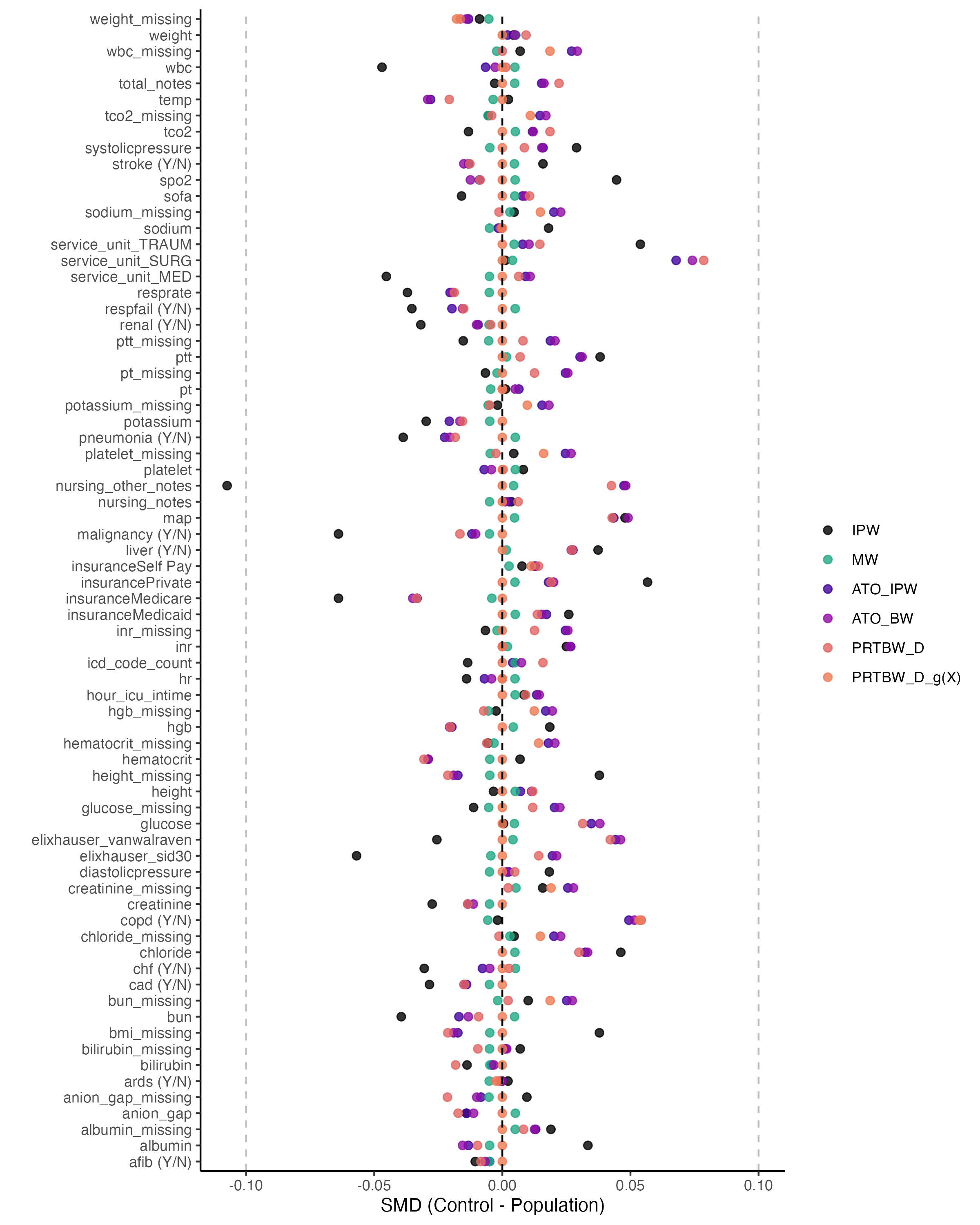}
    \caption{Control and sample population standardized mean differences (SMD) of all covariates for the study on indwelling arterial catheters (IAC) with data from the MIMIC-III database (Section \ref{sec:real_data_mimic}).}
    \label{fig:MIMIC_cp}
\end{figure}

\begin{figure}[h!]
    \centering
    \includegraphics[width=0.9\linewidth]{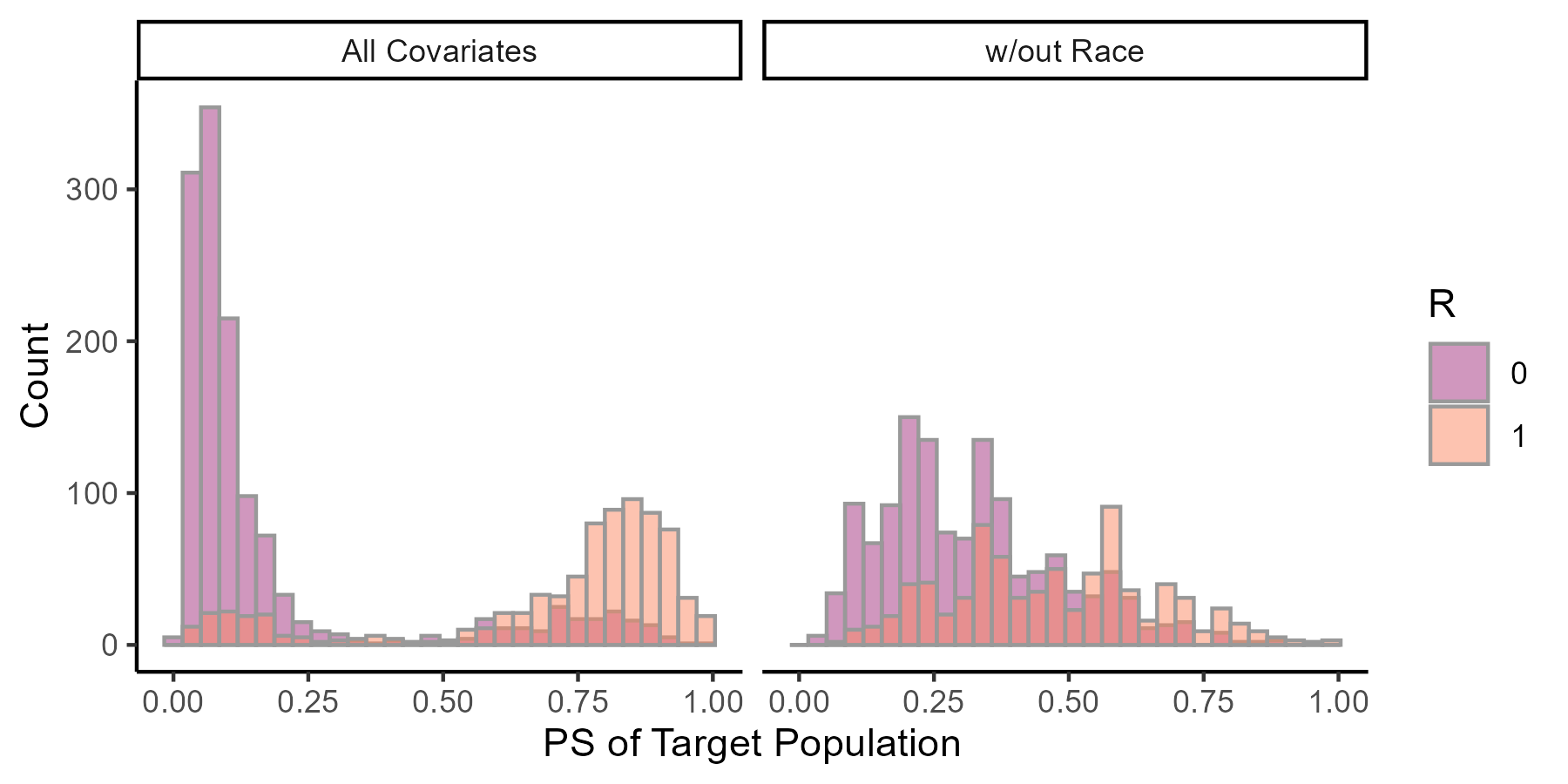}
    \caption{Distribution of the estimated probability of being in the target population when transporting the ``health care hotspotting'' RCT effect to a Midwestern academic health center population (Section \ref{sec:real_data_hotspot})}
    \label{fig:hotspot_ps}
\end{figure}

\begin{figure}[h!]
    \centering
    \includegraphics[width=1\linewidth]{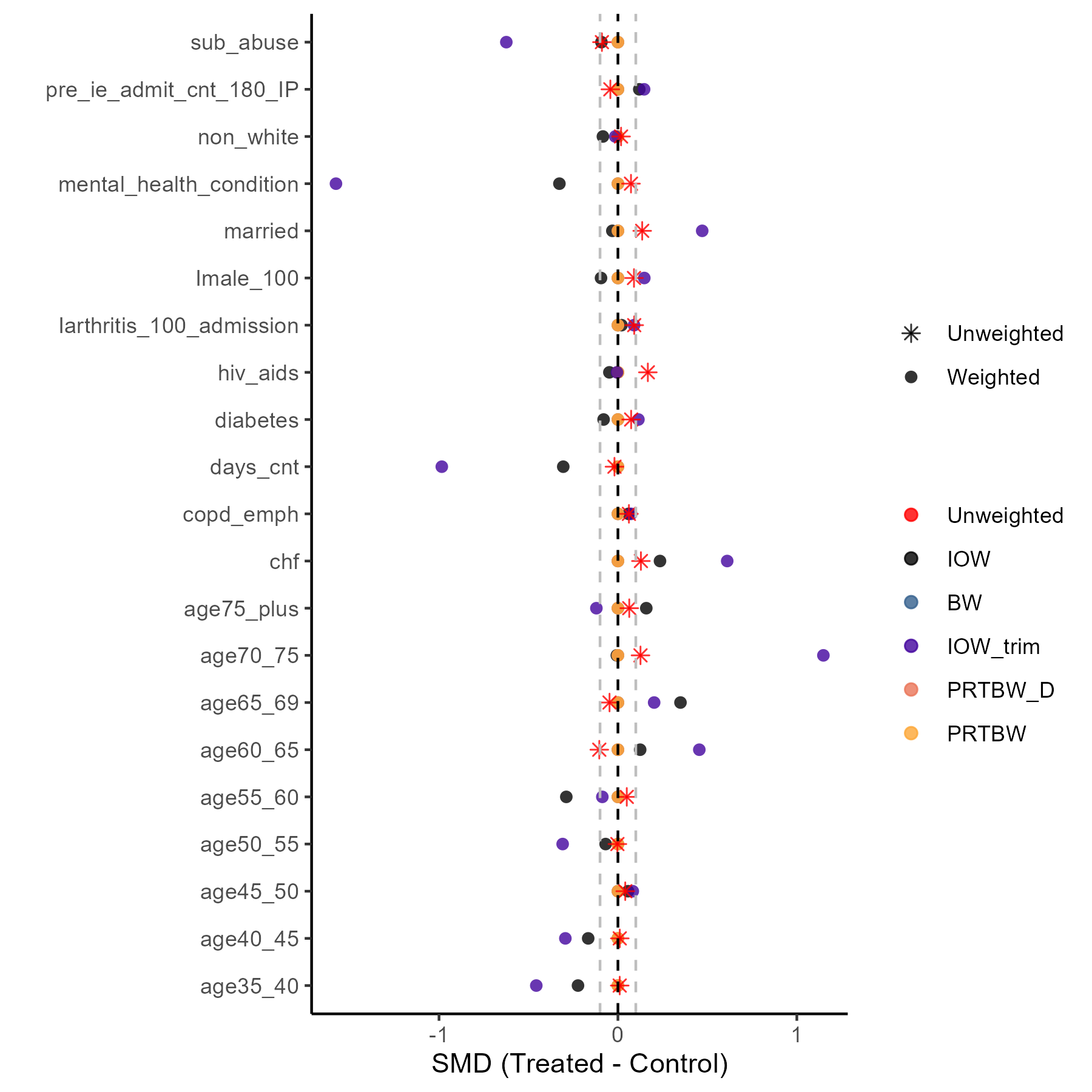}
    \caption{Treated and control group standardized mean differences (SMD) of all covariates when transporting the ``health care hotspotting'' RCT effect to a Midwestern academic health center population (Section \ref{sec:real_data_hotspot}).}
    \label{fig:hotspsot_tc}
\end{figure}

\begin{figure}[h!]
    \centering
    \includegraphics[width=1\linewidth]{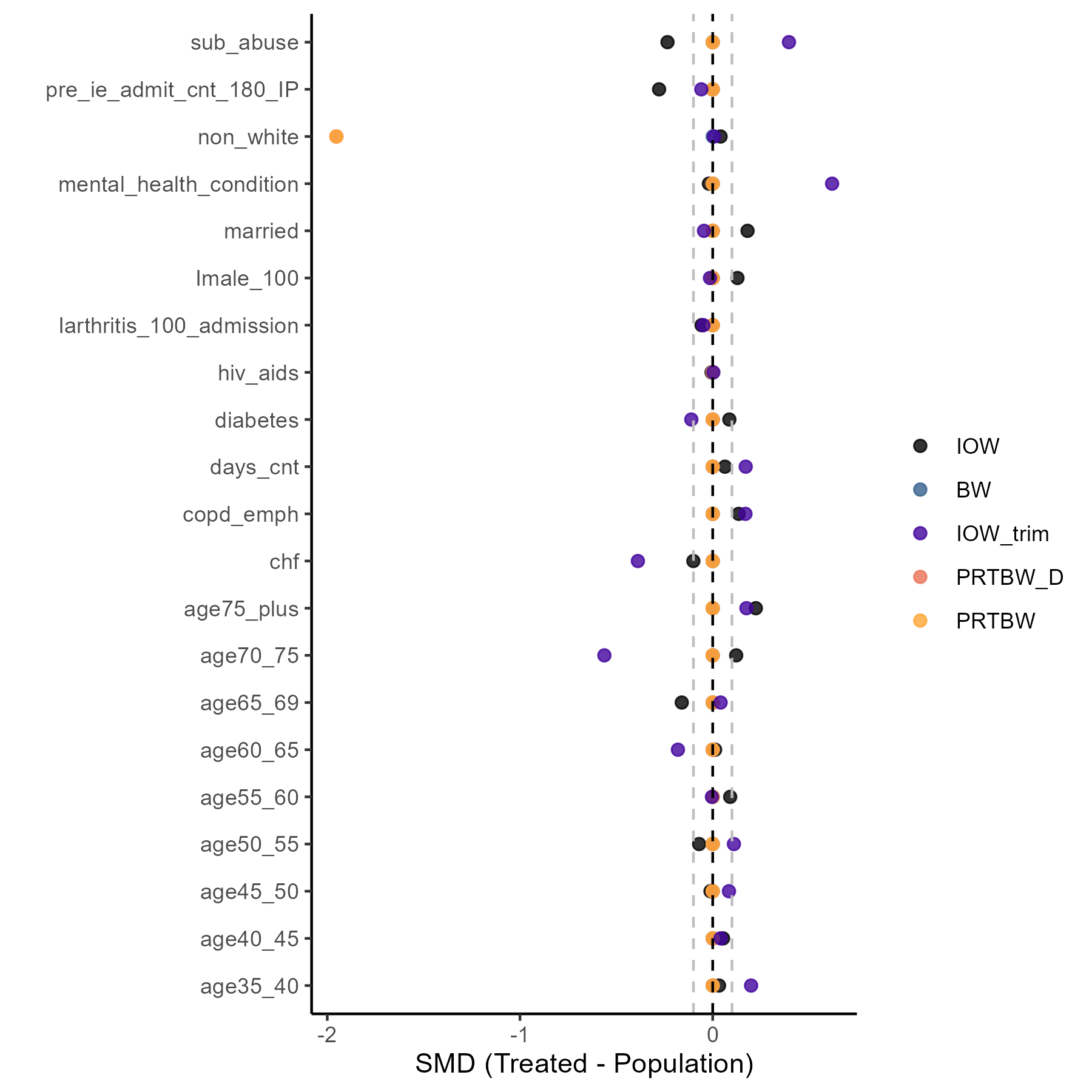}
    \caption{Treated and target population standardized mean differences (SMD) of all covariates when transporting the ``health care hotspotting'' RCT effect to a Midwestern academic health center population (Section \ref{sec:real_data_hotspot}).}
    \label{fig:hotspot_tp}
\end{figure}

\begin{figure}[h!]
    \centering
    \includegraphics[width=1\linewidth]{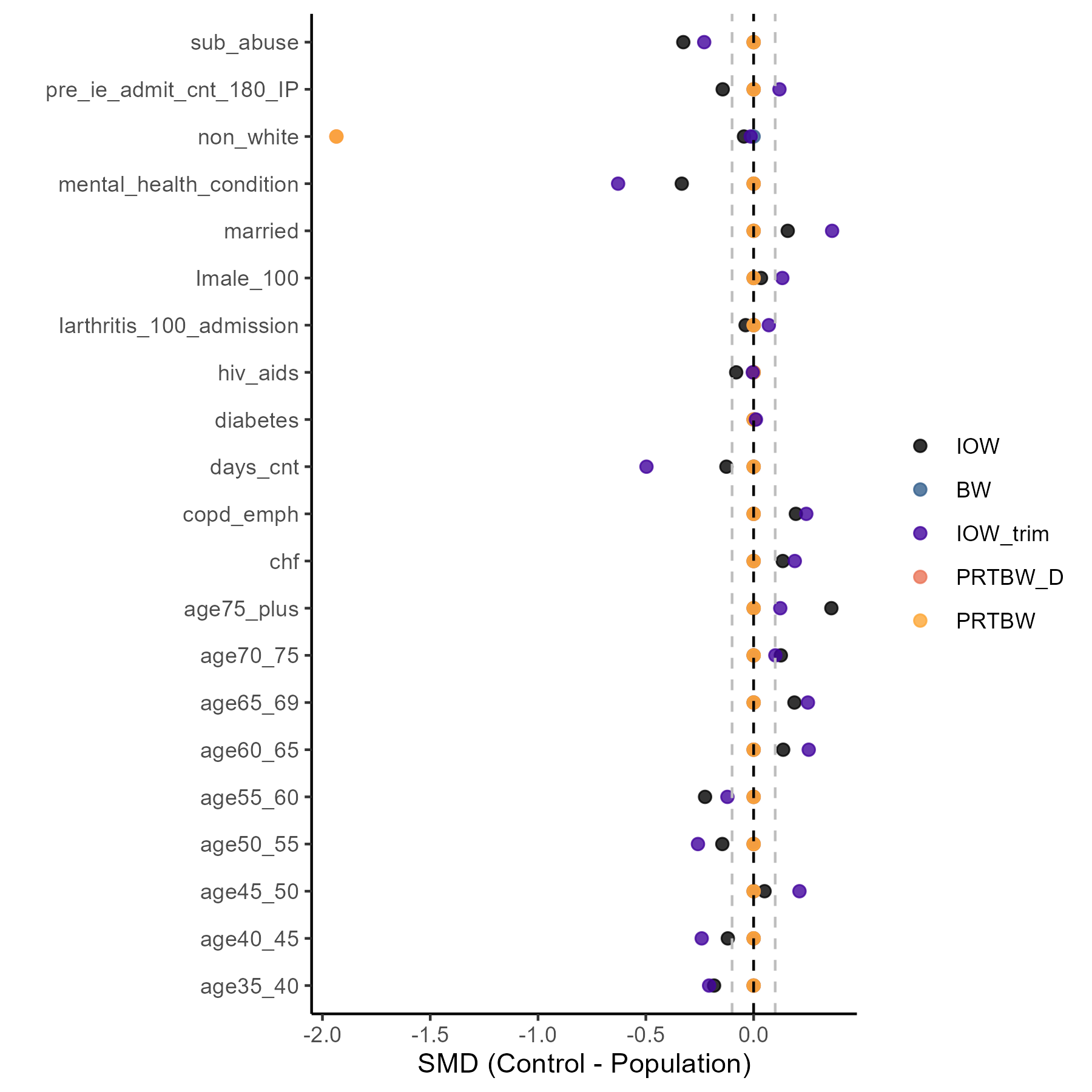}
    \caption{Control and target population standardized mean differences (SMD) of all covariates when transporting the ``health care hotspotting'' RCT effect to a Midwestern academic health center population (Section \ref{sec:real_data_hotspot}).}
    \label{fig:hotspot_cp}
\end{figure}

\clearpage

\putbib
\end{bibunit}

\end{document}